\definecolor{storeClusterComponent}{HTML}{FDAE61}
\definecolor{dbscan}{HTML}{ABDDA4}
\newcommand{\nop}[1]{}
\newtheorem{thm}{Theorem}[section]
\newtheorem{defn}{Definition}
\newtheorem{lem}{Lemma}
\newtheorem{pro}{Proposition}
\newtheorem{cor}{Corollary}
\newtheorem{examp}{Example}
\pgfplotsset{compat=1.11,
    /pgfplots/ybar legend/.style={
    /pgfplots/legend image code/.code={%
       \draw[##1,/tikz/.cd,yshift=-0.25em]
        (0cm,0cm) rectangle (0.8em,4pt);},
   },
}
\def\BibTeX{{\rm B\kern-.05em{\sc i\kern-.025em b}\kern-.08em
    T\kern-.1667em\lower.7ex\hbox{E}\kern-.125emX}}
\def\l{\boldsymbol\ell}
\def\g{\mbox{\bf g}}
\def\w{\mbox{\bf w}}
\def\-{\mbox{-}}
\definecolor{lightblue}{rgb}{0.68, 0.85, 0.9}
	\definecolor{lavenderrose}{rgb}{0.98, 0.63, 0.89}
	\definecolor{lavender(web)}{rgb}{0.9, 0.9, 0.98}
		\definecolor{junglegreen}{rgb}{0.16, 0.67, 0.53}
			\definecolor{grannysmithapple}{rgb}{0.66, 0.89, 0.63}
			\definecolor{lightcoral}{rgb}{0.94, 0.5, 0.5}
\definecolor{applegreen}{rgb}{0.55, 0.71, 0.0}
\def\G{\mathcal G}
\def\H{\mathcal H}
\def\X{\mathcal X}
\def\C{\mathcal C}
\def\S{\mathcal S}
\def\t{\bigtriangleup}
\def\Pr{\text{Pr}}
\begin{document}
\title{Nucleus Decomposition in Probabilistic Graphs: \\ Hardness and Algorithms}

%\title{Conference Paper Title*\\
%{\footnotesize \textsuperscript{*}Note: Sub-titles are not captured in Xplore and
%should not be used}
%\thanks{Identify applicable funding agency here. If none, delete this.}
%}
\author{Fatemeh Esfahani, Venkatesh Srinivasan, Alex Thomo, and Kui Wu
\IEEEcompsocitemizethanks{\IEEEcompsocthanksitem F. Esfahani,V. Srinivasan, A. Thomo and K. Wu are with the Department
of Computer Science, University of Victoria, Victoria,
B.C.\protect\\
E-mail: {esfahani,srinivas,thomo,wkui}@uvic.ca. 
}}

%\author{\IEEEauthorblockN{1\textsuperscript{st} Given Name Surname}
%\IEEEauthorblockA{\textit{dept. name of organization (of Aff.)} \\
%\textit{name of organization (of Aff.)}\\
%City, Country \\
%email address or ORCID}
%\and
%\IEEEauthorblockN{2\textsuperscript{nd} Given Name Surname}
%\IEEEauthorblockA{\textit{dept. name of organization (of Aff.)} \\
%\textit{name of organization (of Aff.)}\\
%City, Country \\
%email address or ORCID}
%\and
%\IEEEauthorblockN{3\textsuperscript{rd} Given Name Surname}
%\IEEEauthorblockA{\textit{dept. name of organization (of Aff.)} \\
%\textit{name of organization (of Aff.)}\\
%City, Country \\
%email address or ORCID}
%\and
%\IEEEauthorblockN{4\textsuperscript{th} Given Name Surname}
%\IEEEauthorblockA{\textit{dept. name of organization (of Aff.)} \\
%\textit{name of organization (of Aff.)}\\
%City, Country \\
%email address or ORCID}
%\and
%\IEEEauthorblockN{5\textsuperscript{th} Given Name Surname}
%\IEEEauthorblockA{\textit{dept. name of organization (of Aff.)} \\
%\textit{name of organization (of Aff.)}\\
%City, Country \\
%email address or ORCID}
%\and
%\IEEEauthorblockN{6\textsuperscript{th} Given Name Surname}
%\IEEEauthorblockA{\textit{dept. name of organization (of Aff.)} \\
%\textit{name of organization (of Aff.)}\\
%City, Country \\
%email address or ORCID}
%}

\maketitle

\begin{abstract}
\looseness=-1
Finding dense components in graphs is of great importance in analysing the structure of networks. 
Popular and computationally feasible frameworks for discovering dense subgraphs are core and truss decompositions. 
Recently, Sarıyüce~et~al. introduced nucleus decomposition, which uses $r$-cliques contained in $s$-cliques, where $s>r$, as the basis for defining dense subgraphs. Nucleus decomposition can reveal interesting subgraphs that can be missed by core and truss decompositions.

In this paper, we present {\em nucleus decomposition in probabilistic graphs}.
%, graphs whose edges are assigned a probability of existence. 
%We study the most interesting case of nucleus decomposition, $k$-$(3,4)$-nucleus, which asks for maximal subgraphs where each triangle is contained in $k$ $4$-cliques.   
The major questions we address are: How to define meaningfully nucleus decomposition in probabilistic graphs? How hard is computing nucleus decomposition in probabilistic graphs? Can we devise efficient algorithms for exact or approximate nucleus decomposition in large graphs?

We present three natural definitions of nucleus decomposition in probabilistic graphs: {\em local}, {\em global}, and {\em weakly-global}. 
\looseness=-1
We show that the local version is in PTIME, whereas global and weakly-global are \#P-hard and NP-hard, respectively. 
We present an efficient and exact dynamic programming approach for the local case. Further, we present statistical approximations that can scale to bigger datasets without much loss of accuracy.   
For global and weakly-global decompositions we complement our intractability results by proposing efficient algorithms that give approximate solutions based on search space pruning and Monte-Carlo sampling. 
Extensive experiments show the scalability and efficiency of our
algorithms. Compared to probabilistic core and truss decompositions, nucleus decomposition significantly outperforms in terms of density and clustering metrics.
\end{abstract}

\begin{IEEEkeywords}
Probabilistic Graphs, Dense Subgraphs, Nucleus Decomposition
\end{IEEEkeywords}

\section{Introduction}
Probabilistic graphs are graphs where each edge has a probability of existence (cf.~\cite{bonchi2014core,mukherjee2015mining,jin2011discovering,kempe2003maximizing,budak2011limiting,Tang2014,jin2011distance,zou2010discovering}). Many real-world graphs, such as social, trust, and biological networks are associated with intrinsic uncertainty. 
For instance, 
in social and trust networks, an edge can be weighted by the probability of influence or trust between two users that the edge connects~\cite{goyal2010learning,korovaiko2013trust,kuter2010using}.
In biological networks of protein-protein interactions (cf.~\cite{Genome}) an edge can be assigned a probability value representing the strength of prediction that a pair of proteins will interact in a living organism~\cite{dittrich2008identifying, dong2007understanding, sharan2007network}. 
%In these and other examples, probabilities stem from the inexact nature of prediction processes
%%e.g. measuring communication frequency or user similarity, in the case of social and trust networks,
%or laboratory experiments that are prone to measurement errors.
%or employing prediction algorithms based solely on features of the proteins, in the case of biological networks.

%\looseness=-1
Mining dense subgraphs and discovering hierarchical relations among them is a fundamental problem in graph analysis tasks. 
For instance, it can be used in 
visualizing complex networks \cite{zhao2012large}, 
finding correlated genes and motifs in biological networks \cite{zhang2005general,fratkin2006motifcut},
detecting communities in social and web graphs~\cite{fang2020survey,li2020finding}, summarizing text~\cite{antiqueira2009complex}, and 
revealing new research subjects in citation networks~\cite{sariyuce2018local}.
Core and truss decompositions are popular tools for finding dense subgraphs. 
A $k$-core is a maximal subgraph in which each vertex has at least $k$ neighbors, and a $k$-truss is a maximal subgraph whose edges are contained in at least $k$ triangles.
%\footnote{Or $(k-2)$ triangles depending on the definition.}. 
Core and truss decompositions have been extensively studied for deterministic as well as probabilistic graphs (cf.~\cite{bonchi2014core,esfahani2019efficient,huang2016truss,khaouid2015k,peng2018efficient,wang2012truss}). 
%seidman1983network,huang2014querying,cohen2008trusses,zhang2012extracting,chen2014distributed,montresor2012distributed,zhang2016engagement,cheng2011efficient,sariyuce2013streaming,wang2020efficient

A recent notion of dense subgraphs is {\em nucleus} introduced by Sarıyüce~et~al.~\cite{sariyuce2015finding,sariyuce2017nucleus}. 
Nucleus decomposition is a generalization of core and truss decompositions that uses higher-order structures to detect dense regions. 
It can reveal interesting subgraphs that can be missed by core and truss decompositions. 
In a nutshell, a $k$-$(r,s)$-nucleus is a maximal subgraph whose $r$-cliques are contained in at least $k$ of $s$-cliques, where $s>r$. 
For $r=1,s=2$ and $r=2,s=3$ we obtain the notions of $k$-core and $k$-truss, respectively.    
%$r$-cliques are single {\em nodes}, 
%$s$-cliques are {\em edges}, and
%$k$-$(1,2)$-nucleus is $k$-core. 
%When $r=2,s=3$,  
%$r$-cliques are {\em edges},
%$s$-cliques are {\em triangles}, and
%$k$-$(2,3)$-nucleus is $k$-truss.
For $r=3,s=4$, 
$r$-cliques are {\em triangles},
$s$-cliques are 4-{\em cliques}, and  
$k$-$(3,4)$-nucleus is strictly stronger than $k$-truss and $k$-core. 
Sarıyüce~et~al. in ~\cite{sariyuce2015finding,sariyuce2017nucleus} observed that, in practice, $k$-$(3,4)$-nucleus is the most interesting in terms of the quality of subgraphs produced for a large variety of graphs. As such, in this paper we also focus on this decomposition. 
%As shown in \cite{sariyuce2015finding,sariyuce2017nucleus,sariyuce2018local}, $(3,4)$-nucleus decomposition can be efficiently computed in deterministic graphs. 
%
%
%In this paper, we present new definitions for nucleus decomposition of probabilistic graphs, and design efficient algorithms for solving the problem. 
%In particular, we focus on $(3,4)$-nucleus decomposition over probabilistic graphs.
%
%However, 
To the best of our knowledge, \textit{nucleus decomposition over probabilistic graphs} has not been studied yet.

As pointed out by \cite{sariyuce2015finding,sariyuce2017nucleus}, nucleus decomposition can uncover a finer grained structure of dense groups not possible using other dense subgraph mining methods; as such, nucleus decomposition can be beneficial for a large variety of applications, e.g. 
community structure discovery~\cite{saxena2018social}, 
mining dense regions in internet of things~\cite{zhao2019effective}, 
financial fraud detection~\cite{zhang2017hidden},
extracting brain connectome subgraph hierarchy \cite{wu2020extracting}, 
detection of complexes in biological networks~\cite{ma2017detection}, etc.
%
%This problem can be useful in a variety of settings. 
All these applications of nucleus decomposition 
%towards network analysis, described in \cite{sariyuce2018local}, 
extend naturally to the probabilistic networks. 
Ignoring probabilities and using deterministic methods amounts to setting all probabilities to 1, which not only misses salient information, but could prove detrimental in applications such as 
finding cohesive subnetworks of proteins from probabilistic PPI networks which has valuable implications to disease diagnosis~\cite{dittrich2008identifying}. 
Last but not the least, computing probabilistic nucleus is highly beneficial for task driven team formation in probabilistic social networks, demonstrated later in our case study using a DBLP network.

%The fact that nucleus decomposition can be computed efficiently in deterministic graphs does not guarantee the same in probabilistic graphs. In such graphs, even simple problems may become hard (e.g. two terminal reachability, an easy problem in deterministic graphs, becomes \#P-complete in probabilistic graphs).

\subsection{Contributions}

We are the first to study nucleus decomposition in probabilistic graphs.
The major questions we address are: How to define meaningfully nucleus decomposition in probabilistic graphs? How hard is computing nucleus decomposition in probabilistic graphs? Can we devise efficient algorithms for exact or approximate nucleus decomposition in large graphs? 

\smallskip
\noindent
\textbf{Definitions.} 
We start by introducing three natural notions of probabilistic nucleus decomposition (Section~\ref{secpnuclei}). 
They are based on the concept of {\em possible worlds} (PW's), which are instantiations of a probabilistic graph obtained by flipping a biased coin for each edge independently, according to its probability.
We define {\em local}, {\em global}, and {\em weakly-global} notions of nucleus as a maximal probabilistic subgraph $\H$ satisfying different structural conditions for each case. 

In the local case, we require a good number of PW's of $\H$ to satisfy a high level of density around each triangle (in terms of 4-cliques containing it) in $\H$. This is local in nature because the triangles are considered independently of each other. 
To contrast this, we introduce the global notion, where we request the PW's themselves be deterministic nuclei. This way, not only do we achieve density around each triangle but also ensure the same is achieved for all the triangles of $\H$ simultaneously. 
Finally, we relax this strict requirement for the weakly-global case by requiring that PW's only contain a deterministic nucleus that includes the triangles of $\H$.
%These three notions satisfy the following strength order: local $\prec$ weakly-global $\prec$ global.

%In the local case, we require for each triangle in $\H$ to be present and contained in at least $k$ 4-cliques in a good number of PW's of $\H$. This notion has a local quality because a triangle can satisfy the condition without considering other triangles in $\H$.  To address this, in the global case, we require for each triangle in $\H$ to be present in a good number of PW's of $\H$ which by themselves are deterministic $k$-nuclei. This ensures all the triangles of $\H$ are together in these nuclei and as such it provides higher cohesiveness. Requesting full nucleusness for possible worlds can be too strict, so we provide a relaxation in the weakly-global case. Namely, we require for each triangle in $\H$ to be present and part of deterministic $k$-nuclei in a good number of PW's of $\H$.

\smallskip
\noindent
\textbf{Global and Weakly-Global Cases.}
We show that computing global and weakly-global decompositions are intractable, namely \#P-hard and NP-hard, resp. (Section~\ref{hardnessSection}). 
We complement these results with efficient algorithms for these two cases that give approximate solutions 
based on search space pruning combined with Monte-Carlo sampling (Section~\ref{globalnucleus}). 
%and bound their error based on Hoeffding's inequality. 
%More precisely, we estimate the probability of PW's satisfying the conditions of global and weakly-global definitions. We present algorithms that, given the probability estimates, compute the global and weakly global nucleus decompositions.

%For the global case, we show a reduction from a decision version of the reliability problem which we first show to be \#P-hard. For the weakly-global case, we show a reduction from the $k$-clique problem.  

\smallskip
\noindent
\textbf{Local Case.}
We show that local nucleus decomposition is in PTIME (Section~\ref{localnucleus}).
The main challenge is to compute the probability of each triangle to be contained in $k$ 4-cliques. We present a dynamic programming (DP) solution for this task, which combined with a triangle peeling approach, solves the problem of local nucleus decomposition efficiently.
While this is welcome result, we further propose statistical methods to speed-up the computation. Namely, we provide a framework where well-known distributions, such as Poisson, Normal, and Binomial, can be employed to approximate the DP results. 
We provide detailed conditions under which the approximations can be used reliably, otherwise DP is used as fallback. 
This hybrid approach speeds-up the computation significantly and is able to handle datasets, which DP alone cannot.

\smallskip
\noindent
\textbf{Experiments.}
We present extensive experiments which show that 
%on a wide range of real datasets to test our proposed methods (Section~\ref{experiments}). The main takeways are as follows. 
our DP method for local nucleus decomposition is efficient and can handle large datasets; 
when combined with our statistical approximations, the process is significantly sped-up and can handle much larger datasets. 
%Notably, the loss in accuracy due to the use of statistical approximations is negligible. 
%The demonstrated efficiency in computing nucleus decomposition makes it a valuable addition to the toolbox of dense subgraph discovery alongside core and truss decomposition while providing more cohesiveness than them. 
%
We demonstrate the importance of nucleus decomposition by comparing it to probabilistic core and truss decomposition using density and clustering metrics. 
The results show that nucleus decomposition significantly outperforms core and truss decompositions in terms of these metrics.

%We show that nucleus decomposition provides significantly  subgraphs than core and truss decompositions. For our global and weakly-global versions, we show that albeit approximate, they provide even stronger structural cohesiveness than the local version, while being efficient to compute.  

\section{Deterministic Nuclei}\label{background}

%In this section, we provide the definition of nucleus decomposition in deterministic graphs. Then, we introduce probabilistic graphs before extending the definition of nucleus decomposition to probabilistic context in the next section.

Let $G=(V,E)$ be an undirected graph, where $V$ is a set of vertices, and $E$ is a set of edges. 
For a vertex $v \in V$, let $N(v)$ be the set of $v$'s neighbors: $N(v) = \left \{ u : (u,v) \in E \right \}$. 
The (deterministic) degree of $v$ in $G$, is equal to $\left| N(v) \right|$.

\smallskip
\noindent
\textbf{Nucleus decomposition in deterministic graphs.} Nucleus decomposition is a generalization of core and truss decompositions~\cite{sariyuce2015finding,sariyuce2017nucleus}. 
Each nucleus is a subgraph which contains a dense cluster of cliques. The formal definitions are as follows.

Let $r,s$ with $r < s$ be positive integers.
We call cliques of size $r$, {\em $r$-cliques}, and denote them by $R, R'$, etc.
Analogously, we call cliques of size $s$, {\em $s$-cliques}, and denote them by $S, S'$, etc.

\begin{defn}
The {\em $s$-support} of an $r$-clique $R$ in $G$, denoted by $s$-$supp_G(R)$, is the number of $s$-cliques in $G$ that contain $R$.
\end{defn}

\begin{defn}
Two $r$-cliques $R$ and $R'$ in $G$, are {\em $s$-connected}, if there exists a sequence $R = R_1,R_2,\cdots,R_k=R'$ of $r$-cliques in $G$ such that for each $i$, there exists some $s$-clique in $G$ that contains $R_i \cup R_{i+1}$.
\end{defn}

%Now we give the definition of nucleus decomposition in deterministic graphs.

Now nucleus decomposition is as follows. 

\begin{defn} 
Let $k$ be a positive integer. 
A \pmb{$k$-$(r,s)$}\textbf{-nucleus} is a maximal subgraph $H$ of $G$ with the following properties.
\begin{enumerate}
    \item $H$ is a union of $s$-cliques: every edge in $H$ is part of an $s$-clique in $H$.
    \item $s$-$supp_H(R)\geq k$ for each $r$-clique $R$ in $H$.
    \item Each pair $R,R'$ of $r$-cliques in $H$ is $s$-connected in $H$.
\end{enumerate}
\end{defn}

%\begin{examp}
%Figure~\ref{fig:my_label} shows an example $k$-$(3,4)$-nucleus, where $r=3$, and $s=4$. The entire graph is $0$-$(3,4)$-nucleus (in blue). Two $1$-$(3,4)$-nuclei are marked in red. Their vertices are $\{ 1,2,3,5,6\}$ and $\{  1,3,4,5,6,7\}$. Note that these two subgraphs are not merged, since $s$-connectivity constraint is not satisfied. In particular, there is no $4$-clique which contains a triangle from each nuclei.
%\end{examp}

For simplicity, whenever clear from the context, we will drop the use of prefix $s$ from the definition of support and connectedness. 

\looseness=-1
When $r=1,s=2$, 
$r$-cliques are nodes, $s$-cliques are edges, and $k$-$(1,2)$-nucleus is the well-known notion of $k$-core. 
When $r=2,s=3$,  
$r$-cliques are edges,
$s$-cliques are triangles, and  $k$-$(2,3)$-nucleus is the well-known notion of $k$-truss.
\cite{sariyuce2015finding} shows that $k$-$(3,4)$-nucleus, where  we consider triangles contained in 4-cliques, provides much more interesting insights compared to $k$-core and $k$-truss in terms of density and hierarchical structure. As such, in this paper, we also focus on the $r=3,s=4$ case. 
%In addition, nucleus decomposition for larger $r,s>4$ is costly since enumerating $r$-cliques and finding the number of $s$-cliques which contain them is impractical for large graphs. 
For simplicity, we will drop using $r$ and $s$ and assume them to be 3 and 4, respectively. In particular, we will refer to  
$k$-$(3,4)$-nucleus as simply $k$-nucleus.

\section{Probabilistic Nuclei}\label{secpnuclei}

\noindent
\looseness=-1
\textbf{Probabilistic Graphs.}
A probabilistic graph is a triple $\G=(V,E,p)$, where $V$ and $E$ are as before and $p : E \rightarrow (0,1]$ is a function that maps each edge $e \in E$ to its existence probability $p_e$.
In the most common probabilistic model (cf. \cite{bonchi2014core,jin2011discovering,kempe2003maximizing}), the existence probability of each edge is assumed to be independent of other edges. 

In order to analyze probabilistic graphs, we use the concept of \textit{possible worlds} that are deterministic graph instances of $\G$ in which only a subset of edges appears. 
Conceptually, the possible worlds are obtained by flipping a biased coin for each edge independently, according to its probability.
We write $G \sqsubseteq  \G$ to say that $G$ is possible world for $\G$.
The probability of a possible world $G=(V,E_G) \sqsubseteq  \G$ is as follows:
$\Pr[G\mid\G] = \prod_{e \in E_G} p_e \prod_{e \in E\setminus E_G}(1-p_e).$

We will use $\G$, $\G'$, $\H$, $\H'$ to denote probabilistic graphs.

\smallskip
\noindent
\textbf{Nucleus decomposition in probabilistic graphs.}
We now define three variants of nucleus decomposition in probabilistic graphs which are based on Definitions \ref{lwg} and \ref{lwgnucleus} we give below. 
These variants relate to the nature of nucleus and we refer to them as {\bf local~($\boldsymbol\ell$)}, {\bf global~(g)}, and {\bf weakly-global~(w)}.     

\begin{defn}\label{lwg}
Let $\mathcal H$ be a probabilistic graph, 
$\t$ a triangle, and 
$\mu$ a mode in set $\{\l, \g, \w\}$. 
Then, $X_{\H,\t,\mu}$ is a random variable that takes integer values $k$ with tail probability
\begin{equation}\label{eq1}
     \Pr(X_{\H,\t,\mu} \geq k) = \sum_{H \sqsubseteq  \H} \Pr[H \mid \H] \cdot \mathbbm{1}_{\mu}(H,\t,k),
\end{equation}
where indicator variable $\mathbbm{1}_{\mu}(H,\t,k)$
is defined depending on mode $\mu$ as follows. 

\begin{description}
    \item [$\mathbbm{1}_{\mbox{$\boldsymbol\ell$}}(H,\t,k)=1$] if $\t$ is in $H$, and the support of $\t$ in $H$ is at least~$k$. 

    \item [$\mathbbm{1}_{\mbox{\bf g}}(H,\t,k)=1$] if $\t$ is in $H$, and $H$ is a deterministic $k$-nucleus.     
    
    \item [$\mathbbm{1}_{\mbox{\bf w}}(H,\t,k)=1$] if $\t$ is in $H$, and there is a subgraph $H'$ of $H$ that contains $\t$ and is a deterministic $k$-nucleus.

\end{description}
\end{defn}

It is clear that
$(\mathbbm{1}_{\mbox{\bf g}}(H,\t,k)=1)$ 
$\implies$
$(\mathbbm{1}_{\mbox{\bf w}}(H,\t,k)=1)$
$\implies$
$(\mathbbm{1}_{\mbox{$\boldsymbol\ell$}}(H,\t,k)=1)$.

\looseness=-1
In the above definition, $\mathbbm{1}_{\l}(H,\t,k)$ has a local quality because a possible world $G$ satisfies its condition if it provides sufficient support to triangle $\t$ without considering other triangles in $H$.
On the other hand,  $\mathbbm{1}_{\g}(H,\t,k)$ and $\mathbbm{1}_{\w}(H,\t,k)$ have a global quality because 
a possible world $H$ satisfies their conditions only when other triangles in $H$ are considered as well (creating a nucleus together).

%Observe that if $G$ satisfies the condition for a mode, it satisfies the conditions for lower modes as well. For instance, if $G$ satisfies the condition for $\mu=\mbox{\bf global}$, it will satisfy the conditions for $\mu=\mbox{\bf weakly-global}$ and $\mu=\mbox{\bf local}$.

%In the rest of the paper we will abbreviate 
%{\bf local} by {$\boldsymbol\ell$},
%{\bf weakly-global} by {\bf w}, and
%{\bf global} by {\bf g}.

In the following, as {\em preconditions} for cohesiveness, we will assume {\em cliqueness} and {\em connectedness} for the nuclei subgraphs we define. 
Specifically, we will only consider subgraphs $\H$, which, ignoring edge probabilities, are unions of 4-cliques, and where each pair of triangles in $\H$ is connected in $\H$. 
%Unless mentioned otherwise, all subgraphs $\mathcal H$ that we consider in this paper have these properties. 

%Now intuitively, from a probabilistic perspective, a subgraph $\mathcal H$ of $\mathcal G$ can be regarded as a cohesive subgraph of $\mathcal G$ if the support of every triangle in $\H$ is no less than a threshold $k$ with high probability (no less than a threshold $\theta$). Based on this intuition we define the notion of {\em local $(k,\theta)$-nucleus} as follows.

%In the above definition, $\mathbbm{1}_{\mu}(G,\t,h)=1$ for $\mu=\mbox{\bf local}$ has a local quality because a possible world $G$ satisfies its condition if it only contains the triangle in $h$ 4-cliques without considering other triangles in $G$. On the other hand,  $\mathbbm{1}_{\mu}(G,\t,h)=1$ for $\mu=\mbox{\bf weakly-global}$ and $\mu=\mbox{\bf global}$ have a global quality because a possible world $G$ satisfies their conditions only when other triangles in $G$ are considered as well (creating a nucleus together) giving them the global quality. Observe that if $G$ satisfies the condition for a mode, it satisfies the conditions for lower modes as well. For instance, if $G$ satisfies the condition for $\mu=\mbox{\bf global}$, it will satisfy the conditions for $\mu=\mbox{\bf weakly-global}$ and $\mu=\mbox{\bf local}$.
%In the rest of the paper we will abbreviate {\bf local} by {$\boldsymbol\ell$}, {\bf weakly-global} by {\bf w}, and {\bf global} by {\bf g}.

\begin{defn}\label{lwgnucleus}
Let $\G=(V,E,p)$ be a probabilistic graph. Given threshold $\theta \in [0,1]$, integer $k\geq 0$, and $\mu \in \{\l, \g, \w\}$, 
a $\mu\mbox{-}(k,\theta)$-nucleus $\H$ is a maximal subgraph of $\G$, such that  
$\Pr(X_{\H,\t,\mu} \geq k) \geq \theta$
for each triangle $\t$ in $\H$.

Moreover, the {\em $\mu\mbox{-}(k,\theta)$- nucleusness} (or simply nucleusness when $\mu$, $k$, and $\theta$ are clear from context) of a triangle $\t$ is the largest value of $k$ such that $\t$ is contained in a $\mu\mbox{-}(k,\theta)$-nucleus.

%For $k=0$, $\mu\mbox{-}(k,\theta)$-nucleus is $\G$ by definition.
%with the following properties.

%\begin{enumerate}
%    \item  $\mathcal H$ is a union of $4$-cliques, ignoring edge probabilities: every edge in $\mathcal H$ is part of a $4$-clique in $\mathcal H$.
%    \item $\Pr(X_{\H,\t,\mu} \geq k) \geq \theta$
%for each triangle $\t$ in $\H$.
%    \item Each pair $\t_1,\t_2$ of triangles in $\H$ is 4-connected, ignoring edge probabilities.
%\end{enumerate}
\end{defn}

%\pmb{}

Intuitively for $\mu=\l$, from a probabilistic perspective, a subgraph $\mathcal H$ of $\mathcal G$ can be regarded as a cohesive
subgraph of $\mathcal G$ if the support of every triangle in $\H$ is no less than $k$ with high probability (no less than a threshold $\theta$).
We call this version local nucleus. 

Local nucleus is a nice concept for probabilistic subgraph cohesiveness, however, it has the following shortcoming.  
While it ensures that 
every triangle $\t$ in $\mathcal H$ has support at least $k$ in a good number of instantiations of $\mathcal H$, it does not ensure 
those instantiations are deterministic nuclei themselves or
they contain some nucleus which in turn contains $\t$. Obviously, nucleusness is a desirable property to ask for in order to achieve a higher degree of cohesiveness and this leads to the other two versions of probabilistic nucleus of a global nature, which we call global and weakly-global (obtained for $\mu=\g$ and $\mu=\w$).

In general, $\g$-$(k,\theta)$-nuclei are smaller and more cohesive than $\w$-$(k,\theta)$-nuclei. 
We remark that, every $\g$-$(k,\theta)$-nucleus is contained in a $\w$-$(k,\theta)$-nucleus which in turn is contained in an  $\l$-$(k,\theta)$-nucleus.

\begin{figure}
    \centering
   \includegraphics[scale=0.265]{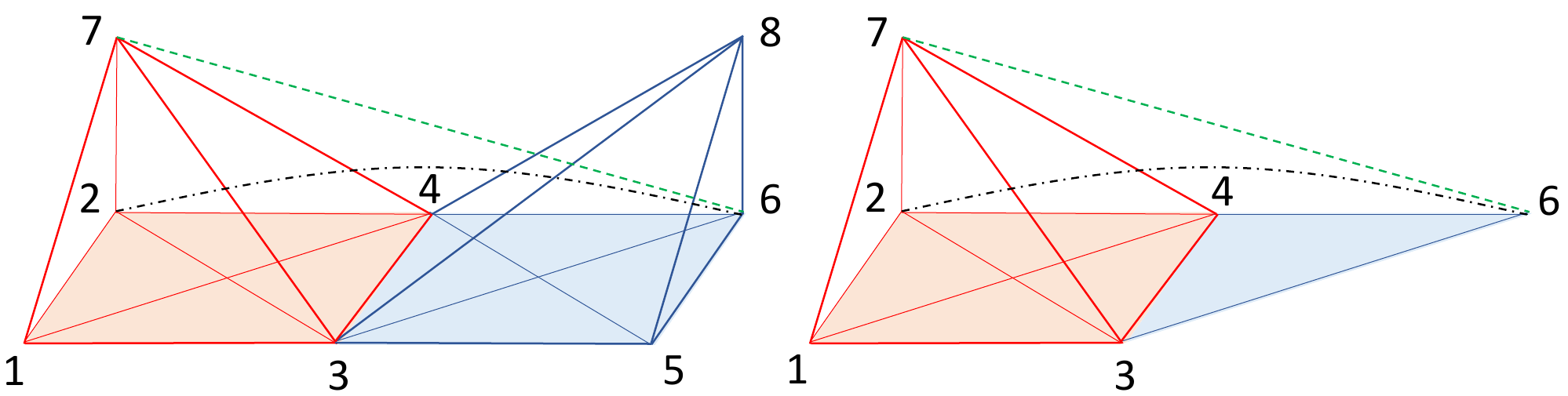}
    \caption{[Left] Probabilistic graph $\G$. Red edges have probability $P=0.9$, blue edges have probability $P'=0.8$, the green dashed edge has probability $1$, and 
the black dot-dashed edge has probability $P''=0.5$. [Right] Subgraph $\H$ which is $\w$-$(2,0.13)$ nucleus. $\mathcal{H}_1$ and $\mathcal{H}_2$ induced by $\{ 1,2,3,4,7\}$ and $\{ 2,3,4,6,7\}$ are $\g$-$(2,0.13)$ nuclei.}
    \label{fig:my_label}
    \vspace{-0.7cm}
\end{figure}

\begin{examp}
\looseness=-1
Consider graph $\G$ shown in Figure~\ref{fig:my_label} [Left].
Let us assume that 
the red edges have probability $P=0.9$, the blue edges have probability $P'=0.8$, the green dashed edge has probability $1$, and 
the black dot-dashed edge has probability $P''=0.5$. 
Let $\theta=0.13$.
It can be verified that each triangle $\bigtriangleup$ in $\mathcal{G}$ is contained in at least $2$ $4$-cliques with probability at least $0.134$, i.e. $\Pr(X_{\G,\t,\l} \geq 2) \geq 0.134 \geq \theta$. Thus, $\G$ is a $\l$-$(2,0.13)$ nucleus. 

%modified by Fatemeh
\looseness=-1
However, $\G$ cannot be a $\w$-$(2,0.13)$ or $\g$-$(2,0.13)$ nucleus. For instance, consider triangle $\bigtriangleup=(3,5,6)$. In all the possible worlds of $\G$, the clique on vertices $\{ 3,4,5,6,8\}$ should exist since this is the only deterministic $2$-nucleus which contains $\bigtriangleup$. Thus, the edges of this clique (9 blue and 1 red) should exist and the other edges in $\G$ can either exist or not in the possible worlds of $\G$. As a result, we get $\Pr(X_{\G,\t,\w} \geq 2) = 0.8^9\cdot 0.9 = 0.120 < \theta$.

%The only possible world $G$ of $\mathcal{G}$, which is a deterministic $2$-nucleus and contains $\bigtriangleup$, is the clique created by vertices $\{ 3,4,5,6,8\}$. The existence probability of this clique is $0.8^9\cdot 0.9 = 0.120$. Therefore, $\Pr(X_{\G,\t,\w} \geq 2) = 0.120 < \theta$. 

\smallskip
Now, consider subgraph $\mathcal{H}$ induced by vertices $\{1,2,3,4,6,7 \}$, Figure~\ref{fig:my_label} [Right]. 
We show that $\H$ is $\w$-$(2,0.13)$-nucleus. Our reasoning is as follows. 
Ignoring probabilities, $\H$ consists of two deterministic $2$-nuclei, one induced by
$\{1,2,3,4,7\}$ (call it $cl_1$) and the other induced by $\{2,3,4,6,7\} $ (call it $cl_2$).
%$\{ 2,3,4,6,7\} $ (cl\_1) and  $\{ 1,2,3,4,7\}$ (cl\_2). 
Triangles in $\H$ can belong to either $cl_1$ or $cl_2$. 
Consider an arbitrary triangle $\bigtriangleup$ in $cl_1$. To compute $\Pr(X_{\H,\t,\w} \geq 2)$, all the possible worlds of $\H$ which contain $cl_1$ as a deterministic $2$-nucleus are valid. 
As a result, all the edges in $cl_1$ should exist, and the edges $(2,6)$, $(3,6)$ and $(4,6)$ can either exist or not exists in the valid possible worlds (edge $(7,6)$ has probability 1). As such, we have $2^3=8$ valid possible worlds. Summing over the existence probability of each possible world, we get $\Pr(X_{\H,\t,\w} \geq 2) = 0.9^{10} = 0.348 > \theta$. A similar reasoning can be applied for an arbitrary triangle $\bigtriangleup'$ in $cl_2$ which gives $\Pr(X_{\H,\t',\w} \geq 2) = 1\cdot 0.5 \cdot 0.8^2 \cdot 0.9^6 = 0.170 > \theta$. Thus, we can say that $\H$ is a $\w$-$(2,0.13)$-nucleus.

Let us consider $\H$ in more detail. This subgraph cannot be a  $\g$-$(2,0.13)$-nucleus. For instance, consider triangle %$\bigtriangleup = (2,5,6)$. 
$\bigtriangleup = (1,2,3)$. 
For this triangle, there are only two valid possible worlds which are deterministic $2$-nucleus: $(1)$ the one in which all the edges exist $(H_1)$, 
$(2)$ the one in which none of edges $(2,6)$, $(3,6)$ and $(4,6)$ exist $(H_2)$. Adding one of the edges $(2,6)$, $(3,6)$ and $(4,6)$ creates one extra triangle which will not be part of two cliques. This results in the possible world not being a deterministic $2$-nucleus. So, summing over these two possible worlds we get: 
\begin{align*}
    & \Pr(X_{\H,\t,\g} \geq 2)  = 
    0.9^{10}\cdot 1 \cdot 0.5 \cdot 0.8^2   \\
    & + 
    0.9^{10}\cdot 1 \cdot (1-0.5) \cdot (1-0.8)^2 
    = 0.118 < \theta. 
\end{align*}

However consider subgraphs $\H_1$ and $\H_2$ induced by $\{1,2,3,4,7\}$ and $\{2,3,4,6,7\}$, respectively. 
The only possible worlds of $\H_1$ and $\H_2$ that are deterministic $2$-nuclei are the ones in which all their edges exist. 
So for each triangle $\bigtriangleup$ and $\bigtriangleup'$ in $\H_1$ and $\H_2$, we have  $\Pr(X_{\H_1,\t,\g} \geq 2) = 0.9^{10} = 0.348 > \theta$ and $\Pr(X_{\H_2,\t',\g} \geq 2) = 1\cdot 0.5 \cdot 0.8^2 \cdot 0.9^6 = 0.170 > \theta$. Thus, $\mathcal{H}_1$ and $\mathcal{H}_2$ are global $\g$-$(2,0.13)$-nuclei subgraphs.
\end{examp}

%\smallskip
\noindent
{\bf Nucleus Decomposition.}
The {\em nucleus decomposition} finds the set of all the $\mu\-(k,\theta)$-nuclei for different values of $k$.
We will study the problem in the three different modes we consider. Specifically, we call nucleus-decomposition problems for the different modes 
{$\l$}-{\sl NuDecomp}, 
{$\g$}-{\sl NuDecomp}, and
{$\w$}-{\sl NuDecomp}, respectively.

{\color{black}
In the following, we prove uniqueness and hierarchical-containment properties of probabilistic nucleus decomposition.
\begin{pro}
The local, weakly-global, and global nucleus decompositions are unique.
\begin{IEEEproof}
The uniqueness is based on the definitions of local, weakly-global, and global nucleus decomposition. Specifically, the uniqueness follows from the property that each nucleus (local, weakly-global, or global) is a {\em maximal} subgraph satisfying the required property. As such, the set of maximal nuclei is unique.
\end{IEEEproof}
\end{pro}

\begin{pro}
There exists a hierarchical-containment property for local, weakly-global, and global decomposition.
\begin{IEEEproof}
Let $\theta$ be an arbitrary and fixed user-defined threshold. To prove the hierarchical-containment property for local nucleus decomposition, let $\mathcal{F}$ be a local $\l$-$(k+1,\theta)$-nucleus. 
By the definition of local nucleus, each triangle in $\mathcal{F}$ has support at least $k+1$ in $\mathcal{F}$, with probability no less than $\theta$. Since $k+1 > k$, each triangle in $\mathcal{F}$ has also support at least $k$ in $\mathcal{F}$. 
Thus, $\mathcal{F}$ is contained in a $\l$-$(k,\theta)$-nucleus. 
This proves the property for local nucleus decomposition.

For weakly-global decomposition, let $\mathcal{H}$ be a weakly-global $\w$-$(k+1,\theta)$-nucleus. Referring to Definition~\ref{lwg} for each triangle $\bigtriangleup \in \mathcal{H}$ we have

\begin{equation}\label{eq1revision}
     \Pr(X_{\H,\t,\w} \geq k+1) = \sum_{H \sqsubseteq  \H} \Pr[H \mid \H] \cdot \mathbbm{1}_{\w}(H,\t,k+1) \geq \theta ,
\end{equation}
where $\mathbbm{1}_{\mbox{\bf w}}(H,\t,k+1)=1$ means that $\t$ is in $H$, and there is a subgraph $H'$ of $H$ that contains $\t$ and is a {\em deterministic} $(k+1)$-nucleus in $H$. 
Since every deterministic $(k+1)$-nucleus is contained in a deterministic $k$-nucleus \cite{sariyuce2015finding}, 
we have that there exist a deterministic $k$-nucleus $H''$ in $H$ that contains $H'$. Clearly, $H''$ contains $\t$, and we have that
$\mathbbm{1}_{\mbox{\bf w}}(H,\t,k+1)=1$ implies 
$\mathbbm{1}_{\mbox{\bf w}}(H,\t,k)=1$, thus
$\Pr(X_{\H,\t,\w} \geq k) \geq \theta$, i.e. 
$\mathcal{H}$ is contained in a $\w$-$(k,\theta)$-nucleus. 

A similar reasoning holds for the global case as its definition is based on possible worlds which are deterministic nuclei.
\end{IEEEproof}
\end{pro}
}
 
We show that {$\l$}-{\sl NuDecomp} can be computed in polynomial time and furthermore we give several algorithms to achieve efficiency for large graphs. 
Before this, we start by showing that 
{$\g$}-{\sl NuDecomp} and {$\w$}-{\sl NuDecomp} are $\#P$-hard and NP-hard, respectively.
Nevertheless, as we show later in the paper, once we obtain the {$\l$}-{\sl NuDecomp}, we can use it as basis, combined with sampling techniques, to effectively approximate {$\g$}-{\sl NuDecomp} and {$\w$}-{\sl NuDecomp}.

%\textcolor{blue}{Note that weakly-global and global notions are stronger definitions of nucleus decomposition than the local notion. They obtain more cohesive subgraphs that are smaller in size.} 

\section{Hardness Results}\label{hardnessSection}

In this section, we show that  {$\g$}-{\sl NuDecomp}  and {$\w$}-{\sl NuDecomp} are NP-hard. For this we use a reduction from the $k$-clique problem. 
Furthermore, we can show that {$\g$}-{\sl NuDecomp} is even harder, namely $\#$P-hard, using a reduction from the network reliability problem.  

\begin{defn}
    \textbf{The $k$-clique Problem \cite{jin2011discovering}.} Given a graph $G$, and input parameter $k$, the $k$-clique problem is to check whether there is a clique of size $k$ in the graph. The $k$-clique problem is NP-complete.
    \end{defn}

%\vspace{-0.1cm}
%%%We show the following interesting property.

We note the following interesting property about $k$-nucleus.

%\vspace{-0.1cm}
\begin{lem}\label{lemm2}
\normalfont For any $k$, the only graph on $(k+3)$ vertices which is a deterministic $k$-nucleus is a $(k+3)$-clique.
%\vspace{-0.4cm}
\end{lem}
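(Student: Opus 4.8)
The plan is to prove the statement by a counting argument on triangles and 4-cliques, exploiting the fact that a $k$-nucleus requires every triangle to be in at least $k$ 4-cliques, and on $k+3$ vertices this forces near-completeness.

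First I would recall the definition of a (deterministic) $k$-nucleus $H$: $H$ is a union of 4-cliques, every triangle $\t$ in $H$ satisfies $4\text{-}supp_H(\t) \geq k$, and all triangles are 4-connected. Let $H$ have exactly $k+3$ vertices. The key observation is that a 4-clique containing a fixed triangle $\t$ is determined by choosing one additional vertex adjacent to all three vertices of $\t$; hence the number of 4-cliques containing $\t$ is at most $(k+3)-3 = k$, since there are only $k$ vertices outside $\t$. Combining this with the nucleus condition $4\text{-}supp_H(\t)\geq k$, we get that the support of every triangle is \emph{exactly} $k$, and moreover \emph{every} one of the $k$ vertices outside $\t$ must be adjacent to all three vertices of $\t$ and the resulting 4-set must be a 4-clique in $H$.

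Next I would turn this into full adjacency. Take any two vertices $a,b$ that lie together in some triangle $\t=(a,b,c)$ of $H$ (such a triangle exists through any edge since $H$ is a union of 4-cliques). For every vertex $x \notin \{a,b,c\}$, the previous step shows $\{a,b,c,x\}$ is a 4-clique, so in particular $x$ is adjacent to both $a$ and $b$; thus $a$ and $b$ are adjacent to all other $k+1$ vertices. Iterating this — or more cleanly, first showing $H$ has no isolated structure and is connected (which follows from 4-connectivity of triangles plus $H$ being a union of 4-cliques), then propagating the "adjacent to everything" property along shared triangles — yields that every vertex is adjacent to every other vertex, i.e. $H$ is the $(k+3)$-clique. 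Conversely, the $(k+3)$-clique is easily checked to be a $k$-nucleus: every triangle sits in exactly $k$ 4-cliques, it is trivially a union of 4-cliques (for $k\geq 1$; the $k=0$ case is degenerate but still consistent), and it is 4-connected.

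The main obstacle I anticipate is the bookkeeping in the propagation step: making sure that the "every outside vertex completes a 4-clique with $\t$" property can be chained across all triangles to cover \emph{all} pairs of vertices, handling small/boundary values of $k$ (especially $k=0$ and $k=1$, where "union of 4-cliques" and connectivity need care) and verifying that no vertex of $H$ can fail to participate in a triangle. I would address this by first establishing that every edge of $H$ lies in a triangle (immediate from $H$ being a union of 4-cliques), then arguing that the graph on $k+3$ vertices in which every triangle-extension is forced to be a 4-clique must be complete, since any missing edge $uv$ would have to be "filled in" once $u$ and $v$ are seen together with a common neighbor in a triangle — and such a common neighbor exists once we know the graph is connected with minimum degree large enough, which the support bound already guarantees.
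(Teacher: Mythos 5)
Your proposal is correct and follows essentially the same route as the paper: both rest on the observation that a triangle in a graph on $k+3$ vertices can be contained in at most $k$ 4-cliques (one per outside vertex), so the nucleus condition forces every outside vertex to complete a 4-clique with that triangle, and both then propagate this through the newly created triangles to conclude that every vertex is adjacent to every other. Your added care about the converse direction and the degenerate small-$k$ cases goes slightly beyond what the paper writes down, but the core argument is the same.
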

\begin{IEEEproof}
Recall that based on the definition of the $k$-nucleus, each triangle is contained in at least $k$ $4$-cliques.  Given the vertices $\left \{ v_1,v_2, \cdots , v_{k+3} \right \}$, without loss of generality, let $\t_{123} = (v_1,v_2,v_3)$ be a triangle with vertices $v_1$,$v_2$, and $v_3$. The triangle $\t_{123}$ must be part of $k$ 4-cliques; therefore, there must be an edge between each of the remaining $k$ vertices and all the vertices of the triangle $\t_{123}$. Now, new triangles are created, containing vertices $\left \{ v_4, \cdots, v_{k+3} \right \}$. Let $\t_{ijt}$ be one of them, where $i,j = 1,2$, $i \neq j$, and $t=4,\cdots,k+3$. This triangle must be contained in $k$ $4$-cliques as well. Thus, there should be edges between each vertex in the triangle $\t_{ijt}$ to the other $k$ vertices. Thus, each vertex $v_t$ becomes connected to all the other vertices creating a clique on $k+3$ vertices. 
 %\vspace{-0.3cm}  
 \end{IEEEproof}

\begin{thm} \label{hardness2}
    {\bf w}-{\sl NuDecomp} and {\bf g}-{\sl NuDecomp} are NP-hard.
  %  \vspace{-0.2cm}
\end{thm}
    \begin{IEEEproof}
    Given a graph $G=(V,E)$, we define a probabilistic graph $\G=(V,E,p)$ as follows: For each edge $e$ in $\G$, $p(e) = \frac{1}{2^{2m}+1}$, where $m$ is the number of the edges in $\G$. 
    Let $\theta = \left(\frac{1}{2^{2m}+1}\right)^{(k+3)\cdot(k+2)/2}$. 
    
    We prove that $\w\mbox{-}(k,\theta)$-nucleus \big($\g\mbox{-}(k,\theta)$-nucleus \big) 
    of $\G$ exists if and only if a $(k+3)$-clique exists in $G$.  Let \textit{C} be a $(k+3)$-clique in $G$. Since \textit{C} has $(k+3)\cdot(k+2)/2$ edges, its existence probability is $\left(\frac{1}{2^{2m}+1}\right)^{(k+3)\cdot(k+2)/2} = \theta$. In addition, in a $(k+3)$-clique, each triangle is contained in exactly $k$ $4$-cliques. Thus, as a subgraph, \textit{C} is a both $\w\mbox{-}(k,\theta)$-nucleus and $\g\mbox{-}(k,\theta)$-nucleus
    of $\G$. 

    In the following we show that if $G$ does not contain a $(k+3)$-clique, the $\w\mbox{-}(k,\theta)$-nucleus and $\g\mbox{-}(k,\theta)$-nucleus are empty. We prove the case for weakly-global, and the same reasoning can be applied for the global case as well.
    
    Suppose that $G$ does not contain a $(k+3)$-clique. For a contradiction, let us assume that a $\w\mbox{-}(k,\theta)$-nucleus of 
    $\G$ exists and denote it by $\H$.
    Based on Lemma \ref{lemm2}, a $(k+3)$-clique is the only graph which has $k+3$ vertices and is a deterministic $k$-nucleus. Since, ignoring edge probabilities, $\H$ cannot be a $(k+3)$-clique, it must have at least $(k+4)$ vertices. Furthermore, since it contains a $k$-nucleus for each triangle in it, the degree of each vertex %(ignoring edge probabilities) 
    is at least $(k+2)$.
    
    Let $\t \in \H$ and let $ \left \{  {H}_1,{H}_2,\cdots,H_l \right \}$ be a set of all the valid possible worlds of $\H$, i.e. $\mathbbm{1}_{\mbox{\bf w}}(H_i,\t,k)=1$ ($\mathbbm{1}_{\mbox{\bf g}}(H_i,\t,k)=1$ for the proof of the global case), for all $H_i$ (refer to Definition~\ref{lwg}).  
     The maximum value for $l$ can be $2^m$. For each ${H}_i$, $\Pr(H_i) \leq p(e) ^{(k+4)\cdot(k+2)/2} $. Therefore, for triangle $\t \in \H$, 
    $\Pr(X_{\H,\t,\w} \geq k)$
     is at most:
    $\beta = l \cdot  \left(\frac{1}{2^{2m}+1}\right) ^{(k+4)\cdot(k+2)/2} $. Thus, we have \\
    $\beta \leq 2^m \cdot \left( \frac{1}{2^{2m}+1} \right) ^{ ((k+4)\cdot(k+2))/2} 
         = 2^m \cdot \theta \cdot \left( \frac{1}{2^{2m}+1} \right) ^{(k+2)/2} \\
         < 2^m \cdot \theta \cdot \left( \frac{1}{2^{2m}+1} \right)  < \theta. 
    $
%%%
%%%    \begin{align}
%        & \beta \leq 2^m \cdot \left( \frac{1}{2^{2m}+1} \right) ^{ ((k+4)\cdot(k+2))/2} \nonumber \\
%%%        & = 2^m \cdot \left( \frac{1}{2^{2m}+1} \right)^{((k+3)\cdot(k+2))/2} \cdot \left( \frac{1}{2^{2m}+1} \right) ^{(k+2)/2} \nonumber \\
%%%        & = 2^m \cdot \theta \cdot \left( \frac{1}{2^{2m}+1} \right) ^{(k+2)/2} < 2^m \cdot \theta \cdot \left( \frac{1}{2^{2m}+1} \right)  < \theta. \nonumber \\
%        & < 2^m \cdot \theta \cdot \left( \frac{1}{2^{2m}+1} \right)  < \theta. 
%%%    \end{align}
    Thus, $\w\mbox{-}(k,\theta)$-nucleus is empty.
    \end{IEEEproof}
  
  \nop{  
    In what follows, we show that $(k,\theta)$-$(r,r+1)$-g$_2$-nucleus %$\text{GB}_2$-$(k,\theta)$-$(r,r+1)$ 
    is NP-hard for any integer values of $r$.
    \begin{cor}
    $(k,\theta)$-$(r,r+1)$-g$_2$-nucleus %$\text{GB}_2$-$(k,\theta)$-$(r,r+1)$-nucleus 
    of $\G$ is not empty if and only if $(k+r)$-clique exists in $G$. 
    \begin{IEEEproof}
    \normalfont The proof is similar to the Theorem \ref{hardness2}. The only observations required for the proof are: \textbf{(1)} The degree of each vertex in a $k$-$(r,r+1)$-nucleus, is at least $k+(r-1)$, \textbf{(2)} the only subgraph on $(k+r)$ which is $k$-$(r,r+1)$-nucleus and has $k+r$ vertices is $k+r$-clique. 
    \end{IEEEproof}
    \end{cor}
}

In the extended version of this paper \cite{esfahani2020nucleus}, we show that that {$\g$}-{\sl NuDecomp} is even harder, namely $\#$P-hard.

\nop{
Now we show that {$\g$}-{\sl NuDecomp} is even harder, namely $\#$P-hard. We use a reduction from the network reliability problem.
\begin{defn}
\textbf{Network Reliability~\cite{valiant1979complexity}.} Given a probabilistic graph $\G = (V,E,p)$, its {\em reliability} is the probability that its possible worlds are connected:
$\text{conn}(\G) = \sum_{G \sqsubseteq \G} \Pr[G \mid \G] \cdot \text{C}(G)$, 
where $G$ is a possible world of $\G$, and $C(G)$ is an indicator function taking on 1 if the possible world $G$ is connected and 0 otherwise.
\end{defn}

Computing reliability is $\#$P-hard~\cite{valiant1979complexity}. We can show that its decision version is also $\#$P-hard.

\begin{defn} \textbf{Decision Version of Network Reliability.}
Given $\G = (V,E,p)$ and threshold $\theta$, is $\text{conn}(\G)\geq \theta$?
\end{defn}

Note that an algorithm for the decision version of reliability can be employed to produce a solution for the original reliability problem. This can be done by using binary search, starting with a guess, $\theta = 0.5$, and halving the search space each time, until the exact value of reliability is deduced, up to the machine supported precision. 
%Polynomially many calls to the decision version are enough in this process, 
Therefore we state that

\begin{lem} \label{decisionreliability}
Decision version of network reliability is $\#$P-hard.
\end{lem}

\looseness=-1
We now provide a reduction from this problem to {\bf g}-{\sl NuDecomp}.
Let $v$ be an arbitrary vertex in $\G$. We create two dummy vertices $u$ and $w$, and the corresponding edges $(u,v)$, $(u,w)$, and $(v,w)$ with probabilities $p(u,v)=1$, $p(u,w)=1$, and $p(v,w)=1$, respectively. Let $ \t = (u,v,w)$ be the resulting triangle created with existence probability of 1. Also, let the resulting graph be $\mathcal{F}$.

    \begin{lem}\label{thm1}
    $\G$ has reliability at least $\theta$ iff $\Pr(X_{\mathcal{F},\t, \g} \geq 0) \geq \theta$.
  \begin{IEEEproof}
%    normalfont Please refer to Appendix.
   We begin the proof with a remark. Consistent with previous works\cite{sariyuce2015finding,sariyuce2017nucleus,sariyuce2018local}, when $k$ is $0$, we define a deterministic $0$-nucleus of a graph $G$ to be a maximal, connected subgraph $H$ of $G$.

    Triangle $\t$ appears in each possible world $F$ of $\mathcal{F}$. Corresponding to each possible world $F=(V \cup \{u,w\}$, $E_G$ $\cup$  $\{ (u,v),$ $(u,w),$ $(v,w)\})$ of $\mathcal{F}$, let $G=(V,E_G)$ be the possible world of $\G$.
     We can write $F = G \cup  \{ \t  \}$. If $F$ is a $0$-nucleus, $F$ is connected and there is a path between each vertex in $F$, including vertices in $V$. Thus, $G$ is connected as well. On the other hand, if $G$ is connected, there is a path between vertex $v$ and other vertices in $V$. Also, since $u$ and $w$ are connected to $v$ with probability $1$, thus, $F$ is connected as well. 
     Therefore, $F$ is a $0$-nucleus. Thus, connected possible worlds of $\G$ exactly correspond with $0$-nucleus possible worlds of $\mathcal{F}$. In addition, $\Pr[F \mid \mathcal{F}] = \Pr[G \mid \G] \cdot \Pr(\t)$.
     Since $\Pr(\t) = 1$, we have $\Pr[F \mid \mathcal{F}] = \Pr[G \mid \G]$.
     Thus, we obtain
     $
         \sum_{F \sqsubseteq \mathcal{F}} \Pr[ F \mid \mathcal{F}] \cdot \mathbbm{1}_{\mbox{\bf g}}(F,\t,0) = \sum_{G \sqsubseteq \G} \Pr[G \mid \G] \cdot \text{C}(G).
     $
%     \begin{align}\label{equation1}
%         \sum_{F \sqsubseteq \mathcal{F}} \Pr[ F \mid \mathcal{F}] \cdot \mathbbm{1}_{\mbox{\bf g}}(F,\t,0) = \sum_{G \sqsubseteq \G} \Pr[G \mid \G] \cdot \text{C}(G)
%     \end{align}
     Using network reliability problem and the last equation, we have that
     $\Pr(X_{{\mathcal F},\t,\g} \geq 0) \geq \theta$ if and only if $\G$ has reliability with at least $\theta$.
     %The claim follows.
   \end{IEEEproof}
    \end{lem}
\vspace{-0.1cm}
Based on Lemmas \ref{decisionreliability} and \ref{thm1} we finally have that
\begin{thm} \label{hardness2}
{\bf g}-{\sl NuDecomp} is $\#$P-hard.
\end{thm}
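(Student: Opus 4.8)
The plan is to reduce from the $k$-clique problem, using Lemma~\ref{lemm2} as the key structural fact. Given a deterministic graph $G=(V,E)$ with $n=|V|$, $m=|E|$ and a parameter $k$ (instances with $k\le 3$ are decidable in polynomial time, so assume $k\ge 4$), I would build in polynomial time a probabilistic graph $\mathcal F$, a triangle $\t$ in it, and a threshold $\theta$, so that the $\w$-$(k,\theta)$-nucleusness of $\t$ is at least $k$ if and only if $G$ has a $k$-clique; since {\bf w}-{\sl NuDecomp} run with threshold $\theta$ reports the $\w$-$(k,\theta)$-nucleusness of every triangle, this yields NP-hardness. The construction keeps $G$, adds three new vertices $a,b,c$ forming a triangle $\t=(a,b,c)$ with all three edge probabilities equal to $1$, adds an edge of probability $1$ from each of $a,b,c$ to every $v\in V$ (the \emph{certain} edges), and keeps each original edge of $G$ with a tiny probability $\varepsilon$ (the \emph{uncertain} edges); it suffices to take $\varepsilon$ so small that $2m^{\binom{k}{2}+1}\varepsilon<1$, which costs only $\mathrm{poly}(n)$ bits. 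I set $\theta=\varepsilon^{\binom{k}{2}}$. The useful consequence of joining $a,b,c$ to all of $V$ by certain edges is that, in any possible world $F$, a deterministic $k$-nucleus $N$ that contains $\t$ has vertex set $\{a,b,c\}\cup W$ for some $W\subseteq V$ with $|W|\ge k$; I write $Q$ for the graph on $W$ whose edges are the uncertain edges lying in $N$.

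For the ``if'' direction, if $C\subseteq V$ is a $k$-clique of $G$, I look at the subgraph $\H_0$ of $\mathcal F$ induced on $\{a,b,c\}\cup C$, which (ignoring probabilities) is $K_{k+3}$, is admissible as a connected union of $4$-cliques, and whose only uncertain edges are the $\binom{k}{2}$ edges inside $C$. Since $\H_0$ has exactly $k+3$ vertices, Lemma~\ref{lemm2} forces a possible world of $\H_0$ to contain a deterministic $k$-nucleus through one of its triangles only when that world equals $K_{k+3}$, i.e.\ only when all $\binom{k}{2}$ uncertain edges survive, which happens with probability $\theta$. Hence $\Pr(X_{\H_0,\t',\w}\ge k)=\theta$ for every triangle $\t'$ of $\H_0$, so $\H_0$ satisfies the defining inequality of a $\w$-$(k,\theta)$-nucleus and therefore lies inside a maximal such subgraph, which contains $\t$; thus the nucleusness of $\t$ is at least $k$.

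For the ``only if'' direction, I assume $G$ has no $k$-clique, fix a possible world $F$, and take $N$, $W$, $Q$ as above. Because $(a,b,c)$ has support at least $k$ in $N$, at least $k$ vertices $v\in W$ give a $4$-clique $\{a,b,c,v\}$ of $N$; for each such $v$ the triangle $(a,b,v)$ lies in $N$ and has support at least $k$ there, and since $c$ accounts for at most one of its $4$-cliques while any other $4$-clique $\{a,b,v,w\}\subseteq N$ needs the uncertain edge $(v,w)$, the vertex $v$ has degree at least $k-1$ in $Q$. A handshake count over these $\ge k$ vertices gives $|E(Q)|\ge\binom{k}{2}$, and equality would force the edges of $Q$ to be exactly a $K_k$ on $k$ vertices of $W$, i.e.\ a $k$-clique of $G$ --- impossible. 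So $|E(Q)|\ge\binom{k}{2}+1$, meaning $F$ contains more than $\binom{k}{2}$ uncertain edges; therefore $\Pr(X_{\mathcal F,\t,\w}\ge k)$ is at most the probability that more than $\binom{k}{2}$ of the $m$ uncertain edges survive, which a crude union bound and the choice of $\varepsilon$ make strictly less than $\varepsilon^{\binom{k}{2}}=\theta$. Finally, ``$F$ contains a deterministic $k$-nucleus through $\t$'' is monotone under edge additions, so $\Pr(X_{\H,\t,\w}\ge k)\le\Pr(X_{\mathcal F,\t,\w}\ge k)<\theta$ for every subgraph $\H\subseteq\mathcal F$, hence no $\w$-$(k,\theta)$-nucleus contains $\t$ and its nucleusness is below $k$.

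The step I expect to be the main obstacle is this ``only if'' direction. A priori a possible world could contain a dense but clique-free $k$-nucleus through $\t$ --- say the join of the triangle $abc$ with some large graph of minimum degree $k-1$ --- which would break the correspondence with cliques of $G$. Lemma~\ref{lemm2} is precisely what rules this out at the smallest possible vertex count $k+3$, and weighting the original edges by a minuscule $\varepsilon$ makes any witness on more than $k+3$ vertices carry strictly lower-order probability, so the single threshold $\theta=\varepsilon^{\binom{k}{2}}$ cleanly separates the two cases; all that then remains is the routine verification that $\theta$ and the edge probabilities have polynomial size.
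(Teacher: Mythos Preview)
Your proof is correct and follows the same overall strategy as the paper: reduce from the $k$-clique problem, use Lemma~\ref{lemm2} to pin down the unique $k$-nucleus on $k+3$ vertices, and set edge probabilities tiny enough that any larger witnessing structure carries lower-order probability mass. The counting in your ``only if'' direction (handshake over the $\ge k$ vertices $v$ with $\{a,b,c,v\}$ a $4$-clique, forcing $\ge\binom{k}{2}+1$ uncertain edges unless a $K_k$ appears) and the coupling/monotonicity step are both sound.

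The difference is in the construction. The paper takes $G$ itself as the probabilistic graph (every edge gets probability $1/(2^{2m}+1)$), sets $\theta$ to the probability of a $(k+3)$-clique, and shows that a $\w$-$(k,\theta)$-nucleus exists \emph{at all} iff $G$ contains a $(k+3)$-clique; its ``no'' side simply observes that without a $(k+3)$-clique any candidate $\H$ must have $\ge k+4$ vertices of minimum degree $\ge k+2$, hence $\ge(k+4)(k+2)/2$ edges. Your version instead grafts on a probability-$1$ triangle $\t=(a,b,c)$ joined by certain edges to all of $V$, keeps $G$'s edges as the only uncertain ones, and asks whether the nucleusness of this fixed $\t$ reaches $k$. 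Your gadget buys you a canonical triangle to query (so you reduce to a per-triangle decision rather than to existence of any nucleus), at the cost of a more elaborate counting argument in the ``only if'' direction. The paper's reduction is shorter and avoids the gadget entirely; both are valid polynomial-time reductions.
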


%    \begin{IEEEproof}
%   \normalfont  Note that     $\Pr(X_{\G,\t,\g} \geq 0) \geq \theta$ iff there exists a $\g\mbox{-}(0,\theta)$-nucleus  containing $\t$. Thus, we conclude {\bf g}-{\sl NuDecomp} is $\#$P-hard.
%    \end{IEEEproof}

}

\section{Local Nucleus decomposition}
\label{localnucleus}

Here we propose efficient algorithms for solving {$\l$}-{\sl NuDecomp}.
Peeling is a general strategy that has been used broadly in core and truss decompositions as well as in deterministic nucleus decomposition~\cite{sariyuce2015finding}.
However, generalizing peeling to compute 
{$\l$}-{\sl NuDecomp} creates significant computational challenges.
For example, 
a challenge is finding the support score for each triangle. 
This is because of the combinatorial nature of finding the maximum value of $k$ such that $\Pr(X_{{\G},\t,\l} \geq k) \geq \theta$ for a triangle $\t$. 
In particular, triangle $\t$ in a probabilistic graph can be part of different numbers of $4$ cliques with different probabilities. As a result, considering all the subsets of $4$-cliques which contain $\t$ results in exponential time complexity. 
In our algorithm, we
identify two challenging tasks, namely computing and updating
nucleus scores.

\subsection{Computing initial nucleus scores}

Our process starts by computing a nucleus score $\kappa_{\t}$ for each triangle $\t$, which initially is the maximum $k$ for which $\Pr(X_{{\G},\t,\l} \geq k) \geq \theta$.
%Specifically, $\kappa_{\t}$ is initialized to the support of $\t$ and at the end of the process contains the nucleusness of $\t$. Based on the $\kappa$ values $k$-nuclei can be found.

Given a probabilistic graph $\G=(V,E,p)$, let $\t = (u,v,w)$ be a triangle in $\G$. 
For $i=1,\ldots,c_{\t}$, where $c_{\t}= | N(u) \cap N(v) \cap N(w) |$,
let 
$z_i \in N(u) \cap N(v) \cap N(w)$ and 
$S_i = \{ u,v,w, z_i\}$. 
In other words, for each $i$, $S_i$ is the set of vertices of a $4$-clique that contains $\t$. 
For notational simplicity, we will also denote by $S_i$ the 4-clique on $\{u,v,w,z_i\}$.

\looseness=-1
Similarly, for each $i$,
let $\mathcal{E}_i = \left \{(u,z_i),(v,z_i),(w,z_i)  \right \}$ be the set of edges which connect vertex $z_i$ to vertices of $\t$. Let $\Pr(\mathcal{E}_i) = p(u,z_i) \cdot p(v,z_i) \cdot p(w,z_i)$ be the existence probability of $\mathcal{E}_i$.
We have:
\begin{equation}\label{eq2}
    \Pr(X_{\G,\t,\l} \geq k) = \Pr(X_{\G,\t,\l} \geq k-1)- \Pr(X_{\G,\t,\l} = k-1)
    %\Pr[\textsf{deg}_\G(\t) \geq l] = 1 - \sum_{t=0}^{l-1} \Pr[\textsf{deg}_\G(\t) = t],
\end{equation}

Thus,
we need to compute $\Pr(X_{\G,\t,\l} = k)$ for any $k$, and find the maximum value of $k$ for which the probability on the left-hand side of Equation \ref{eq2} is greater than or equal to $\theta$. { \color{black} In fact, $\Pr(X_{\G,\t,\l} = k)$ gives the probability that $\bigtriangleup$ is contained in $k$ number of $4$-cliques in $\mathcal{G}$. Under the condition that $\bigtriangleup$ exists, we denote $\mathcal{X}(S_\bigtriangleup, k,j)$ to be the probability that $\t$ is contained in $k$ of $4$-cliques from $\{ S_1,\cdots, S_j \} \subseteq {\mathcal S}_{\t}$, where ${\mathcal S}_{\t}$ the set of 4-cliques containing $\t$ in $\mathcal{G}$. In other words, $\mathcal{X}(S_\bigtriangleup, k,j)$ is conditional probability (conditioning on the existence of $\bigtriangleup$).
}

%Let us denote by ${\mathcal S}_{\t}$ the set of 4-cliques containing $\t$.
We fix an arbitrary order on ${\mathcal S}_{\t}$.  
%Now, let $\X({\mathcal S}_{\t}, k,j)$ denote the probability that $\t$ is contained in $k$ of $4$-cliques from $\{ S_1,\cdots, S_j \}$. 
The event that 
$\t$ is contained in $k$ of $4$-cliques from $\{ S_1,\cdots, S_j \}$, 
can be expressed as the union of the following two sub-events: 
\textbf{(1)}~the event that the $4$-clique $S_j$ exists and $\t$ is contained in  
$(k-1)$ of $4$-cliques from $\{ S_1,\cdots, S_{j-1} \}$, and 
\textbf{(2)}~the event that the $S_j$ does not exist and $\t$ is part of $k$ of $4$-cliques from $\{ S_1,\cdots, S_{j-1}\}$.
Thus, we have the following recursive formula: 
\begin{align}\label{recursive}
   \X({\mathcal S}_{\t}, k,j) & = \Pr(\mathcal{E}_j) \cdot \X({\mathcal S}_{\t}, k-1,j-1) \\ \nonumber
   & + (1-\Pr(\mathcal{E}_j)) \cdot \mathcal{X}({\mathcal S}_{\t},k,j-1), 
\end{align}
where $k \in [0,c_{\t}]$, and $j \in [0,c_{\t}]$.
Initially, we set 
$\mathcal{X}({\mathcal S}_{\t},0,0) = 1$,
$\mathcal{X}({\mathcal S}_{\t},-1,j) = 0$ for any $j$,
and $\mathcal{X}({\mathcal S}_{\t},k,j) = 0$, if $k > j$. Setting $j = c_{\t}$ in Equation~\ref{recursive}, and multiplying $\mathcal{X}({\mathcal S}_{\t},k,j)$ by $\Pr(\t)$ (existence probability of $\t$), gives the desired probability  $\Pr(X_{\G,\t,\l} = k)$. 
{\color{black}
Thus, we have:
\begin{equation}
    \Pr(X_{\G,\t,\l} = k) = \Pr(\t) \cdot \mathcal{X}(S_\bigtriangleup, k,c_{\t}),
\end{equation}
}
%The above recursive formula forms the basis for a dynamic programming approach to computing the initial values of $\kappa $ scores.

Given a triangle $\t$, let the {\em neighbor triangles} of $\t$ be those triangles which form a $4$-clique with $\t$. In the following we show how we can update $\Pr(X_{\G,\t,\l} \geq k)$ when a neighbor triangle is processed in the decomposition.

%$\mathcal{X}(S_\bigtriangleup, k,j)$ is conditional probability
%(conditioning on the existence of $\bigtriangleup$). Note that $P(AB) = P(A|B)P(B)$. More specifically, we have the following.

%Given a probabilistic graph $\mathcal{G}$ and a triangle $\bigtriangleup=(u,v,w)$ in $\mathcal{G}$, let $\Pr(X_{\G,\t,\l} = k)$ be the probability that $\bigtriangleup$ is contained in $k$ number of $4$-cliques in $\mathcal{G}$. 
%Under the condition that $\bigtriangleup$ exists, we denote $\mathcal{X}(S_\bigtriangleup, k,j)$ to be the probability that $\t$ is contained in $k$ of $4$-cliques from $\{ S_1,\cdots, S_j \} \subseteq {\mathcal S}_{\t}$, where ${\mathcal S}_{\t}$ is the set of 4-cliques containing $\t$ in $\mathcal{G}$. 
%In other words, $\mathcal{X}(S_\bigtriangleup, k,j)$ is conditional probability (conditioning on the existence of $\bigtriangleup$).
%To find the $\Pr(X_{\G,\t,\l} = k)$, we need to set $j=c_\bigtriangleup$, where $c_{\t}= \| {\mathcal S}_{\t} \|$, and we have:
%\begin{equation}
 %   \Pr(X_{\G,\t,\l} = k) = \Pr(\t) \cdot \mathcal{X}(S_\bigtriangleup, k,c_{\t}),
%\end{equation}

\subsection{Updating nucleus scores}

Once the $\kappa$ scores have been initialized as described above, a process of peeling ``removes'' the triangle $\t^*$ of the lowest $\kappa$-score, specifically marks it as {\em processed}, and updates the neighboring triangles $\t$  (those contained in the same 4-cliques as the removed triangle) in terms of $\Pr(X_{\G,\t,\l} \geq k)$.   
Because of the removal of $\t^*$ the cliques containing it cease to exist, thus $\Pr(X_{\G,\t,\l} \geq k)$ of the neighbors $\t$ will change.  
We recompute this probability using the formula in Equation~\ref{recursive}, where the sets of cliques ${\mathcal S}_{\t}$ are updated to remove the cliques containing $\t^*$.

\begin{algorithm}
\caption{{$\l$}-{\sl NuDecomp}}
\label{peelingnuclei}
\begin{algorithmic}[1] 
\Function{$\l$-Nucleusness}{$\G$, $\theta$}
\ForAll{triangles $ \t \in \G$} \label{newfortriangles}
\State{$\kappa(\t) \gets \arg\max_k \{ \X(\S_{\t}, k, c_{\t}) \geq \theta\}$} \label{kappainitial}
%k_{\max} \text{ such that } \Pr(X_{\G,\t,\l} \geq k) \geq \theta$}
\State{\textit{processed}[$\t$]$\gets$ \textbf{false}} \label{processedornot}
\EndFor
\ForAll{unprocessed $ \t \in \G $ with minimum $\kappa(\t)$} \label{startiter}
\State{$\nu(\t) \gets \kappa(\t)$} \label{setkappa}
\State{Find set $\S_{\t}$ of $4$-cliques containing $ \t$} \label{cliquefinding}
\ForAll{$S \in \S_{\t}$ with non-processed triangles}
\ForAll{${\t}' \subset S$, ${\t}' \neq \t$, $\kappa(\t') > \kappa(\t)$ } \label{timecomplex1}
%\If{$\kappa(\t') > \kappa(\t)$}
\State{$\!\!\!\!\kappa({\t}') \gets$ $ \arg\max_k \{ \X(\S_{\t'}\setminus S, k, c_{\t'}-1) \geq \theta\}$} \label{updateparttime}
%\EndIf
\EndFor
\EndFor
\State{\textit{processed}[$\t$]$\gets$ \textbf{true}} \label{enditer}
\EndFor
%\If{\textit{processed}[${\t}'$] is \textbf{true} for any ${\t}' \subset S$} 
%\State{continue}
%\EndIf
\State \textbf{return} \textit{array} $\nu(\cdot)$ \label{nucleusscore}
\EndFunction
\end{algorithmic}
\end{algorithm}

\smallskip
\noindent
%The set of $\l$-$(k,\theta)$-nuclei is found by 
Algorithm~\ref{peelingnuclei} computes the nucleusness of each triangle in $\G$.
%which is based on the peeling algorithm in~\cite{sariyuce2015finding}, with the main difference of the use of maximum value of $k$ such that $\Pr(X_{\G,\t,l} \geq k) \geq \theta$. 
In line~\ref{kappainitial}, for each triangle $\t$, $\kappa({\t})$ is initialized using Equation~\ref{recursive}.
%We denote this value by $\delta(\t)$ for each triangle $\t$.
%the use of \textit{probabilistic degree} for each triangle. 
%Initially, for each triangle $\t$,
%$\delta(\t)$ is computed using Equation~\ref{recursive}.
Array \textit{processed} records whether a triangle has been processed or not in the algorithm (line~\ref{processedornot}). 
%Then, triangles are sorted in ascending order of their $\kappa(.)$ values. 
At each iteration (line~\ref{startiter}-\ref{enditer}), an unprocessed triangle $\t$ with minimum $\kappa({\t})$ is considered, and its nucleus score is set and stored in array $\nu$ (line~\ref{setkappa}). 
Then, the $\kappa(\t')$ values of all the neighboring triangles $\t'$ are updated using Equation~\ref{recursive}. The affected triangles are those unprocessed triangles which are part of the same $4$-clique with triangle $\t$.
The algorithm continues until all the triangles are processed. At the end, each triangle obtains its nucleus score and array $\nu$ with these scores is returned (line~\ref{nucleusscore}). Once all the nucleus scores are obtained, we build $\l$-$(k,\theta)$-nuclei for each value of $k$.

%We can prove the correctness of our algorithm by observing that the $\kappa$ values are sequence and use a reasoning similar to proof of \cite{batagelj2011fast} for the general peeling process (also see the deterministic case in \cite{sariyuce2015finding}).

%For every $\t$ in $\cal G$ and every set of 4-cliques used to compute $\kappa$ for a triangle it is easy to see that $\kappa(\t)$ is at most the number of 4-cliques containing $\t$ in $\G$. 

\looseness=-1
Observe that the $\kappa$ values for each triangle at each iteration decrease or stay the same. 
This implies that $\kappa$ for each triangle $\t$ is a monotonic property function similar to properties described in \cite{batagelj2011fast} for vertices. Now, we can use a reasoning similar to the one in \cite{batagelj2011fast} to show that our algorithm, which repeatedly removes a triangle with the smallest $\kappa$ value, gives the correct nucleusness for each triangle.

\nop{
\begin{IEEEproof}
Let $\H$ be a maximal $\l$-$(k,\theta)$-nucleus obtained by Algorithm \ref{peelingnuclei}. For each value of $k$, the algorithm finds all the triangles with $\Pr(X_{\G,\t,\l} \geq k)$. In addition, once all the nucleus scores are obtained, triangles which have the same score and are connected are assigned to the same component. Thus, if $\H$ is not in the answer set of Algorithm \ref{peelingnuclei}, it should be part of a larger $\l$-$(k,\theta)$-nucleus which is a contradiction with maximality of $\H$.
\end{IEEEproof}
}

\medskip
\noindent
\textbf{Time complexity:} 
%Here is the running time of deterministic nucleus decomposition:
%\begin{equation}
 %\sum_{\t \in G} \sum_{v \in %\t } d(v) = \sum_{v} \sum_{v \in %\t } d(v) =  \sum_{v} c_3(v) \cdot d(v),
%\end{equation}
%However, it is not clear to me why the running time is expressed this way.
%Now, I obtain the running time of Algorithm \ref{peelingnuclei}. 
Using dynamic programming, Lines~\ref{newfortriangles}-\ref{kappainitial} take $O\left(\sum_{\t \in \G} \kappa_{\t} \cdot c_{\t}\right)$, where $\kappa_{\t}$ is the nucleusness obtained for each triangle $\bigtriangleup$ in line~\ref{kappainitial}.
%Line ~\ref{kappainitial} can be computed using dynamic programming which takes $O(\kappa_{\t} \cdot c_{\t})$, where 
%$\kappa_{\t}$ is the nucleusness obtained for $\bigtriangleup$ in ~\ref{kappainitial}. Thus, the computation for all triangles takes
%in line~\ref{kappainitial} using dynamic programming. Based on dynamic programming, it takes $O(\kappa_{\t} \cdot c_{\t})$ to compute the initial value for each triangle. 
%In fact, we expect that in practice $\kappa_{\t}$ is less than $c_{\t}$. 
%As a result, line~\ref{kappainitial} in Algorithm~\ref{peelingnuclei}, takes 
%$O\left(\sum_{\t \in \G} \kappa_{\t} \cdot c_{\t}\right)$ time in total.
Let $\kappa_{\max}$ be the maximum $\kappa_{\t}$ over all the triangles in $\G$. 
%Thus, we have
%\begin{equation}\label{newruntdeter}
 %   O(\sum_{\t \in \G} %\kappa_{\t} \cdot c_{\t}) = %O(\sum_{\t \in \G} %\kappa_{\max} \cdot c_{\t}),
%\end{equation}
Since $c_{\t} \in 
O(d(u)+d(v)+d(w)) \subseteq O(d_{\max})$, running time of line~\ref{kappainitial} is
$
  O\left(\sum_{\t \in \G} \kappa_{\t} \cdot c_{\t}\right) =  O\left(\sum_{\t \in \G} \kappa_{\max} \cdot d_{\max}\right)=  O\left(\kappa_{\max}  d_{\max}  T_{\G}\right),
$
%
%%%\begin{equation*}\label{newruntdeter}
%%%   O\left(\sum_{\t \in \G} \kappa_{\t} \cdot c_{\t}\right) =  O\left(\sum_{\t \in \G} \kappa_{\max} \cdot d_{\max}\right)=  O\left(\kappa_{\max}  d_{\max}  T_{\G}\right),
%%%\end{equation*}
%
where $T_{\G}$ is the total number of triangles in the graph, and $d_{\max}$ is the maximum degree in $\G$. 
%Therefore, line~\ref{kappainitial}, takes $O(\kappa_{\max}  d_{\max}  T_{\G})$. 
For each triangle $\t$, finding all $\S_{\t}$'s in line~\ref{cliquefinding}, takes $O\left(d(u)+d(v)+d(w)\right) = O(d_{\max})$. In addition, lines ~\ref{timecomplex1}-\ref{updateparttime} take
%Similarly, the time for finding all  $\S_{\t}$'s in line~\ref{cliquefinding} will also be $O\left(\kappa_{\max}  d_{\max}  T_{\G}\right)$.
%Lines~\ref{timecomplex1}-\ref{updateparttime} take
    $O\left( \sum_{\t' \in N(\t)} (\kappa_{\t'} \cdot c_{\t'})\right)$ time,
where $N(\t)$ is the triangles which form a $4$-clique with $\t$. Note that $N(\t) = O(c_{\t})$. Therefore, the running time for processing all the triangles is
$
    O\left(\sum_{\t \in \G} \Big( d(u)+d(v)+d(w) + \sum_{\t' \in N(\t)}  \kappa_{\t'} \cdot c_{\t'} \Big) \right)$
    % = O\left(\sum_{\t \in \G} \Big( d_{\max} + (\kappa_{\max} \sum_{\t' \in N(\t)} d_{\max} ) \Big) \right) =
   $  =O\left(\kappa_{\max} d_{\max}^2 T_{\G}\right).
$

%%%\begin{align*}
%%%    & O\left(\sum_{\t \in \G} \Big( (d(u)+d(v)+d(w)) + \sum_{\t' \in N(\t)}  \kappa_{\t'} \cdot c_{\t'} \Big) \right), \nonumber \\
%%%    & = O\left(\sum_{\t \in \G} \Big( d_{\max} + (\kappa_{\max} \sum_{\t' \in N(\t)} d_{\max} ) \Big) \right), \nonumber \\
%%%    & = O\left(\sum_{\t \in \G} \Big( d_{\max} + \kappa_{\max} \cdot d_{\max}^2 \Big) \right)= O\left(\kappa_{\max} d_{\max}^2 T_{\G}\right),
%%%\end{align*}

Thus, the total running time of Algorithm~\ref{peelingnuclei} is bounded by $O\left(\kappa_{\max} d_{\max}^2 T_{\G}\right)$, and we can state the following. 
    \begin{thm}
   {$\l$}-{\sl NuDecomp} can be computed in polynomial time.
    \end{thm}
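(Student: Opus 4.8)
The plan is to prove two things: that Algorithm~\ref{peelingnuclei} returns the correct $\l$-$(k,\theta)$-nucleusness of every triangle, and that it runs in time polynomial in $|V|$. The time bound is essentially the analysis already carried out above; the substantive part is correctness, which I would split into (i) the dynamic program computing the $\kappa$ scores and their updates, and (ii) the validity of the peeling order.

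For (i), I would show by induction on $j$ that $\X(\S_\t,k,j)$, as defined by the recursion in Equation~\ref{recursive}, equals the probability that exactly $k$ of the $4$-cliques $S_1,\dots,S_j$ containing $\t$ are present. With $\t$ fixed, the edge-triples $\mathcal{E}_1,\dots,\mathcal{E}_{c_\t}$ completing these $4$-cliques are mutually independent, so conditioning on whether $\mathcal{E}_j$ is present splits the count into ``$S_j$ present and $k-1$ among $S_1,\dots,S_{j-1}$'' or ``$S_j$ absent and $k$ among $S_1,\dots,S_{j-1}$'', which is exactly the recursion; the boundary values $\X(\S_\t,0,0)=1$, $\X(\S_\t,-1,k)=0$, and $\X(\S_\t,k,j)=0$ for $k>j$ are immediate. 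Multiplying $\X(\S_\t,k,c_\t)$ by $\Pr(\t)$ gives $\Pr(X_{\G,\t,\l}=k)$; Equation~\ref{eq2} then recovers the tail probabilities, so the largest $k$ with $\Pr(X_{\G,\t,\l}\ge k)\ge\theta$ is computed correctly, and running the same recursion on $\S_{\t'}\setminus S$ recomputes $\kappa(\t')$ after peeling destroys the $4$-clique $S$.

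For (ii), I would first verify that $\kappa(\cdot)$ is a monotone triangle property: deleting a triangle, hence all $4$-cliques through it, only shrinks the set $\S_{\t'}$ of every other triangle $\t'$, so in each possible world the count of surviving $4$-cliques at $\t'$ is at most its previous value; consequently every tail $\Pr(X_{\cdot,\t',\l}\ge k)$ can only decrease and so can $\arg\max_k\{\Pr(X_{\cdot,\t',\l}\ge k)\ge\theta\}$. Granting this, the argument of \cite{batagelj2011fast} for core-style decompositions goes through with triangles in place of vertices: the value $\nu(\t)$ recorded when $\t$ is peeled is a lower bound on its nucleusness, since the triangles still unprocessed at that moment, after being split into connected components of equal score, form an $\l$-$(\nu(\t),\theta)$-nucleus containing $\t$ — the cliqueness and connectedness preconditions of Definition~\ref{lwgnucleus} being ensured by exactly this component-splitting — and $\nu(\t)$ is an upper bound because every triangle of any $\l$-$(k,\theta)$-nucleus containing $\t$ keeps its $\kappa$ value at least $k$ until the first of them is peeled, so $\t$ is not peeled earlier and $\nu(\t)\ge k$. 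Together with the established runtime $O(\kappa_{\max}d_{\max}^2 T_\G)$ — polynomial because $\kappa_{\t}\le c_\t\le d_{\max}<|V|$, $T_\G = O(|V|^3)$, and enumerating common neighbours of a triangle and maintaining a minimum over triangles are polynomial — the theorem follows.

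The step I expect to be the main obstacle is (ii): making the monotonicity claim precise for the threshold quantity $\arg\max_k\{\Pr(X_{\cdot,\t,\l}\ge k)\ge\theta\}$ under deletion of $4$-cliques, and then reproducing the exchange argument that ties greedy minimum-$\kappa$ peeling plus component-splitting to the \emph{maximal} nuclei of Definition~\ref{lwgnucleus}, including the verification that the resulting components really are unions of $4$-cliques. The DP correctness and the arithmetic behind the time bound are routine by comparison.
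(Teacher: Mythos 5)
Your proposal is correct and follows essentially the same route as the paper: the paper establishes the theorem by the time-complexity analysis of Algorithm~\ref{peelingnuclei} (the $O(\kappa_{\max} d_{\max}^2 T_{\G})$ bound) together with the observation that $\kappa$ is a monotone triangle property, so the peeling argument of \cite{batagelj2011fast} carries over. You simply spell out in more detail the DP correctness and the exchange argument that the paper delegates to that reference.
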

\nop{
\begin{IEEEproof}
\textbf{Initialization.} Given a triangle $\t$, the time complexity of finding maximum $k$ such that $ \Pr(X_{\G,\t,\l} \geq k) \geq \theta$ using dynamic programming approach is $O(c_{\t}^{2})$.
Also, for each triangle we need to find the cliques containing it.
However, this is $\sum_{v \in \t}d(v)$ and thus absorbed by the time needed for dynamic programming.

The probability values can be updated in $O(c_{\t}^2)$.
This time can be brought down to linear in $c_{\t}$ by storing the intermediate probability values computed during initialization, however, we do not do that because of excessive memory requirement of $O(T\cdot c_{\t})$, where $T$ is the number of triangles in the graph. 
\end{IEEEproof}
}

%\noindent
%\textcolor{blue}{\textbf{Space Complexity.} 
The space complexity is $O(T_{\G})$. This space is needed to store triangles (not $4$-cliques) and their $\kappa$ values. 
%during the execution. 
This is the same as the space complexity of deterministic nucleus decomposition. 
%%It should be noted that to obtain $\kappa_{\bigtriangleup}$ using approximation distributions, we need to keep the expected value and variance of the degree distribution of $\bigtriangleup$, which requires  $= O(T_{\mathcal{G}})$ in total for all the triangles. Thus, again the the space complexity is dominated by $O(T_{\G})$.
%}

%While being able to compute {$\l$}-{\sl NuDecomp} in polynomial time is good news,  there are two disadvantages with the use of DP. First, finding the maximum $k$ such that $ \Pr(X_{\G,\t,l} \geq k) \geq \theta$ is quadratic in $c_{\t}$ which is not efficient for large probabilistic graphs. Second, the updating process is memory intensive for large graphs because, in order to perform update efficiently, all the probability values $\Pr[X_{\G,\t,\l} = 0], \cdots, \Pr[X_{\G,\t,\l} = c_{\t}]$ should be stored for each triangle $\t$. 
%
While being able to compute {$\l$}-{\sl NuDecomp} in polynomial time is good news, finding the maximum $k$ such that $\Pr(X_{\G,\t,l} \geq k) \geq \theta$ is quadratic in $c_{\t}$ which is not efficient for large probabilistic graphs.
As an alternative approach, 
we will now propose efficient methods to approximate $\Pr(X_{\G,\t,l} \geq k)$ 
in $O(c_{\t})$ time such that the results are practically distinguishable from the exact values. The approximation is based on limit theorems, such as Le Cam's Poisson Limit Theorem \cite{le1960approximation} and Lyapunov's Central Limit Theorem \cite{Lyapunov-Nouvelle}.
%They provide a faster and reliable solution for {$\boldsymbol\ell$}-{\sl NuDecomp} in probabilistic graphs.

\subsection{Approximating $\kappa$ scores}
%We propose approximation methods to accurately estimate $\kappa$ scores based on
%
%well-known distributions in statistics, such as Poisson, Translated-Poisson, Normal, and Binomial distributions. %%Specifically, we show that these distributions can produce a very accurate approximation to tail probabilities of
%triangle supports $\Pr(X_{\G,\t,\l} \geq k)$.
%We obtain the error bounds based 
%limit theorems, such as Le Cam's Poisson Limit Theorem \cite{le1960approximation} and Lyapunov's Central Limit Theorem \cite{Lyapunov-Nouvelle}.  
%The benefit of this approach is efficiency. 
%computing tail probabilities for these distributions is much easier than executing the dynamic programming solution for directly computing $\Pr(X_{\G,\t,\l} \geq k)$. 
%However, it is important to define the problem in a novel way so that these theorems can be applied under certain conditions. %Moreover we need to understand well the conditions under which these approximations are of high quality so that one can choose the proper method to employ for the approximation.

\medskip
\noindent
{\bf Framework.}
Given a triangle $\t = (u,v,w)$, let $S_{i} = \left\{ u,v,w, z_i \right\}$ for $i=1,\ldots,c_{\t}$, as before. 
Also, let $\mathcal{E}_i = \left \{(u,z_i),(v,z_i),(w,z_i)  \right \}$ be the edges that connect $z_i$ to the vertices of $\t$. 

With slight abuse of notation, we also define each $\mathcal{E}_i$ as an indicator random variable which takes on $1$, if all the edges in $\mathcal{E}_i$ exist, and takes on $0$, if at least one of the edges in the set does not exist.
We observe that 
the variables $\mathcal{E}_i$ are mutually independent since 
the sets $\mathcal{E}_i$ do not share any edge.  
 Also, each Bernoulli variable $\mathcal{E}_i$ takes value $1$  with probability  $p(u,z_i) \cdot p(v,z_i) \cdot p(w,z_i)$ and $0$ with $1-(p(u,z_i) \cdot p(v,z_i)) \cdot p(w,z_i))$. 

Let $ \zeta = \sum_{i=1}^{c_\t}\mathcal{E}_i$. 
We can verify the following proposition. 
\begin{pro}
$\Pr(X_{\G,\t,\l} \geq k) = \Pr(\t) \cdot \Pr[\zeta \geq k]$.
\end{pro}

The expectation and variance of $\zeta$ are $ \mu = \sum_{i=1}^{c_\t}  \Pr(\mathcal{E}_i)$ and $ \sigma ^2 = \sum_{i=1}^{c_\t} \big( \Pr(\mathcal{E}_i)  \cdot (1-\Pr(\mathcal{E}_i) \big)$, respectively. 
Now we show that we can approximate the distribution of $\zeta$ using Le Cam's Theorem which makes use of Poisson Distribution \cite{le1960approximation}.

%the following distributions. 

\noindent
\textbf{Poisson Distribution \cite{haight1967handbook}:} A discrete random variable $X$ is said to have Poisson distribution with positive parameter $\lambda$, if the probability mass function of $X$ is given by: 
\begin{equation}
   \Pr[X=k] = \frac{\lambda^ke^{-k}}{k!}, \quad k=0,1,\cdots,
\end{equation}

The expected value of a Poisson random variable is $\lambda$. Setting $\lambda$ to $\mu$, we can approximate the distribution of $\zeta$ by the Poisson distribution. 
Using Le Cam's Theorem \cite{le1960approximation}, the error bound on the approximation is as follows:
\begin{equation}\label{errpoisson}
    \sum_{k=0}^{c_{\t}} \left | \Pr(\zeta = k) - \frac{\lambda ^k e^{-\lambda }}{k!} \right | < 2 \sum_{i=1}^{c_{\t}} \big(\Pr(\mathcal{E}_i) \big)^2 = 2(\mu-\sigma^2).
\end{equation}

Equation~\ref{errpoisson} shows that the Poisson distribution is reliable if $\Pr(\mathcal{E}_i)$ and $c_{\t}$ are small.

We observe that computing tail probabilities for the Poisson distribution is easy in practice as these probabilities satisfy a simple recursive relation. 
\begin{align}\label{poissform}
 & \Pr[\zeta < k]  \approx \sum_{j<k}\frac{e^{-\lambda } \lambda ^j}{j!} =  \sum_{j<k-1}\frac{e^{-\lambda } \lambda ^j}{j!} + \frac{e^{-\lambda } \lambda ^{k-1}}{(k-1)!} \nonumber \\
  &= \Pr[\zeta < k-1] + \frac{\lambda}{k-1}\Pr[\zeta = k-2]
\end{align}
%
%%%\begin{align}\label{poissform}
%%% & \Pr[\zeta < k]  \approx \sum_{j<k}\frac{e^{-\lambda } \lambda ^j}{j!} =  \sum_{j<k-1}\frac{e^{-\lambda } \lambda ^j}{j!} + \frac{e^{-\lambda } \lambda ^{k-1}}{(k-1)!} \nonumber \\
%%%  & = \Pr[\zeta < k-1] +\frac{\lambda}{k-1}e^{-\lambda}\frac{\lambda ^{k-2}}{(k-2)!} \nonumber \\ 
%%%  &= \Pr[\zeta < k-1] + \frac{\lambda}{k-1}\Pr[\zeta = k-2]
%%%\end{align}
%
with base case $\Pr[\zeta < 1] = \Pr[\zeta = 0] = e^{-\lambda}$.
Using Equation \ref{poissform}, and iterating over all values of $k$ from $1$ to $c_\t$, we can evaluate each term $\Pr[\zeta \geq k] = 1-\Pr[\zeta < k]$ in constant time, and find the maximum $k$ such that $\Pr(\t) \cdot \Pr[\zeta \geq k] \geq \theta$. Thus, the time complexity of obtaining $\Pr(X_{\G,\t,\l} \geq k)$ is $O(c_\t)$.

%\medskip
%\noindent
%\textbf{Translated Poisson ~\cite{rollin2007translated} Approximation.}

In some applications, $\sum_{i=1}^{c_{\t}} \big(\Pr(\mathcal{E}_i) \big)^2$ in Equation \ref{errpoisson} can be large, even if each $\Pr({\mathcal E}_i)$ is small. As a result, the difference between the variance 
$\sigma^2 = \sum_{i=1}^{c_{\t}} \Pr(\mathcal{E}_i) - \sum_{i=1}^{c_\t} \left(\Pr(\mathcal{E}_i)\right)^2$ of $\zeta$, and the variance $\lambda =\sum_{i=1}^{c_\t} \Pr(\mathcal{E}_i)$ of the Poisson approximation becomes large. To tackle the problem, we define a Translated Poisson~\cite{rollin2007translated} random variable
$Y = \lfloor \lambda_2 \rfloor + \Pi_{\lambda - \lfloor \lambda_2 \rfloor}$, where $\lambda_2 = \lambda - \sigma^2$ and 
$\Pi$ is Poisson distribution with parameter 
$ \lambda - \lfloor \lambda_2\rfloor$. 
In this formula 
$\lambda =\sum_{i=1}^{c_\t} \Pr(\mathcal{E}_i)$ is the expected value of distribution $\zeta$. Thus, the difference between the variance of $Y$ and $\zeta$ can be written as: 
\begin{align}
        &\text{Var(Y)}-\text{Var}(\zeta) = \lambda - \lfloor \lambda_2 \rfloor - \sigma^2 =  \lambda - \sigma^2 - \lfloor \lambda_2 \rfloor, \nonumber \\
        & =  \lambda_2 - ( \lambda_2 - \left \{ \lambda_2 \right \})  = \left \{ \lambda_2 \right \} < 1 , 
    \end{align}
    where $\left \{ \lambda_2 \right \}  = \lambda_2 - \lfloor \lambda_2 \rfloor$ . As can be seen the difference between the variances becomes small in this case. 
    %The bound on the approximation is $O \Big( \big( \sum_{i=1}^{c_\t} \big( \Pr(\mathcal{E}_i)  \cdot (1-\Pr(\mathcal{E}_i) \big)^{-1} \Big)$.

Equation~\ref{poissform} for translated Poisson changes~to
\begin{align}\label{translatedpoissform}
 & \Pr[\zeta < k] \approx  \Pr[Y < k] 
  = \Pr \Big[ \Big( \lfloor \lambda_2 \rfloor + \Pi_{\lambda - \lfloor \lambda_2 \rfloor} \Big) < k \Big], \nonumber \\
  & = \Pr[\Pi_{\lambda - \lfloor \lambda_2 \rfloor} < k-\lfloor \lambda_2 \rfloor] = \Pr[\Pi_{\lambda - \lfloor \lambda_2 \rfloor} < k-\lfloor \lambda_2 \rfloor-1], \nonumber \\
  &  + \frac{\lambda - \lfloor \lambda_2 \rfloor}{k-\lfloor \lambda_2 \rfloor-1}\Pr[\Pi_{\lambda - \lfloor \lambda_2 \rfloor} = k-\lfloor \lambda_2 \rfloor-2], 
%& \sum_{j<k}\frac{e^{-\lambda } \lambda ^j}{j!} =  \sum_{j<k-1}\frac{e^{-\lambda } \lambda ^j}{j!} + \frac{e^{-\lambda } \lambda ^{k-1}}{(k-1)!} \nonumber \\
  %& = \Pr[\zeta < k-1] +\frac{\lambda}{k-1}e^{-\lambda}\frac{\lambda ^{k-2}}{(k-2)!} \nonumber \\ 
  %&= \Pr[\zeta < k-1] + \frac{\lambda}{k-1}\Pr[\zeta = k-1] 
\end{align}
and the complexity of obtaining $\Pr(X_{\G,\t,\l} \geq k)$ remains the same. 

%Recall that as Equation~\ref{errpoisson} showed, the Poisson distribution is reliable as an approximation, if $\Pr(\mathcal{E}_i)$ are small and in addition, $c_{\t}$ is also small.

%When this is not true, we use other approximations as follows. 

We will now consider the scenario when $c_{\t}$ is large. In this case, the variance of $\zeta$ will be large. In the following, we show the use of Central Limit Theorem for this case.

\medskip
\noindent
\textbf{Central Limit Theorem.} An important theorem in statistics, Lyapunov's Central Limit Theorem (CLT) ~\cite{Lyapunov-Nouvelle} states that, given a set of random variables (not necessarily i.i.d.), their properly scaled sum converges to a normal distribution under certain conditions. 

If $c_\t$ and hence $\sigma^2$ are large, then by ~\cite{Lyapunov-Nouvelle}, 
$Z = \frac{1}{\sigma} \sum_{i=1}^{c_\t} (\mathcal{E}_i-\mu_i)$ has standard normal distribution, where $\mu_i = \Pr(\mathcal{E}_i)$.
To approximate $\Pr[\zeta \geq k] = \Pr[\sum_{i=1}^{c_\t}\mathcal{E}_i \geq k]$ using CLT we can subtract $\sum_{i=1}^{c_\t} \mu_i$ from the sum of $\mathcal{E}_i$'s and divide by $\sigma$. As a result, we have:
\begin{equation}\label{clt}
  \Pr\left[\sum_{i=1}^{c_\t}\mathcal{E}_i \geq k\right] =  \Pr\left[\frac{1}{\sigma} \sum_{i=1}^{c_\t} (\mathcal{E}_i - \mu_i) \geq \frac{1}{\sigma}\left(k- \sum_{i=1}^{c_\t} \mu_i\right)\right]
\end{equation}

Since $ Z = \frac{1}{\sigma} \sum_{i=1}^{c_\t} (\mathcal{E}_i-\mu_i)$ has standard normal distribution, we can find the maximum value of $k$ such that the right-hand side of Equation~\ref{clt} is at equal or greater than the threshold. Evaluation of each probability can be done in constant time. 
%This is because $Z$ has standard normal distribution (mean zero and variance one) and as such, we can utilize precomputed tail probability tables for it. 
Thus, finding the maximum value of $k$ can be done in $O(c_\t)$ time.

\medskip
\noindent
\looseness=-1
\textbf{Binomial Distribution.}
In many %real-world 
networks, edge probabilities  are close to each other and as a result, for each triangle $\t$, $\Pr(\mathcal{E}_i)$'s are also close to each other. In that case, the distribution of support of the triangle $\t$ can be well approximated by Binomial distribution. A random variable $X$ is said to have Binomial distribution with parameters $p$ and $n$, if the probability mass function of $X$ is given by~\cite{papoulis2002probability}:
\begin{equation}
     \Pr[X=k] = \binom{n}{k} p^k(1-p)^{(n- k)}.
\end{equation}

In the above equation, $p$ is success probability, and $n$ is the number of experiments. 
In statistics, the sum of non-identically distributed and independent Bernoulli random variables can be approximated by the Binomial distribution \cite{ehm1991binomial}. 
As discussed in \cite{ehm1991binomial}, the Binomial distribution provides a good approximation, if its variance is close to the variance of $\zeta$.
%if the ratio of the variance of two distributions tends to $1$. 
For the approximation, we set $n=c_{\t}$ and $n\cdot p=\mu$.

%, if the expected value of the two distributions are set to be equal to each other. 
%Since $\zeta$ is the sum of $\mathcal{E}_i$'s, which are independent and have Bernoulli distribution, we can approximate the distribution of $\zeta$ by Binomial distribution. 
%Let $\mathcal{B}(n,p)$ be the binomial distribution with parameters $n$ and $p$. Given a triangle $\t$, let $\mathcal{L}(\zeta)$ be the distribution of $\zeta$. We set $n_x = c_\t$, and $ n_xp = \sum_{i=1}^{c_\t} \big(\Pr(\mathcal{E}_i) \big)$, where $n_xp$ is the expected value of $\mathcal{B}(n_x,p)$. Thus, we have $\zeta \sim \mathcal{B}(c_\t,p)$, where 
%$$ p = (c_\t)^{-1} \sum_{i=1}^{c_\t} \Pr(\mathcal{E}_i).$$

We observe that tail probabilities for the Binomial distribution can be calculated inexpensively as these probabilities satisfy the following well-known recursive relation 
\begin{align}\label{binomform}
  \Pr[\zeta = k]  = \frac{(n-k+1)p}{k(1-p)}\Pr[\zeta = k-1]. 
\end{align}
Using Equation \ref{binomform}, and iterating over values of $k$ from $1$ up to $c_\t$, we can evaluate $\Pr[\zeta \geq k]$ in $O(1)$ time, and find the maximum $k$ such that $\Pr(\t) \cdot \Pr[\zeta \geq k] \geq \theta$. Thus, the time complexity of obtaining probabilistic support for a triangle $\t$ in this case is $O(c_\t)$.

\nop{
Given two integers $a$, $b$, and a real number $y \in [0,1]$,  the regularized incomplete beta function $I_y(a,b)$ is defined as follows~\cite{press1993incomplete}:
\begin{equation}\label{regbeta}
    I_y(a,b) = \sum_{i=a}^{a+b-1}\binom{a+b-1}{i}y^i(1-y)^{a+b-1-i},
\end{equation}

We use the following corollary to approximate $\Pr[\zeta \geq l]$ in terms of the regularized incomplete beta function.
\begin{cor}\label{corbeta}
\Pr$[\zeta \geq l] \sim  I_p(l,c_\t-l+1)$,
where $p = (c_\t)^{-1} \sum_{i=1}^{c_\t} \big(\Pr(\mathcal{E}_i) \big)$.
\begin{IEEEproof}
\normalfont Since $\zeta$ has Binomial distribution, thus we have,
\begin{equation}\label{corebetaequ}
   \Pr[\zeta \geq l] = \sum_{j=l}^{c_\t}\Pr[\zeta = l]  \sim \sum_{j=l}^{c_\t} \binom{c_\t}{j}p^j(1-p)^{c_\t-j}
\end{equation}

Let $a = l$, and $b= c_\t -l +1$, comparing Equation~\ref{corebetaequ} and Equation~\ref{regbeta}, the right-hand side of Equation~\ref{corebetaequ} is equal to $I_p(l,c_\t-l+1)$. The corollary follows.
\end{IEEEproof}
\end{cor}

Using the table of incomplete beta functions \cite{pearson1934tables}, each $I_p$ can be obtained in constant time. Thus, to find 
maximum $l$ such that $Pr(\t) \cdot I_p(l,c_\t-l+1) \geq \theta$  we need to perform logarithmic search (in the number of $4$-cliques containing $\t$) over the values of $I_p$. This takes $O(\log c_\t)$ time.
}

\medskip
\noindent
{\bf Summary.}
We compute $\Pr(X_{\G,\t,\l} \geq k)$ using 
%the above approximations using 
the following set of conditions based on four thresholds $A,B,C,D$. 
\begin{enumerate}
    \item If $c_{\t}$ is large ($c_{\t} \geq A)$, the {\em CLT} approximation is used.
    \item If (1) does not hold, then if $c_{\t}$ and $\Pr(\mathcal{E}_i)$'s are small ($c_{\t}$ $< B$ and $\Pr(\mathcal{E}_i)'s < C$), the \textit{Poisson} approximation is used.
    \item If (1) and (2) do not hold, then if $\sum_{i=1}^{c_{\t}} \big(\Pr(\mathcal{E}_i) \big)^2 > 1$, the \textit{Translated Poisson} approximation is used.
    \item If (1), (2), and (3) do not hold, then if the ratio of the variance of $\zeta$ to the variance of the Binomial distribution with $n=c_{\t}$ and $p=\mu/n$ is close to $1$ (e.g. not less than a number $D$), the {\em Binomial} approximation is used. 
    \item Otherwise, \textit{Dynamic Programming} is used.
\end{enumerate}

%\begin{enumerate}
%    \item If $c_{\t}$ is large; ($c_{\t} \geq 200)$, \textit{Central Limit Theorem} is used.
%    \item If (1) does not hold, then if $c_{\t}$ and $\Pr(\mathcal{E}_i)$'s are small; ($c_{\t}$ $< 100$ and $\Pr(\mathcal{E}_i)'s < 0.25$), \textit{Poisson} approximation is used.
%    \item If (1) and (2) do not hold, then if $\sum_{i=1}^{c_{\t}} \big(\Pr(\mathcal{E}_i) \big)^2 > 1$, \textit{Translated Poisson} approximation is used.
%    \item If (1), (2), and (3) do not hold, the \textit{Binomial} approximation is used, if the ratio of the variance of $\zeta$ and the variance of the discussed Binomial distribution is close to $1$ (e.g. $<0.9$). 
%    \item Otherwise, \textit{Dynamic Programming} can be used.
%\end{enumerate}

\looseness=-1
\color{black}{For selecting the thresholds we refer to the literature in statistics. In particular, CLT gives a good approximation if the number (for our problem $c_{\t})$ of random variables in the sum is at least $30$ (\cite{mukhopadhyay2000probability}, p. 547). In fact, we set our threshold $A=200$ to much higher than what is suggested by the literature. 
Also, regarding Poisson distribution, the existence probability (for our problem $\Pr(\mathcal{E}_i$)'s) of the indicator random variables in the sum should be less than $0.25$ (see~\cite{le1960approximation}). So, we set $C = 0.25$. 
We set $B$ to be half of $A$ so that it is considerably far from $A$ (threshold on $c_{\t}$). 
%Note that the value for $B$ can be also obtained based on the error bound in Equation~\ref{errpoisson}. This equation gives an error-bound on the total variation distance between the Poisson approximation and the exact distribution. Smaller values of $B$, lead to smaller values for this error. Nevertheless our choice of $B$ is good enough in practice.
%
%The only parameter which needs further tuning is $D$. To choose the value for $D$, according to~\cite{ehm1991binomial}, the value of $D$ should be close to $1$. We start with an initial value equal to $0.5$ for $D$. We do the computation using DP and Binomial distribution to obtain $k$, for the triangles satisfying the condition on $D$, but not the conditions (1)-(3). We obtain the average difference between the exact and approximate value for all the selected triangles. If the average error is less than $0.01$, we set the guess as the value for $D$. Otherwise, we increase this value. 
We set $D=0.9$ which is close enough to 1.  
%
%Though desirable, setting $D$ to precisely $1$ may not always be possible as this implies that the distribution of the sum of the indicator random variables is exactly Binomial which happens only if $\Pr(\mathcal{E}_i$)'s are equal to each other. Thus, $D=0.9$ is a natural choice which is very close to $1$.  
}

\looseness=-1
When using $A=200$, $B=100$, $C=0.25$, $D=0.9$, we observed that the results of computing $\Pr(X_{\G,\t,\l} \geq k)$ 
using an approximation are practically indistinguishable from the solution of dynamic programming. 
%\textcolor{blue}{We use the discussed distributions only if they satisfy the related condition, with the obtained hyperparameters. When the condition is satisfied for each distribution, they provide good approximations. We verify this in Section~\ref{accuracy}.}  
Furthermore, as we observed in our experiments, falling back to dynamic programming in point~(5) happens only in a small amount of cases, i.e. most triangles in real world networks satisfy one of the earlier conditions (1)-(4). This means we can avoid dynamic programming for most of the triangles. 
%in such networks.  

\section{Global and Weakly-Global Nucleus Decomposition}
\label{globalnucleus}

\looseness=-1
In this section, we propose algorithms for computing global and weakly-global nucleus decomposition. 
Given a graph $\H$, computing $\Pr(X_{\H,\t,\g} \geq k)$ and $\Pr(X_{\H,\t,\w} \geq k)$ requires finding all the possible worlds of $\H$, which  are in total $2^{|E(\H)|}$, where $E(\H)$ is the number of edges in $\H$.
This %%%computation 
is prohibitive. 
Thus, we use Monte Carlo sampling to estimate the probabilities, denoted by $\hat{\Pr}(X_{\H,\t,\g} \geq k)$ and $\hat{\Pr}(X_{\H,\t,\w} \geq k)$. The following lemma states a special version of the Hoeffding’s inequality~\cite{hoeffding1994probability} that provides the minimum number of samples required to obtain an unbiased estimate.
\begin{lem}\label{Mont}
Let $Y_1, \cdots, Y_{n}$ be independent random variables bounded in the interval $[0,1]$. Also, let $\bar{Y}=\frac{1}{n}\sum_{i=1}^{n}Y_i$. Then, we have that
\begin{equation}
    \Pr\left[|\bar{Y}-\mathbb{E}[\bar{Y}]| \geq \epsilon\right] \leq 2 e^{-2n\epsilon^2}.
\end{equation}
In other words, for any $\epsilon, \delta \in (0,1]$,  $\Pr\left[|\bar{Y}-\mathbb{E}[\bar{Y}]| \geq \epsilon \right] \leq \delta$, if $n\geq \left\lceil \frac{1}{2\epsilon^2}\ln\left(\frac{2}{\delta}\right)\right\rceil$.
\end{lem}

Based on the above, using Monte Carlo sampling, we can obtain an estimate of $\Pr(X_{\H,\t,\g} \geq k)$, and $\Pr(X_{\H,\t,\w} \geq k)$ for any subgraph $\H$ by sampling $n$ possible worlds of $\H$, $\left \{ H_1,\cdots,H_{n} \right \}$, where $n= \left \lceil \frac{1}{2\epsilon^2}\ln\left(\frac{2}{\delta}\right) \right \rceil$, $\epsilon$ is an error bound, and $\delta$ is a probability guarantee. In particular, we have:
\begin{equation}\label{estimglob}
  \hat{\Pr}(X_{\H,\t,\mu} \geq k) = \sum_{i=1}^{n} \mathbbm{1}_{\mu}(H_i,\t,k)/n,   
\end{equation}
where $\mu = \g$ or $\w$, and the indicator function $\mathbbm{1}_{\mu}(H_i,\t,k)$ is given in Definition~\ref{lwg}.
Based on Lemma~\ref{Mont}, what we obtain is an unbiased estimate. Thus, setting $\mu = \g,\w$, we have 
\begin{equation}
\Pr\left [ \left | \Pr\left(X_{\H,\t,\mu} \geq k\right) - \hat{\Pr}\left(X_{\H,\t,\mu} \geq k \right) \right | \geq \epsilon \right ] \leq \delta.
\end{equation}
%\textcolor{blue}{In fact, according to Lemma~\ref{Mont} as long as the minimum number of samples is provided, we can get strong quality guarantees while estimating $\Pr(X_{\H,\bigtriangleup,\mu} \geq k)$, where $\mu = \g, \w$.}

\smallskip
\noindent
$\g$-$\pmb{(k,\theta)}$\textbf{-nucleus.}
In what follows, we propose an algorithm for finding all $\g$-$(k,\theta)$-nuclei for different values of $k = 1, \ldots , k_{\max}$, where $k_{\max}$ is the largest value for which the local nucleus is non-empty. 
This is because we extract global nuclei from local ones since every $\g$-$(k,\theta)$-nucleus 
is part of an $\l$-$(k,\theta)$-nucleus.
The main steps of our proposed algorithm are given in Algorithm~\ref{global1}. 

Given a positive integer $k$, threshold $\theta$, error-bound $\epsilon$, and confidence level $\delta$, the algorithm starts by creating subgraph $\C_{\color{black}{k}}$ as the union of all $\l$-$(k,\theta)$-nuclei (line~\ref{originalcand}).  
Then, the algorithm incrementally builds a candidate 
$\g$-$(k,\theta)$-nucleus $\H$ as follows. 
For each triangle $\t$ in $\C_{\color{black}{k}}$, it adds to $\H$ all the 4-cliques in $\C_{\color{black}{k}}$ containing $\t$ (line~\ref{createfirst}). By this process other triangles $\t'$ could potentially be added to $\H$ such that the number of 4-cliques containing $\t'$ is less than $k$. In order to remedy this, the algorithm adds all the 4-cliques of $\C_{\color{black}{k}}$ containing $\t'$ to $\H$.   
This process continues until all the triangles in $\H$ are contained in at least $k$ $4$-cliques  (lines~\ref{whileglob1}-\ref{whileglob2}). 
Once the candidate graph $\H$ is obtained, $n$ samples of possible worlds of $\H$ are obtained (line~\ref{samplepossible}). Then, the algorithm checks if the condition $\hat{\Pr}(X_{\H,\t,\g} \geq k) \geq \theta$ is satisfied for each triangle $\t$ in $\H$ (lines~\ref{ifglob1}-\ref{ifglob2}). At the end, the algorithm returns all $\g$-$(k,\theta)$-nuclei %$\text{GB}_1$-$(k,\theta)$-$(3,4)$-nucleus 
$\H$ (line~\ref{globcond1}-\ref{globcond2}), \color{black}{for all the possible values of $k$}.

\begin{algorithm}
%\caption{$\text{GB}_1$-$(k,\theta)$-$(3,4)$-Nucleus Decomposition}
\caption{\color{black}{$\g$}-{\sl NuDecomp}}
\label{global1}
\begin{algorithmic}[1] 
\Function{\color{black}$\g$\_Nucleus}{$\color{black}\G,$ $\color{black}\theta$, $\color{black}\epsilon$, $\color{black}\delta$}
\State{\text{solution} $\gets \left \{  \right \}$}
\ForAll{\color{black}{$\ k \gets 1$ \textbf{to} $k_{\max}$}}
\State{$\C_{\color{black}{k}} \gets$ the union of $\l$-$(k,\theta)$-nuclei by Algorithm~\ref{peelingnuclei}}\label{originalcand}
\ForAll{$\t \in \C_{\color{black}{k}}$}
\State{$\H\gets$ all 4-cliques in $\C_{\color{black}{k}}$ containing $\t$} \label{createfirst}
\While {$\exists \t' \in \H$ with less than $k$ 4-cliques $ \in \H$ \par \hskip\algorithmicindent containing it} \label{whileglob1}
\State{add all $4$-cliques of $\C_{\color{black}{k}}$ containing $\t'$ to $\H$} \label{whileglob2}
\EndWhile
\State{\textit{condition\_hold} $\gets \textbf{true}$}
\State{$\textit{sample} \gets \left \{ H_1,\cdots,H_{n} \right \}$}\label{samplepossible}
\ForAll{$\t \in \H$} $\hat{\Pr}(X_{\H,\t,\g} \geq k) \gets \text{Eq}.(\ref{estimglob})$ \label{ifglob1}
\If{$\hat{\Pr}(X_{\H,\t,\g} \geq k) < \theta$} 
\State{\textit{condition\_hold} $\gets \textbf{false}$} \label{ifglob2}
%\state{\textit{condition\_hold} $\gets \textbf{false}$}\label{ifglob2}
\State{\textbf{break}}
\EndIf
\EndFor
\If{\textit{condition\_hold} $ == \textbf{true}$}\label{globcond1}
\State{$ \text{solution} \gets \text{solution} \cup \H$}
\EndIf
\EndFor
\EndFor
\State{\textbf{return} \text{solution}} \label{globcond2}
\EndFunction
\end{algorithmic}
\end{algorithm}
%\vspace{-0.5cm}
\begin{algorithm}
\caption{\color{black}{$\w$}-{\sl NuDecomp}}
\label{global2}
\begin{algorithmic}[1] 
\Function{\color{black}$\w$\_Nucleus}{$\color{black}\G,$ $\color{black}\theta$, $\color{black}\epsilon$, $\color{black}\delta$}
\State{\text{solution} $\gets \{ \}$}
%\State{$\mathcal{N} \gets $All $\l$-$(k,\theta)$-nuclei by Algorithm~\ref{peelingnuclei}}\label{originalcand}
\ForAll{\color{black}{$\ k \gets 1$ \textbf{to} $k_{\max}$}}
\ForAll{$\l$-$(k,\theta)$ $\H$}
\State{\textit{global\_score}[$\t$] $\gets 0$ \text{for each $\t \in \H$}} \label{globstp1}
\State{$\textit{sample} \gets \left \{ H_1,\cdots , H_{n} \right \}$} \label{globstp2}
\ForAll{$H \in \textit{sample}$} \label{globstp3}
\State{ $H' \gets$ $k$-nucleus of $H$} \label{globstp4}
\ForAll{triangle $\t \in H'$} \label{globstp5}
\State{\textit{global\_score}[$\t$] $++$  %$\gets $ \textit{global\_score}[$\t$]$+1$
} 
\label{globstp6}
\EndFor
\EndFor
\ForAll{$\t \in \H$} \label{globstp7}
\State{$\hat{\Pr}(X_{{H},\t,\w} \geq k) \gets \textit{global\_score}[\t]/n$ } \label{globstp8}
\EndFor
\State{solution $ \gets \text{solution } \cup$ connected union of $\t$'s \par \hskip\algorithmicindent with $\hat{\Pr}(X_{\H,\t,\w} \geq k) \geq \theta$} \label{globstp9}
%\State{solution $\gets \text{solution} \cup \H$}
\EndFor
\EndFor
\State{\textbf{return} solution} \label{globstp10}
\EndFunction
\end{algorithmic}
\end{algorithm}

\noindent
\textbf{$\w$-$(k,\theta)$-nucleus.} In what follows, we propose an algorithm for finding all $w$-$(k,\theta)$-nuclei, \color{black}{for different values of $k = 1, \ldots , k_{\max}$, where $k_{\max}$ is as before.} 
We begin by noting that each $\w$-$(k,\theta)$-nucleus is an $\l$-$(k,\theta)$-nucleus. 
The steps of our proposed algorithm are given in Algorithm~\ref{global2}. 

\smallskip
\noindent
We use array \textit{global\_score} to store the number of deterministic $k$-nuclei that each triangle belongs to. The array is initialized to zero for all the triangles in the candidate graph (line~\ref{globstp1}). 
For each candidate graph which is a $\l$-$(k,\theta)$-nucleus, we obtain 
the required number $n$ of possible worlds for the given $\epsilon$ and $\delta$.
Then, we perform deterministic nucleus decomposition on each world (lines~\ref{globstp2}-\ref{globstp4}). 
If triangle $\t$ is in a deterministic $k$-nucleus of that possible world, the corresponding index of $\t$ in array \textit{global\_score} is incremented by one (lines~\ref{globstp5}-\ref{globstp6}). In line~\ref{globstp8}, the approximate value $\hat{\Pr}(X_{\H,\t,\w} \geq k)$ is obtained for each triangle. Then, we start creating the connected components $\H$ using triangles with $\hat{\Pr}(X_{\H,\t,\w} \geq k) \geq \theta$ (line~\ref{globstp9}). 
At the end, the algorithm returns all $\w$-$(k,\theta)$-nuclei, \color{black}{for all the possible values of $k$}.

\looseness=-1
\noindent
{\bf Remark.} Both of these algorithms run in polynomial time. They compute the correct answer provided the estimation of the threshold probabilities using Monte-Carlo sampling is close to the true value. If not, they give approximate solutions. 

\noindent
\looseness=-1
\color{black}{\textbf{Space Complexity.} For global and weakly global decompositions the space needed is $O(T_{\mathcal{G}} + m\cdot n)$, where $m$ is the number of edges in $\H$ and $n$ is the number of possible worlds for $\H$ we sample.}
%Of course, the number of edges in $\H$ could be as big as the number of edges in $\G$, but this is almost never the case because $\H$ is a local nucleus, i.e. only a small part of $\G$ in practice. 

\color{black}{From a theoretical point of view, $n$, the number of samples, is constant for fixed values of $\epsilon$ and $\delta$, and since $m$, number of edges, is absorbed by $T_{\G}$, we can say that the above complexity is again $O(T_{\G})$, i.e. same as the space complexity for deterministic nucleus.}

\looseness=-1
\color{black}{From a practical point of view, for each sample graph (possible world) we use a bit array to record whether an edge exists in the sample or not. 
%Each edge takes $1$ bit in each sample graph $H_j$, where $1 \leq j \leq n$. 
For practical values of $\epsilon$ and $\delta$, $m\cdot n$ is about $200\cdot m$ bits, which is $200/(32+32) \sim 3$ times more than the space needed to store the edges as adjacency lists (assuming an integer node id is 32 bits, and the graph is undirected, i.e. each edge is represented as two directed edges). 
In other words, to store the $n$ possible worlds we only need about three times more space 
than what is needed to store $\G$.}

%\vspace{-0.2cm}
\section{Experiments}
\label{experiments}

We present our extensive experimental results to test the efficiency, effectiveness, and accuracy of our proposed algorithms. Our implementations are in Java and the experiments are conducted on a commodity machine with Intel i7, 2.2Ghz CPU, and 12Gb RAM, running Ubuntu 18.04.

\smallskip
\looseness=-1
\noindent
{\bf Datasets and Experimental Framework.}
Statistics for our datasets are in Table~\ref{Dataset Statistics}. 
%For each graph $\G$, in addition to the number of vertices $|V|$, number of edges $|E|$, and maximum degree $d_{\max}$, we also report the number of triangles $|\t|$ and average edge probability.
%, which plays an important role in the performance of nucleus decomposition algorithms for probabilistic graphs. 
%Furthermore, we report the average edge probability for each dataset we consider.
We order the datasets based on the number of triangles they contain. 
Datasets with real probabilities are \textit{flickr}, \textit{dblp}, and \textit{biomine} from  \cite{bonchi2014core,potamias2010k}
and 
\textit{krogan} from \cite{krogan2006global}.

%%%\textit{flickr} is a popular online community for sharing photos. Nodes are users in the network, and the probability of an edge between two users is obtained based on the Jaccard coefficient of the interest groups of the two users \cite{bonchi2014core,potamias2010k}.

%%%\textit{DBLP} comes from the well-known bibliography website. Nodes correspond to authors, and there is an edge between two authors if they co-authored at least one publication. The existence probability of each edge is measured based on an exponential function of the number of collaborations between two users \cite{bonchi2014core,potamias2010k}.
%The last two, \textit{biomine} and \textit{krogan}, are protein interaction networks. 

%%%\textit{biomine} and \textit{krogan} contain biological interactions between proteins. The probability of an edge represents the confidence level that the interaction actually exists \cite{bonchi2014core,krogan2006global,potamias2010k}.

\looseness=-1
We also consider datasets \textit{ljournal-2008} and \textit{pokec}. %and \textit{orkut} 
%%%which are social networks. 
\textit{ljournal-2008} is obtained from 
Laboratory of Web Algorithmics (\url{http://law.di.unimi.it/datasets.php}) and \textit{pokec} %the other two datasets \textit{pokec} and \textit{orkut} are 
is from the Stanford Network Analysis Project (\url{http://snap.stanford.edu}). 
For these networks, we generated edge probabilities uniformly distributed in $(0,1]$.

\begin{table}[t]%[htbp]%[!ht]
{\small
\centering
\begin{tabular}{lrrrrr} \hline 
\small Graph &           \small $|V|$ & \small $|E|$ & $\text{d}_{\max}$ & \small $p_{avg}$ & \small $ \left | \t \right |$  \\ \hline \hline
\small krogan 		& \small 2,708		& \small 7,123 & \small 141 & \small 0.68 & \small 6,968\\
\small dblp			& \small 684,911	& \small 2,284,991 & \small 611 & \small 0.26 & \small 4,582,169\\
\small flickr		& \small 24,125		& \small 300,836 & \small 546 & \small 0.13 & \small 8,857,038\\
\small pokec & \small 1,632,803 & \small 22,301,964 & \small 14,854 & \small 0.50 & \small 32,557,458 \\
\small biomine		& \small 1,008,201	& \small 6,722,503 & \small 139,624 & 0.27 & \small 93,716,868\\ 
\small ljournal-2008	& \small 5,363,260	& \small 49,514,271 & \small 19,432 & 0.50 & \small 411,155,444 \\ \hline
\end{tabular}
%\vspace{-0.2cm}
%\caption{\textbf{Dataset Statistics}}
\caption{Dataset Statistics}
\label{Dataset Statistics}
\vspace{-0.5cm}
}
\end{table}

\looseness=-1
We evaluate our algorithms on three important aspects. First is the efficiency. For this, we report the running time of our algorithms in Subsection \ref{efficiency}. Second is the accuracy and closeness of our approximation methods. We discuss this in Subsection \ref{accuracy}. Third is the quality of the output nucleus as measured by \textit{density} and \textit{probabilistic clustering coefficient} which are discussed in Subsection~\ref{density}. \color{black}{Finally, in Subsection~\ref{casestudy} we show the usefulness of our nucleus definitions over probabilistic graphs by presenting a detailed use-case.} 
\begin{figure*}
    \centering
     \subfloat{ %font=\fontsize{4.5}{5}\selectfont
    \begin{tikzpicture}
\begin{axis}[width=3.3cm,height=3cm,
%legend pos=north east,
%legend entries={PN\_DP,PN\_AP},
%enlargelimits=0.15,
%enlarge y limits  = 0.4,
%legend style={draw=none, %at={(1,1)},
%anchor= north east,
%nodes={scale=0.6}},
%legend pos=north west,
xtick pos=left,
ytick pos=left,
    title={\textbf{krogan}},
    %every axis title/.style={below right,at={(0.3,0.98)}}, 
    %xlabel={\textbf{Threshold ($\theta$)}},
    xlabel style={font=\fontsize{7}{7}\selectfont},
  symbolic x coords={0.1,0.2,0.3,0.4,0.5},
    ytick distance =0.03,
    xtick={0.1,0.2,0.3,0.4,0.5},
     ylabel style={font=\fontsize{1}{1}\selectfont},
    ylabel={\textbf{Time (s)}},
    ticklabel style = {font=\fontsize{7}{6}\selectfont},
    ylabel near ticks,
    xticklabel style={rotate=-45},
    %ytick={0,2,4,6,8,10,12,14,16,18,20},
    %ymajorgrids=true,
    %grid style=dashed,
   % legend entries={DP,AP},
    %legend style={draw=none,nodes={scale=0.7}},
]

\addplot[
    color=blue,
    mark=square,
    ]
    coordinates {
    (0.1,0.47)(0.2,0.44)(0.3,0.43)(0.4,0.38)(0.5,0.34)
    };
    %\addlegendentry{PN\_DP}
    \addplot[
    color=red,
    mark=*,
    ]
    coordinates {
    (0.1,0.47)(0.2,0.42)(0.3,0.38)(0.4,0.37)(0.5,0.34)
    };
    %\addlegendentry{PN\_AP}
    
   % \legend{PN\_DP,PN\_AP,draw=none, at={(1,1)}}
\end{axis}
\end{tikzpicture}
    }
   % \hspace{0.099cm}
    \subfloat{ %font=\fontsize{4.5}{5}\selectfont
    \begin{tikzpicture}
\begin{axis}[width=3.3cm,height=3cm,
%legend pos=north east,
%legend entries={PN\_DP,PN\_AP},
%enlargelimits=0.15,
%enlarge y limits  = 0.4,
%legend style={draw=none, %at={(1,1)},
%anchor= north east,
%nodes={scale=0.6}},
%legend pos=north west,
xtick pos=left,
ytick pos=left,
    title={\textbf{dblp}},
    %every axis title/.style={below right,at={(0.3,0.98)}}, 
    %xlabel={\textbf{Threshold ($\theta$)}},
     xlabel style={font=\fontsize{7}{7}\selectfont},
    symbolic x coords={0.1,0.2,0.3,0.4,0.5},
    xticklabel style={rotate=-45},
   % xmin=0, xmax=0.6,
   % ymin=1, ymax=21,
   %ymin=8,
   %ymax=25,
    %ytick distance =0.1,
    xtick={0.1,0.2,0.3,0.4,0.5},
    ylabel near ticks,
    %ytick={0,2,4,6,8,10,12,14,16,18,20},
    %ymajorgrids=true,
    %grid style=dashed,
    ticklabel style = {font=\fontsize{7}{6}\selectfont},
    legend style={draw=none,nodes={scale=0.7}},
]

\addplot[
    color=blue,
    mark=square,
    ]
    coordinates {
    (0.1,25)(0.2,13.08)(0.3,10.77)(0.4,10.1)(0.5,9.8)
    };
    %\addlegendentry{PN\_DP}
    \addplot[
    color=red,
    mark=*,
    ]
    coordinates {
    (0.1,17)(0.2,10.73)(0.3,9.65)(0.4,9.45)(0.5,9.1)
    };
    %\addlegendentry{PN\_AP}
    
   % \legend{PN\_DP,PN\_AP,draw=none, at={(1,1)}}
\end{axis}
\end{tikzpicture}
    }
   % \hspace{0.099cm}
    \subfloat{ %font=\fontsize{4.5}{5}\selectfont
    \begin{tikzpicture}
\begin{axis}[width=3.3cm,height=3cm,
%legend pos=north east,
%legend entries={PN\_DP,PN\_AP},
%enlargelimits=0.15,
%enlarge y limits  = 0.4,
%legend style={draw=none, %at={(1,1)},
%anchor= north east,
%nodes={scale=0.6}},
%legend pos=north west,
xtick pos=left,
ytick pos=left,
    title={\textbf{flickr}},
    %every axis title/.style={below right,at={(0.3,0.98)}}, 
    %xlabel={\textbf{Threshold ($\theta$)}},
     xlabel style={font=\fontsize{7}{7}\selectfont},
   symbolic x coords={0.1,0.2,0.3,0.4,0.5},
    xticklabel style={rotate=-45},
    ytick distance =3,
    xtick={0.1,0.2,0.3,0.4,0.5},
    ylabel near ticks,
    %ytick={0,2,4,6,8,10,12,14,16,18,20},
    %ymajorgrids=true,
    %grid style=dashed,
    %legend entries={DP,AP},
    ticklabel style = {font=\fontsize{7}{6}\selectfont},
    legend style={draw=none,nodes={scale=0.7}},
]

\addplot[
    color=blue,
    mark=square,
    ]
    coordinates {
    (0.1,16.036)(0.2,12.437)(0.3,10.932)(0.4,9.775)(0.5,9.182)
    };
    %\addlegendentry{PN\_DP}
    \addplot[
    color=red,
    mark=*,
    ]
    coordinates {
    (0.1,13.291)(0.2,10.464)(0.3,10.063)(0.4,8.983)(0.5,8.902)
    };
    %\addlegendentry{PN\_AP}
    
   % \legend{PN\_DP,PN\_AP,draw=none, at={(1,1)}}
\end{axis}
\end{tikzpicture}
    }
%\hspace{0.099cm}
%\\
%\hspace{0.3cm}
\subfloat{ %font=\fontsize{4.5}{5}\selectfont
    \begin{tikzpicture}
\begin{axis}[width=3.3cm,height=3cm,
%legend pos=north east,
%legend entries={PN\_DP,PN\_AP},
%enlargelimits=0.15,
%enlarge y limits  = 0.4,
%legend style={draw=none, %at={(1,1)},
%anchor= north east,
%nodes={scale=0.6}},
%legend pos=north west,
xtick pos=left,
ytick pos=left,
   title={\textbf{pokec}},
     xlabel style={font=\fontsize{7}{7}\selectfont},
    %every axis title/.style={below right,at={(0.3,0.98)}}, 
   % xlabel={\textbf{Threshold ($\theta$)}},
   % xmin=0, xmax=0.6,
    %yticklabel={
    %$10^{\the\numexpr\ticknum+2}$
   % },
   ymode=log,
  % ymin=100,
   %ymax=1000,
   % ymin=1, ymax=21,
  % ymin=500,
%  ymin=10,
 %  ymax=10000,
   % ytick distance =0.1,
    symbolic x coords={0.1,0.2,0.3,0.4,0.5},
    xticklabel style={rotate=-45},
    ticklabel style = {font=\fontsize{7}{6}\selectfont},
    xtick={0.1,0.2,0.3,0.4,0.5},
    ylabel near ticks,
    %ytick={0,2,4,6,8,10,12,14,16,18,20},
    %ymajorgrids=true,
    %grid style=dashed,
   %legend entries={DP,AP},
    legend style={draw=none,nodes={scale=0.7}},
]

\addplot[
    color=blue,
    mark=square,
    ]
    coordinates {
    (0.1,672)(0.2,400)(0.3,298)(0.4,261)(0.5,238)
    };
    %\addlegendentry{PN\_DP}
    \addplot[
    color=red,
    mark=*,
    ]
    coordinates {
   (0.1,474)(0.2,336)(0.3,271)(0.4,230)(0.5,226)
    };
    %\addlegendentry{PN\_AP}
    
   % \legend{PN\_DP,PN\_AP,draw=none, at={(1,1)}}
\end{axis}
\end{tikzpicture}
   }
  % \hspace{0.099cm}
\subfloat{ %font=\fontsize{4.5}{5}\selectfont
    \begin{tikzpicture}
\begin{axis}[width=3.3cm,height=3cm,
%legend pos=north east,
%legend entries={PN\_DP,PN\_AP},
%enlargelimits=0.15,
%enlarge y limits  = 1e1,
%legend style={draw=none, %at={(1,1)},
%anchor= north east,
%nodes={scale=0.6}},
%legend pos=north west,
xtick pos=left,
ytick pos=left,
   title={\textbf{biomine}},
    %every axis title/.style={below right,at={(0.3,0.98)}}, 
   % xlabel={\textbf{Threshold ($\theta$)}},
    %xmin=0, xmax=0.6,
    ymode=log,
     xlabel style={font=\fontsize{7}{7}\selectfont},
   % ymin=1, ymax=21,
  % ymin=1e1, ymax=1e4,
  % ymin=500,
%  ymin=10,
 %  ymax=10000,
 % ytick distance =1e1,
 %ymin=100,
 %ymax=10^4,
 %yticklabel={
    %$10^{\the\numexpr\ticknum+2.5}$
   % },
 %enlarge y limits  = 0.1,
  symbolic x coords={0.1,0.2,0.3,0.4,0.5},
    xticklabel style={rotate=-45},
    ticklabel style = {font=\fontsize{7}{6}\selectfont},
    xtick={0.1,0.2,0.3,0.4,0.5},
    ylabel near ticks,
    %ytick={0,2,4,6,8,10,12,14,16,18,20},
    %ymajorgrids=true,
    %grid style=dashed,
  % legend entries={DP,AP},
    legend style={draw=none,nodes={scale=0.7}},
]

\addplot[
    color=blue,
    mark=square,
    ]
    coordinates {
    (0.1,1098.03)(0.2,1005.31)(0.3,939.734)(0.4,878.526)(0.5,770.46)
    };
    %\addlegendentry{PN\_DP}
    \addplot[
    color=red,
    mark=*,
    ]
    coordinates {
   (0.1,838.548)(0.2,771)(0.3,776)(0.4,747.957)(0.5,753)
    };
    %\addlegendentry{PN\_AP}
    
   % \legend{PN\_DP,PN\_AP,draw=none, at={(1,1)}}
\end{axis}
\end{tikzpicture}
   }
   % \hspace{0.099cm}
    \subfloat{ %font=\fontsize{4.5}{5}\selectfont
    \begin{tikzpicture}
\begin{axis}[width=3.3cm,height=3cm, declare function={myinf=10^5;},ymode=log,
ymin=1000,ymax=10^5,
    yticklabel={
    %\ifnum\ticknum=3
    %Inf
    %\else
    $10^{\the\numexpr\ticknum+2}$
    %\fi
    },
legend pos=outer north east,
%legend entries={PN\_DP,PN\_AP},
%enlargelimits=0.15,
%enlarge y limits  = 0.1,
%legend style={draw=none, %at={(1,1)},
%anchor= north east,
%nodes={scale=0.6}},
%legend pos=north west,
xtick pos=left,
ytick pos=left,
   title={\textbf{ljournal}},
    %every axis title/.style={below right,at={(0.3,0.98)}}, 
   % xlabel={\textbf{Threshold ($\theta$)}},
    %xmin=0, xmax=0.6,
    %ymode=log,
   % ymin=1, ymax=21,
  % ymin=1e1, ymax=1e4,
   %ymin=100,
%  ymin=10,
 %  ymax=10000,
 % ytick distance =1e1,
 %xlabel={\textbf{Threshold ($\theta$)}},
  xlabel style={font=\fontsize{7}{7}\selectfont},
   %xmin=0,
   %xmax=0.6,
    %symbolic x coords={0.1,0.2,0.3,0.4,0.5},
    xticklabel style={rotate=-45},
     xtick={0.1,0.2,0.3,0.4,0.5},
    ticklabel style = {font=\fontsize{7}{6}\selectfont},
   %xtick distance = 0.1,
    %ytick={0,10e3,10e4,10e5,10e6},
    %yticklabels={$TTR_{\pgfmathprintnumber[precision=0]{\tick}}$},
    %ytick = {0.1,0.2,0.3,0.4,0.5},
   %yticklabels={0,$10^3$,$10^4$,$10^5$,Inf},
  % ymin=100,ymax=10^8,
  % yticklabel={\ifnum\ticknum=4
  %  Inf
   % \else
   % $10^{\the\numexpr\ticknum+2}$
    %\fi},
  % yticklabel={Inf},
    %symbolic y coords={0,$10^3$,$10^4$,$10^5$,Inf},
   % ymin={[normalized]0},
    %symbolic y coords={0,$10^3$,$10^4$,$10^5$},
   %ytickmax={Inf},
    %yticklabelmax=Inf,
    %ytick={10^3,10^4,10^5},
    %yticklabels={$10^3$,$10^4$,Inf},
    %symbolic y coords = {Inf},
    %yticklabel={Inf},
    %ytick={0.1,10,100,1000,10000},
    %ylabel={\textbf{Time (s)}},
    ylabel near ticks,
    %ytick={0,2,4,6,8,10,12,14,16,18,20},
    %ymajorgrids=true,
    %grid style=dashed,
    %yticklabel={$TTR_{\pgfmathprintnumber[precision=0]{\tick}}$},
   legend entries={\textbf{DP},\textbf{AP}},
    legend style={font=\fontsize{7.5}{6}\selectfont,draw=none,nodes={scale=0.6}},
]

\addplot[
    color=blue,
    mark=square,
    ]
    coordinates {
    %(0.1,myinf) 
    (0.2,36820) (0.3,9322)(0.4,3064)(0.5,1624)
   %(0.1,myinf) %(0.3,myinf) (0.5,10^3)
    };
   % \addlegendentry{PN\_DP}
    \addplot[
    color=red,
    mark=*,
    ]
    coordinates {
   
  (0.1,29786) (0.2,7214) (0.3,3322) (0.4,1864)(0.5,1390)
    };
    %\addlegendentry{PN\_AP}
    
   % \legend{PN\_DP,PN\_AP,draw=none, at={(1,1)}}
\end{axis}
\end{tikzpicture}
   }
   \vspace{-0.1cm}
    \caption{\textbf{Run times of DP and AP for varying $\theta$ (x axis). Both perform well on medium datasets. For bigger datasets, biomine and ljournal, the difference is more pronounced. For ljournal, for $\theta=0.1$, it is only AP that can complete within one day. }}
    \label{runningtime}
    \vspace{-0.2cm}
\end{figure*}
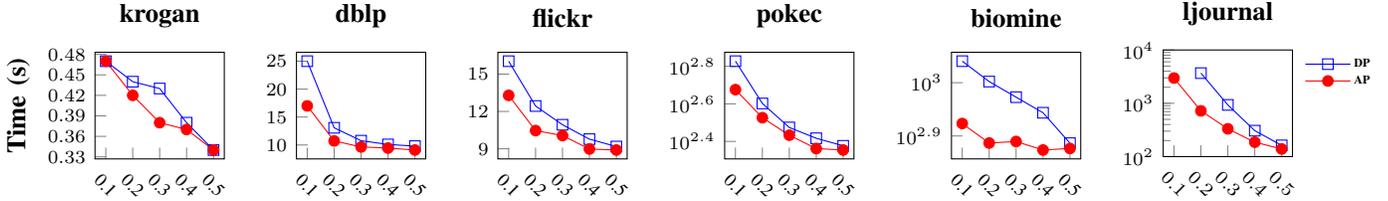

\subsection{Efficiency Evaluation}\label{efficiency}
In this section, we report the running times of our proposed algorithms for local nucleus decomposition: one that uses dynamic programming and the other that uses statistical approximations for computing and updating the support of triangles. 
We denote them by DP and AP, respectively.
Next, we report the running times of our (fully) global and weakly-global nucleus decomposition algorithms, which we denote by FG and WG. We set error-bound $\epsilon = 0.1$ and confidence level $\delta = 0.1$. \color{black}{Based on these values and Lemma~\ref{Mont}, we set the number of samples to $n=200 > \left \lceil \frac{1}{2\epsilon^2}\ln\left(\frac{2}{\delta}\right) \right \rceil$ (i.e. greater than what is required by Hoeffding's inequality). As such, our results for global and weakly-global notions of nuclueus are approximate but come with strong quality guarantees. }

\looseness=-1
Running time results for DP and AP are shown in Figure~\ref{runningtime} for different values of $\theta$. Y-axis for the last 3 plots is in log-scale. 

\looseness=-1
Both algorithms perform well on medium-size datasets, \textit{dblp} and \textit{flickr}; computing the nucleus decomposition of these two graphs takes less than $1$ sec. For a larger-size dataset, \textit{pokec}, both algorithms complete in less than $10$ min. 
Note that AP clearly outperforms DP on large-size datasets such as \textit{biomine} and \textit{ljournal-2008} for small values of $\theta$. 
For instance, for \textit{ljournal-2008} with $\theta=0.1$, 
it is only AP that can run to completion, whereas 
DP could not complete after one day. 
Nevertheless, both DP and AP are able to run in reasonable time for all the other cases, which is good considering that nucleus decomposition is a harder problem than core and truss decomposition.

In general, the running times of both DP and AP decrease significantly as $\theta$ increases. 
This is because the number of triangles which, 
(a) exist with probability greater than $\theta$ and 
(b) have a support at least $k$ again with probability greater than $\theta$, decreases. 
As can be seen, AP is faster than DP on all datasets for different values of $\theta$. 
In addition to the \textit{ljournal-2008} case for which only AP could complete, when $\theta=0.1$, the gain of AP over DP is about $24\%$ and $30\%$ for \textit{biomine} and \textit{pokec}, respectively.

{\color{black}
For speed-up evaluation of AP vs. DP we added two more datasets. The statistics of these datasets are given in Table~\ref{Dataset Statistics respon}.
The first dataset is \textit{enwiki-2013}. 
What is special about this dataset is that its maximum initial nucleus score is $2,813$, which is much larger than in other graphs we consider. 
We set $\theta=0.1$; 
when $\theta$ is small, more triangles can have enough probability to be part of a much larger number of $4$-cliques.  This can cause too much work for DP to compute nucleus scores and update these values when the triangles are being processed in the peeling step. 
For this dataset, DP was not able to complete the computation within one week. In contrast, AP completed in about 80K sec (less than a day). 

The other additional dataset we considered is \textit{itwiki-2013}. The maximum initial nucleus score in this dataset is $1,866$. 
In this graph, using the same $\theta=0.1$, DP needs about 40h, whereas AP $16h$, i.e. AP is 2.5 times faster than DP. 

Moreover, we ran DP and AP on  \textit{biomine} with $\theta = 0.01$. DP took about 37.5h, whereas AP 2.5h, thus being 15 times faster than DP.
}

\begin{table}[hp]
 \color{black}
{\small
\centering
\begin{tabular}{lrrrrr} \hline 
\small Graph &           \small $|V|$ & \small $|E|$ & $\text{d}_{\max}$ & \small $p_{avg}$ & \small $ \left | \t \right |$  \\ \hline \hline
\small enwiki-2013 		& \small 4,206,785		& \small 91,939,728 & \small 432,260 & \small 0.5 & \small 304,083,160\\
\small itwiki-2013 & \small 1,016,867 & \small 23,429,644 & \small 91,517 & \small 0.5 & 89,901,299 \\
\hline
\end{tabular}
%\vspace{-0.2cm}
%\caption{\textbf{Dataset Statistics}}
\caption{\color{black}{Additional datasets. $\pmb{|V|}$, $\pmb{|E|}$, $\pmb{d_{\max}}$, $\pmb{p_{avg}}$, $\pmb{\bigtriangleup}$, are number of vertices, edges, maximum degree, average edge probability, and number of triangles in the graph, respectively.} }
\label{Dataset Statistics respon}
}
\end{table}

%Alex commented out
%\textcolor{blue}{\textbf{Remark.} When AP is used for local nucleus decomposition, once the values for hyperparameters are decided, the choice of distribution becomes deterministic. Hence, the output of AP only depends on the (fixed) probability mass function of the chosen distribution. As such, the result will not vary in different runs. }

\nop{
\textcolor{blue}{Moreover, in terms of scalability with a larger dataset, we consider \textit{enwiki-2013} which has about 92 million edges and 1 billion triangles.  
This is considerably larger than \textit{ljournal-2008} and also larger than the largest dataset in \cite{sariyuce2017nucleus} which computes nucleus decomposition in deterministic graphs.\footnote{\cite{sariyuce2017nucleus} indicates that the metric best correlated with running time is $\sum_{v} T_v \cdot d(v)$, where $T_v$ is the number of triangles containing vertex $v$, and $d(v)$ is the degree of $v$. 
For \textit{enwiki-2013}, the value of this metric is $7,190$ billion which is much larger than the one reported for all the datasets in~\cite{sariyuce2017nucleus}.
} 
We note that for \textit{enwiki-2013} we were only able to run experiments for our local version of nucleus decomposition as the other two versions (global and weakly-global) require more than a day. 
Nevertheless, we note that the largest dataset we handle for global and weakly-global nucleus decomposition, \textit{ljournal-2008}, is again larger than the largest dataset in~\cite{sariyuce2017nucleus}. 
In the following table, we show the running time for \textit{enwiki-2013}, with different thresholds.}

\begin{tabular}{l l}
\hline
\color{black}AP ($\theta=0.1/0.2/0.3/0.4$) & \color{black} DP ($\theta=0.1/0.2/0.3/0.4$)  \\
\color{black}80000/ 52000/44000/23500 & \color{black} N.P./62400/44600/24100 \\
\hline
\end{tabular}

\textcolor{blue}{As can bee seen AP is faster than DP, particularly for small values of $\theta$. For instance, for $\theta=0.1$, the maximum initial probabilistic triangle support (degree) is $2813$ which results in AP to run to completion, while DP cannot. The running time for DP improves for larger thresholds $\theta$. This is because the number of triangles which (1) exist with probability $\geq \theta$ decreases considerably, and (2) have support (degree) at least $k$ with probability $\geq \theta$ decrease. As a result, the decomposition is done on a much smaller graph.}
}

%The slow performance of DP shows that DP is more sensitive to the number of triangles in the input graph than AP.

We report the running time of FG and WG in Figure~\ref{runtglob} \color{black}{along with the running time of local (denoted by $L$ in the figure) nucleus decomposition for $\theta =0.1$ (which as explained above is more difficult than $\theta =0.2, \ldots, 0.5$). Note that the global and weakly-global nuclei are obtained from the local ones using Algorithms~\ref{global1} and ~\ref{global2}. Therefore, their running time includes the time required for obtaining local nuclei. For local decomposition, we use DP to obtain the probabilistic support of the triangles, except for \textit{ljournal-2008} for which we use AP since DP does not scale for this threshold. Also, we report running times averaged across 5 runs, since the solutions of FG and WG depend on the random sampling steps. 
%in Algorithms~\ref{global1} and \ref{global2}.
} 
%As mentioned in Section~\ref{secpnuclei}, FG and WG introduce stronger definitions of probabilistic nucleus decomposition. At the same time, nucleus decomposition is a stricter notion of a dense subgraph compared to core and truss decomposition which makes the requirements of FG and WG challenging to satisfy. 
%As such, we can only expect non-empty results for $\theta$-values which are lower than those we used for the local version as well as those used in previous works for core and truss decomposition. 
%Namely we use $\theta = 0.001$, and report the running time based on this threshold. 

\looseness=-1
%In addition, since the number of \g-$(k,\theta)$-nuclei subgraphs are huge, for large datasets, \textit{biomine}, \textit{pokec}, and \textit{ljournal-2008}, which have millions of triangles, we report the running time of FG based on the partial results that the algorithm produces. 
In general WG is faster than FG. This is because WG performs deterministic nucleus decomposition only on a fixed number of sample graphs while FG does the decomposition every time that a candidate graph is detected. \color{black}{We also note that as the graph becomes larger, WG will have to perform nucleus decomposition on larger sample graphs leading to increased running time. For FG, usually candidate graphs are small even for large graphs. So, when the graph becomes lager, the runtime of WG increases more compared to FG.}

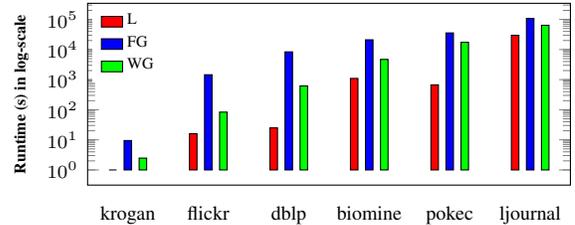
\begin{figure}[htbp]
    \centering
    \begin{tikzpicture}
%[width=6.3cm, xlabel near ticks, ylabel near ticks, name = plot4, title={$\eta = 0.3$}, every axis title/.style={below right,at={(0.3,0.98)}}, legend style={draw=yellow, at={(0.98,0.98)},anchor=north east,nodes={scale=0.5}},xmode=normal,ymode=log,legend entries={dblp, flickr, biomine, cnr-2000, arabic-2005, uk-2005, ljournal-2008, twitter-2010},
%reverse legend, xlabel=Iteration,%ytick align=outside
  %  ticklabel style = {font=\tiny},nodes near coords align=horizontal]
\begin{axis}[
ybar=0.1cm, 
width=8cm,
height=4cm,
bar width = 0.1cm,
legend style={font=\fontsize{7.5}{6}\selectfont,%font=\small, 
draw = none, %at={(0.85,0.999)},anchor=north east,nodes={scale=0.8}}, legend cell align={left},
at={(0.01,0.999)},anchor=north west,nodes={scale=0.8}}, legend cell align={left},
%\addlegendentry{Probabilistic truss decomposition by \\ $h$-index based updating,Probabilistic truss decomposition by edge peeling},
%legend entries={Probabilistic truss decomposition by $h$-index based updating,Probabilistic truss decomposition by edge peeling},
legend entries={L,FG,WG},
%enlargelimits=0.15,
%enlarge y limits  = 0.4,
%enlarge x limits = 0.3,
%ymax=100000,
ymin = 0,
ytick align=inside,
ytick distance =10,
xtick style={draw=none},
%legend style={at={(0.5,-0.2)},
%anchor=north,legend columns=-1},
%reverse legend,el pos = near coords,
ymode=log,
%ymax = 100000,
%ticklabel style = {font=\tiny},
ylabel={\textbf{Runtime (s) in log-scale}},
ylabel near ticks,
ylabel style={font=\fontsize{6}{6}\selectfont},
symbolic x coords={krogan, flickr,dblp,biomine,pokec,ljournal},
xtick=data,
xlabel style = {font=\fontsize{6.5}{6}\selectfont},
x tick label style={/pgf/number format/1000 sep=},
%ticklabel style = {font=\tiny},
ticklabel style = {font=\fontsize{7.5}{6}\selectfont},
%nodes near coords,
every node near coord/.append style={align=left, rotate=60, anchor=west,font=\fontsize{6}{6}\selectfont},
point meta=rawy,
%nodes near coords align={horizontal},
%nodes near coords style={rotate=90},
%x tick label style={rotate=45,anchor=east},
]
%\addlegendentry{Probabilistic truss}
\addplot[fill = red] coordinates {
(krogan,1) (flickr, 16) (dblp, 25) (biomine,1098) (pokec,672) (ljournal,29786)
%(krogan,8.899) (flickr, 1476) (dblp,10076)
%
};
\addplot[fill = blue] coordinates {
%(krogan,9) (flickr, 1438.424) (dblp, 8344.635) (biomine,20000) (pokec,35000) (ljournal,78000)
(krogan,9.47) (flickr, 1454.424) (dblp, 8369.635) (biomine,21098) (pokec,35672) (ljournal,107786)
%(krogan,8.899) (flickr, 1476) (dblp,10076)
%
};
\addplot[fill = green] coordinates {
(krogan,2.47) (flickr, 83.364) (dblp,619.046) (biomine, 4708.541) (pokec, 17373.848) (ljournal,63927)
% (krogan,2) (flickr, 76) (dblp, 602) (biomine,4492) (pokec,16895) (ljournal-2008,34978)
};
%\legend{Probabilistic truss decomposition by $h$-index\_based updating, Probabilistic truss decomposition by edge peeling } 
\end{axis}
\end{tikzpicture}
\vspace{-0.2cm}
\caption{
Run time of L, FG, and WG. FG and WG include the time for L. WG is faster because it performs deterministic decomposition only on a fixed number of sample graphs while FG does so each time a candidate graph is discovered.}
\label{runtglob}
\vspace{-0.2cm}
\end{figure}

{
\color{black}
Moreover, we compare the running time of nucleus decomposition algorithms, local, weakly-global, and global, on \textit{biomine} with $\theta=0.1$ and $\theta=0.01$ in Figure~\ref{runtimebiomine}. 
For the local decomposition (L) we used DP because we are interested in the relative difference in running time for the different nucleus notions and L is the initial step for computing WG and FG. 

\looseness=-1
When $\theta$ decreases, running times increase since more triangles can have enough probability to be contained in a local nucleus subgraph. 
In terms of the size of the results, Table~\ref{biominenucleussize} shows the average number of vertices and edges for the L, WG, and FG subgraphs aggregated over all $k \in [1,k_{\max}]$.
%For L with $\theta=0.1,0.01$, we obtain subgraphs of different $k$ values, ranging from $1,\cdots,k_{\max}$, where $k_{\max}$ is equal to $18$ and $22$, for $\theta=0.1$ and $\theta=0.01$, respectively. %Thus, we compute the average over the selected values $k=18,10,1$. 
%For WG and G, the average is taken over all possible values of $k$, where $k$ can be $1,2,3$. 
In general, the average values increase as we decrease threshold. This is due to the fact that by decreasing $\theta$ more triangles can have enough probability to be contained in $4$-cliques. 
}
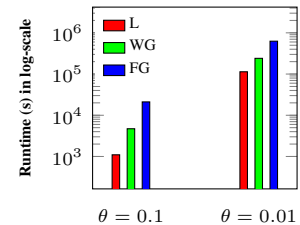
\begin{figure}[htbp]
    \centering
    \begin{tikzpicture}
%[width=6.3cm, xlabel near ticks, ylabel near ticks, name = plot4, title={$\eta = 0.3$}, every axis title/.style={below right,at={(0.3,0.98)}}, legend style={draw=yellow, at={(0.98,0.98)},anchor=north east,nodes={scale=0.5}},xmode=normal,ymode=log,legend entries={dblp, flickr, biomine, cnr-2000, arabic-2005, uk-2005, ljournal-2008, twitter-2010},
%reverse legend, xlabel=Iteration,%ytick align=outside
  %  ticklabel style = {font=\tiny},nodes near coords align=horizontal]
\begin{axis}[
ybar=0.1cm, 
width=4.3cm,
height=4cm,
bar width = 0.1cm,
legend style={font=\fontsize{7.5}{6}\selectfont,%font=\small, 
draw = none, %at={(0.85,0.999)},anchor=north east,nodes={scale=0.8}}, legend cell align={left},
at={(0.01,0.999)},anchor=north west,nodes={scale=0.8}}, legend cell align={left},
%\addlegendentry{Probabilistic truss decomposition by \\ $h$-index based updating,Probabilistic truss decomposition by edge peeling},
%legend entries={Probabilistic truss decomposition by $h$-index based updating,Probabilistic truss decomposition by edge peeling},
legend entries={L,WG, FG},
enlargelimits=0.3,
enlarge y limits  = 0.3,
ymin = 0,
ytick align=inside,
ytick distance =10,
xtick style={draw=none},
%legend style={at={(0.5,-0.2)},
%anchor=north,legend columns=-1},
%reverse legend,el pos = near coords,
ymode=log,
%ymax = 100000,
%ticklabel style = {font=\tiny},
ylabel={\textbf{Runtime (s) in log-scale}},
ylabel near ticks,
ylabel style={font=\fontsize{6}{6}\selectfont},
symbolic x coords={$\theta=0.1$, $\theta=0.01$},
xtick=data,
xlabel style = {font=\fontsize{6.5}{6}\selectfont},
x tick label style={/pgf/number format/1000 sep=},
%ticklabel style = {font=\tiny},
ticklabel style = {font=\fontsize{7.5}{6}\selectfont},
%nodes near coords,
every node near coord/.append style={align=left, rotate=60, anchor=west,font=\fontsize{6}{6}\selectfont},
point meta=rawy,
%nodes near coords align={horizontal},
%nodes near coords style={rotate=90},
%x tick label style={rotate=45,anchor=east},
]
%\addlegendentry{Probabilistic truss}
\addplot[fill = red] coordinates {
($\theta=0.1$,1098) ($\theta=0.01$, 113546)
%(krogan,8.899) (flickr, 1476) (dblp,10076)
%
};
\addplot[fill = green] coordinates {
($\theta=0.1$,4708) ($\theta=0.01$, 240657)
%(krogan,8.899) (flickr, 1476) (dblp,10076)
%
};

\addplot[fill = blue] coordinates {
($\theta=0.1$,21098) ($\theta=0.01$, 627092)
%(krogan,8.899) (flickr, 1476) (dblp,10076)
%
};

%\legend{Probabilistic truss decomposition by $h$-index\_based updating, Probabilistic truss decomposition by edge peeling } 
\end{axis}
\end{tikzpicture}
\vspace{-0.2cm}
\caption{\color{black}Running time (in sec) of local (L), weakly-global (WG), and (fully) global (FG) decomposition on \textit{biomine} for $\theta=0.1$ and $\theta=0.01$.}
\label{runtimebiomine}
\end{figure}

%The running time for both algorithms increases as the graph becomes larger. In particular, both algorithms are more sensitive to the number of triangles in the input graph.
\begin{table}%[htbp]
\color{black}
  \centering
    \begin{tabular}{lrr}
    \hline
          & \multicolumn{1}{c}{$\theta = 0.1$} &  \\
    \midrule
    Model & \multicolumn{1}{l}{Number of Vertices} & \multicolumn{1}{l}{Number of edges} \\
    \midrule
    Local & 75    & 3455 \\
    Weakly-Global & 15      & 157 \\
    Global &   4    &  6\\
    \hline
          & \multicolumn{1}{c}{$\theta=0.01$} &  \\
    \midrule
    \midrule
    %Model & \multicolumn{1}{l}{Vertex Size} & \multicolumn{1}{l}{Edge Size} \\
    Local & 93   & 3785 \\
    Weakly-Global &  55     & 2332 \\
    Global & 5   & 10 \\
    \bottomrule
    \end{tabular}%
    \caption{\color{black}Average number of vertices and edges for local, weakly-global, and global nucleus subgraphs with $\theta$ equal to $0.1$ and $0.01$.}
  \label{biominenucleussize}%
\end{table}%

\nop{

\begin{figure}
    \centering
    \subfloat{
    \begin{tikzpicture}\label{figapprox1}
    \begin{axis}[width=3.4cm,height=3cm,
%legend pos=north east,
legend style={font=\fontsize{7.5}{6}\selectfont,%font=\small, 
draw = none, %at={(0.85,0.999)},anchor=north east,nodes={scale=0.8}}, legend cell align={left},
at={(0.99,0.999)},anchor=north east,nodes={scale=0.8}}, legend cell align={left},
%legend entries={Binomial, CLT, Poisson},
%enlargelimits=0.15,
%enlarge y limits  = 0.4,
%legend style={draw=none, %at={(1,1)},
%anchor= north east,
%nodes={scale=0.6}},
%legend pos=north west,
xtick pos=left,
ytick pos=left,
    title={$\textbf{a) }\pmb{\Pr(\mathcal{E}_i)}$\textbf{'s} $\pmb{ \in (0,0.1] }$},
    title style={font=\fontsize{6}{6}\selectfont},
   % symbolic x coords={25,50,100,150},
   symbolic x coords={25,50,100},
     xtick=data,
    %every axis title/.style={below right,at={(0.3,0.98)}}, 
    xlabel={$\pmb{c_{\t}}$},
     ylabel={\textbf{Relative Error}},
     ylabel near ticks,
     ylabel style={font=\fontsize{6.5}{6}\selectfont},
    xlabel style={font=\fontsize{6.5}{6}\selectfont},
    y tick label style={/pgf/number format/.cd,fixed,fixed zerofill=true,precision=2},
   ticklabel style = {font=\fontsize{7.5}{6}\selectfont},
    ytick distance =0.02,
    ytick={0,0.02,0.04,0.06,0.08,0.1},
    %ymajorgrids=true,
    %grid style=dashed,
   % legend entries={PN\_DP,PN\_AP},
   % legend style={draw=none,nodes={scale=0.7}},
  ]
  \addplot[color = red, only marks, mark=otimes*]
   coordinates{ % plot 1 data set
      (25,0)
      (50,0)
     (100,0.0011)
      %(150,0.0012)
     % (500,0.00037)
     % (1000,0.00033)
    };
    \addplot[color = green, only marks, mark=otimes*]
   coordinates{ % plot 1 data set
      (25,0.098)
      (50,0.0367)
      (100,0.0124)
      %(150,0.0115)
    %  (500,0.0039)
     % (1000,0.0019)
    };
    \addplot[color = blue, only marks, mark=otimes*]
   coordinates{ % plot 1 data set
      (25,0)
      (50,0.003)
      (100,0.0027)
      %(150,0.0032)
     % (500,0.0022)
      %(1000,0.0019)
    };
   % \addplot[color = black, only marks, mark=otimes*]
   %coordinates{ % plot 1 data set
    %  (25,0)
     % (50,0.003)
     % (100,0.0027)
      %(150,0.0032)
    %  (500,0.00037)
     % (1000,0.00018)
   % };
  \end{axis}
    \end{tikzpicture}
    }
    \subfloat{
     \begin{tikzpicture}\label{figapprox2}
    \begin{axis}[width=3.4cm,height=3cm, scaled y ticks=false,
%legend pos=north east,
legend style={font=\fontsize{7.5}{6}\selectfont,%font=\small, 
draw = none, %at={(0.85,0.999)},anchor=north east,nodes={scale=0.8}}, legend cell align={left},
at={(0.01,0.999)},anchor=north west,nodes={scale=0.8}}, legend cell align={left},
%legend entries={Poisson, Translated Poisson},
%enlargelimits=0.15,
%enlarge y limits  = 0.4,
%legend style={draw=none, %at={(1,1)},
%anchor= north east,
%nodes={scale=0.6}},
%legend pos=north west,
xtick pos=left,
ytick pos=left,
    title={$\textbf{b) } \pmb{c_\t = 50}$},
    title style={font=\fontsize{6}{6}\selectfont},
   % symbolic x coords={25,50,100,150},
  % symbolic x coords={25,50,100},
  symbolic x coords={0.1,0.25,0.5,1},
     xtick=data,
    %every axis title/.style={below right,at={(0.3,0.98)}}, 
    xlabel={$\textbf{Range for } \pmb{\Pr(\mathcal{E}_i)} \textbf{'s}$},
     ylabel near ticks,
     ylabel style={font=\fontsize{6.5}{6}\selectfont},
    xlabel style={font=\fontsize{6.5}{6}\selectfont},
    y tick label style={/pgf/number format/fixed, /pgf/number format/precision=2,/pgf/number format/fixed zerofill},
   % /pgf/number format/.cd,fixed,fixed zerofill=true,precision=2
    %ytick={0.01,0.02,0.03,0.04,1},
   %ymin=8,
   %ymax=25,
   %ylabel style={font=\fontsize{6.5}{6}\selectfont},
   ticklabel style = {font=\fontsize{7.5}{6}\selectfont},
   % ytick distance =0.02,
   % xtick={25,50,100,150,500,1000},
   % ytick={0.002,0.01,0.03,0.04},
   ytick={0,0.01,0.02,0.03,0.04,0.05,0.06,0.08,1},
    %ymajorgrids=true,
    %grid style=dashed,
   % legend entries={PN\_DP,PN\_AP},
   % legend style={draw=none,nodes={scale=0.7}},
  ]
  \addplot[color = blue, only marks, mark=otimes*]
   coordinates{ % plot 1 data set
      (0.1,0.003)
      (0.25,0.0091)
      (0.5,0.018)
      (1,0.0368)
    };
    \addplot[color = black, only marks, mark=otimes*]
   coordinates{ % plot 1 data set
      (0.1,0.003)
      (0.25,0.005)
      (0.5,0.002)
      (1,0.00294)
    };

  \end{axis}
    \end{tikzpicture}
    }
    \subfloat{
     \begin{tikzpicture}\label{figapprox3}
    \begin{axis}[width=3.4cm,height=3cm, scaled y ticks=false,
%legend pos=north east,
legend style={font=\fontsize{7.5}{6}\selectfont,%font=\small, 
draw = none, %at={(0.85,0.999)},anchor=north east,nodes={scale=0.8}}, legend cell align={left},
at={(0.99,0.999)},anchor=north east,nodes={scale=0.8}}, legend cell align={left},
%legend entries={Binomial},
%enlargelimits=0.15,
%enlarge y limits  = 0.4,
%legend style={draw=none, %at={(1,1)},
%anchor= north east,
%nodes={scale=0.6}},
%legend pos=north west,
xtick pos=left,
ytick pos=left,
    title={$\textbf{c) } \pmb{\Pr(\mathcal{E}_i)}\textbf{'s close to each other}$},
    title style={font=\fontsize{6}{6}\selectfont},
   % symbolic x coords={25,50,100,150},
  % symbolic x coords={25,50,100},
  symbolic x coords={25,50,100},
     xtick=data,
    %every axis title/.style={below right,at={(0.3,0.98)}}, 
    xlabel={$\pmb{c_\t}$},
     ylabel near ticks,
     ylabel style={font=\fontsize{6.5}{6}\selectfont},
    xlabel style={font=\fontsize{6.5}{6}\selectfont},
    y tick label style={/pgf/number format/fixed, /pgf/number format/precision=4,/pgf/number format/fixed zerofill},
   % /pgf/number format/.cd,fixed,fixed zerofill=true,precision=2
    %ytick={0.01,0.02,0.03,0.04,1},
   ymin=0.004,
   ymax=0.006,
   ticklabel style = {font=\fontsize{7.5}{6}\selectfont},
   % ytick distance =0.002,
   % xtick={25,50,100,150,500,1000},
    %ytick={0,0.004,0.005},
    %ymode = log,
   %ytick={0,0.01,0.02,0.03,0.04,0.05,0.06,0.08,1},
    %ymajorgrids=true,
    %grid style=dashed,
   % legend entries={PN\_DP,PN\_AP},
   % legend style={draw=none,nodes={scale=0.7}},
  ]
  \addplot[color = red, only marks, mark=otimes*]
   coordinates{ % plot 1 data set
      (25,0.0055)
      (50,0.0050)
      (100,0.0051)
      %(150,0.0043)
    };
  \end{axis}
    \end{tikzpicture}
    }
    \vspace{-0.2cm}
    \caption{Average relative error for different approximations versus %%%specific 
    conditions on $\pmb{c_\bigtriangleup}$ and range for $\Pr\pmb{(\mathcal{E}_i)}\text{'s}$. In the figure we have: $\color{red}\bullet$ Binomial, $\color{black}\bullet$ Poisson, $\color{green}\bullet$ CLT, $\color{black}\bullet$ Translated Poisson.}
    \label{figapproximationresults}
    \vspace{-0.3cm}
\end{figure}
}

% Table generated by Excel2LaTeX from sheet 'Sheet6'
\begin{table}%[htbp]
  \centering
  \small
    \begin{tabular}{lll}
    \toprule
    \multicolumn{1}{c}{\multirow{2}[4]{*}{Dataset}} & \multicolumn{1}{c}{Avg Error} & \% of  $\t$ with Error \\
\cmidrule{2-3}          & \multicolumn{2}{c}{$\theta=0.2$/$\theta=0.4$} \\
    \midrule
    krogan & 0.0524/0.0209 & 5.24/2.08 \\
    dblp  & 0.0069/0.0041 & 0.69/ 0.41 \\
    flickr & 0.0031/0.0 & 0.31/0.0 \\
    pokec & 0.0014/4.15e-5 & 0.14/0.004 \\
    biomine & 0.0/0.0 & 0.0/0.0 \\
    ljournal-2008 & 0.0179/0.0070 & 1.79/0.69 \\
    \bottomrule
    \end{tabular}%
    %\vspace{-0.2cm}
    \caption{\small Avg difference (error) of AP scores from true DP scores and pct's of triangles with error. Errors are very small.}
    \label{accuracytable}
    \vspace{-0.3cm}
\end{table}%

\subsection{Accuracy Evaluation}\label{accuracy}
To evaluate the accuracy of the AP algorithm, we compare the final nucleus scores obtained by DP and AP algorithms. We report the results in Table~\ref{accuracytable}. We show the results for $\theta$ equal to $0.2$ and $0.4$, since for the remaining values the error results do not differ significantly. The second column shows the average difference (error) from true value over the total number of triangles. The last column shows the percentage of triangles whose value is different from their exact value. 

As can be seen, the average error is quite small for all the datasets we consider. Particularly, for \textit{flickr} with $\theta=0.4$ and \textit{biomine} with $\theta=0.2$ and $\theta=0.4$ we have that AP computes nucleus decomposition with \textit{zero error}. Also, the percentage of triangles with an error score is very small, namely less than $1\%$ for all the datasets, except {\em krogan} and {\em ljournal-2008}. For these two, the percentages are still small, 5.24\% and 1.79\%, respectively. 
These results show that the output of AP is very close to that of exact computation by DP, and thus, AP is a reliable approximation methodology.

\nop{
\looseness=-1
\color{black}{
We also generate random edge probabilities for the \textit{pokec} dataset using two distributions, Pareto and Normal, in addition to Uniform. 
Table~\ref{diversedistribution} shows the average error and percentage of the triangles with error in their nucleus scores. The conclusion we draw from these results is that the errors in our estimations are quite small for different distributions. Moreover, the error results for Pareto and Normal distributions are not very different from the ones obtained for Uniform distribution, and as such, our statistical approximations are robust. 
We also note AP continues to be faster than DP for Pareto and Normal distributions. 
%albeit not as pronounced as in the case of Uniform distribution.
%Detailed running times are in \cite{esfahani2020nucleus}.
}

}

%Moreover, the error results for Pareto and Normal distributions are not very different from the ones obtained by Uniform distribution, 

%
%It should be noted that the average running times of AP over these threshold for \textit{pokec} with Normal and Pareto distributions are $406 s$ and $208s$, respectively. On the other hand, the average running time for DP is $497s$ and $213s$, respectively. Still AP is faster than DP. However, for \textit{pokec} with Pareto distribution the running times are close. This is because with Pareto distribution, edge probabilities are small which lead to the maximum nucleus score of $2$. In fact, the number of triangles which have support with probability greater than threshold decrease considerably, and decomposition is done on a very small graph which does not show the difference between the running times.

\nop{
\begin{table}%[htbp]
  \color{black}
  \centering
  %\caption{Add caption}
  \small
    \begin{tabular}{lll}
    Dataset & Avg Err. & \% of $\t$ with error \\
    \midrule
    pokec\_Normal & 0.015/0.022/0.003 & 1.58/2.2/0.3 \\
    pokec\_Pareto & 0.002/0.001/0.008 & 0.22/0.15/0.08 \\
    pokec\_Uniform & 0.016/0.0014/0.016 & 1.6/0.14/1.6 \\
    \bottomrule
    \end{tabular}%
    \caption{\small \color{black}{Average error and percentage of triangles with error for \textit{pokec} with three distributions, Normal, Pareto, Uniform. We set $\theta=0.1/0.2/0.3$. Errors are very small for all distributions.}}
    \vspace{-0.7cm}
    \label{diversedistribution}
\end{table}%
}

\begin{table*}[htbp]
{\small
  \centering
    \begin{tabular}{l c c c c c c c }
    \hline
    \small Graph &   $\theta$    &  $\left | V_{N} \right | / \left |  V_T \right |/\left | V_C \right |$    &   $\left |E_{N}  \right |/\left | E_T \right |  /\left | E_C \right | $    &   $k_{Nmax}/k_{Tmax}/k_{Cmax}$    &   $ \text{PD}_{N}/\text{PD}_T/\text{PD}_C$ &   $\text{PCC}_{N}/\text{PCC}_T/\text{PCC}_C$ & $ \color{black} \text{Time}_N/\text{Time}_T/\text{Time}_C$ \\
    \midrule
    %dblp  &    0.1   &    $19/34/115$   &    $171/561/6555$  &   $9/14/26$    & $0.785/0.611/0.264$   & $0.800/0.620/0.317$  \\
   \small dblp  &    0.1   &    $19/34/115$   &    $171/561/6555$  &   $9/14/26$    & $0.800/0.611/0.264$   & $0.790/0.620/0.317$ & $  \color{black} 25/100/15.86$  \\
    %dblp  &    0.3   &    $14/26/138$   &    $108/366/6693$  & $7/11/23$  & $0.888/0.698/0.277$ & $0.880/0.704/0.384$  \\
      \small dblp  &    0.3   &    $14/26/138$   &    $108/366/6693$  & $7/11/23$  & $0.9917/0.785/0.277$ & $0.9918/0.789/0.384$ & $  \color{black} 11/30/16.99$  \\
    \midrule
       \small pokec &    0.1   &  $13/72/288$     &   $121/1335/10592$    &   $3/8/27$    &    $0.678/0.341/0.129$   & $ 0.636/0.393/0.170$ & $ \color{black} 672/1162/4401$\\
       \small pokec &    0.3   &  $6/71/278$     &   $21/1031/10142$    &   $2/6/25$    &    $0.815/0.321/0.132$   & $0.793/0.406/0.172$ & $ \color{black} 298/980/4349$\\
    \midrule
       \small biomine &  0.1  &   $103/102/430$    &  $5231/5127/92200$     &   $18/33 /79$    &   $0.540/0.538 /0.211 $    & $0.540/0.538/0.217$ & $ \color{black} 1098/7642/5792$ \\
       \small biomine &  0.3   &    $7/102 /431 $   &   $23/5125/92625$    &   $2/28/73$    &    $0.714/0.538/0.212 $   & $0.701/0.539/ 0.218$ & $ \color{black} 939/1563/5685$\\
    \hline
   % ljournal-2208 &       &       &       &       &       & \\
   % \bottomrule
    \end{tabular}%
   % \vspace{-0.2cm}
    \caption{\textbf{Cohesiveness statistics of $\pmb{l}$-$\pmb{(k,\theta)}$-nucleus N, $\pmb{(k,\theta)}$-truss, T, and $\pmb{(k,\theta)}$-core, C on \textit{dblp}, \textit{pokec}, and \textit{biomine}. The number of vertices ($\pmb{\left | V_{N} \right | / \left |  V_T \right |/\left | V_C \right |}$), the number of edges ($\pmb{\left |E_{N}  \right |/\left | E_T \right |  /\left | E_C \right | }$), maximum nucleus/truss/core score ($\pmb{k_{Nmax}/k_{Tmax}/k_{Cmax}}$), the probabilistic density (PD$\pmb{_{N}/}$PD$\pmb{_{T}/}$PD$\pmb{_{C}/}$), and the probabilistic clustering coefficient (PCC$\pmb{_{N}/}$PCC$\pmb{_{T}/}$PCC$\pmb{_{C}/}$), respectively. 
    %\textcolor{blue}{The last column shows the time required for computing nucleus, truss, and core decomposition.}
    }
    }
  \label{densitycompare}%
  \vspace{-0.4cm}
  }
\end{table*}%
\begin{figure}
\centering
\subfloat{ %font=\fontsize{4.5}{5}\selectfont
\begin{tikzpicture}
%[width=6.3cm, xlabel near ticks, ylabel near ticks, name = plot4, title={$\eta = 0.3$}, every axis title/.style={below right,at={(0.3,0.98)}}, legend style={draw=yellow, at={(0.98,0.98)},anchor=north east,nodes={scale=0.5}},xmode=normal,ymode=log,legend entries={dblp, flickr, biomine, cnr-2000, arabic-2005, uk-2005, ljournal-2008, twitter-2010},
%reverse legend, xlabel=Iteration,%ytick align=outside
  %  ticklabel style = {font=\tiny},nodes near coords align=horizontal]
\begin{axis}[
ybar,%=0.2cm, 
width=4.5cm,
height=4cm,
bar width = 0.1cm,
%xtick pos=left,
%ytick pos=left,
%title={Average Density},
%every axis title/.style={below right,at={(0.3,0.98)}}, 
legend style={font=\fontsize{7.5}{6}\selectfont,%font=\small, 
draw = none, %at={(0.85,0.999)},anchor=north east,nodes={scale=0.8}}, legend cell align={left},
at={(0.01,0.999)},anchor=north west,nodes={scale=0.8}}, legend cell align={left},
%\addlegendentry{Probabilistic truss decomposition by \\ $h$-index based updating,Probabilistic truss decomposition by edge peeling},
%legend entries={Probabilistic truss decomposition by $h$-index based updating,Probabilistic truss decomposition by edge peeling},
legend entries={},
enlargelimits=0.15,
%enlarge y limits  = 0.4,
ymax=1,
ymin = 0.7,
ytick align=inside,
ytick distance =0.05,
xtick style={draw=none},
%legend style={at={(0.5,-0.2)},
%anchor=north,legend columns=-1},
%reverse legend,el pos = near coords,
%ymode=log,
%ymax = 100000,
%ticklabel style = {font=\tiny},
ylabel={\textbf{Average PD/PCC}},
ylabel near ticks,
ylabel style={font=\fontsize{6}{6}\selectfont},
symbolic x coords={1,5,10,%
%cnr-2000,
15},
xtick=data,
xlabel={\textbf{k}},
xlabel style = {font=\fontsize{6.5}{6}\selectfont},
%x tick label style={/pgf/number format/1000 sep=},
%ticklabel style = {font=\tiny},
ticklabel style = {font=\fontsize{7.5}{6}\selectfont},
legend style={font=\fontsize{7.5}{6}\selectfont,%font=\small, 
draw = none, %at={(0.85,0.999)},anchor=north east,nodes={scale=0.8}}, legend cell align={left},
at={(0.999,0.999)},anchor=north east,nodes={scale=0.7}}, legend cell align={left},
%\addlegendentry{Probabilistic truss decomposition by \\ $h$-index based updating,Probabilistic truss decomposition by edge peeling},
%legend entries={Probabilistic truss decomposition by $h$-index based updating,Probabilistic truss decomposition by edge peeling},
legend entries={\textbf{PD},\textbf{PCC}},
%nodes near coords,
%every node near coord/.append style={align=left, rotate=60, anchor=west,font=\fontsize{6}{6}\selectfont},
%point meta=rawy,
%nodes near coords align={horizontal},
%nodes near coords style={rotate=90},
%x tick label style={rotate=45,anchor=east},
]
\addplot[fill = red] coordinates { %PD
(1,0.80) (5,0.86) (10,0.92) (15,0.93)
%(cnr-2000,420.29) (ljournal-2008,28210.269)
%(biomine,17e2) %(cnr-2000,420)
%(ljournal-2008,28e3)
};
\addplot[fill = storeClusterComponent] coordinates { %PCC
(1,0.75) (5,0.85) (10,0.89) (15,0.9) 
%(cnr-2000,420.29) (ljournal-2008,28210.269)
%(biomine,17e2) %(cnr-2000,420)
%(ljournal-2008,28e3)
};
%\addlegendentry{Probabilistic truss}

%\legend{Probabilistic truss decomposition by $h$-index\_based updating, Probabilistic truss decomposition by edge peeling } 
\end{axis}
\end{tikzpicture}
}
\hspace{0.1cm}
\subfloat{ %font=\fontsize{4.5}{5}\selectfont
\begin{tikzpicture}
%[width=6.3cm, xlabel near ticks, ylabel near ticks, name = plot4, title={$\eta = 0.3$}, every axis title/.style={below right,at={(0.3,0.98)}}, legend style={draw=yellow, at={(0.98,0.98)},anchor=north east,nodes={scale=0.5}},xmode=normal,ymode=log,legend entries={dblp, flickr, biomine, cnr-2000, arabic-2005, uk-2005, ljournal-2008, twitter-2010},
%reverse legend, xlabel=Iteration,%ytick align=outside
  %  ticklabel style = {font=\tiny},nodes near coords align=horizontal]
\begin{axis}[
ybar,%=0.2cm, 
width=4.5cm,
height=4cm,
bar width = 0.1cm,
%xtick pos=left,
%ytick pos=left,
%title={Average Density},
%every axis title/.style={below right,at={(0.3,0.98)}}, 
legend style={font=\fontsize{7.5}{6}\selectfont,%font=\small, 
draw = none, %at={(0.85,0.999)},anchor=north east,nodes={scale=0.8}}, legend cell align={left},
at={(0.01,0.999)},anchor=north west,nodes={scale=0.7}}, legend cell align={left},
%\addlegendentry{Probabilistic truss decomposition by \\ $h$-index based updating,Probabilistic truss decomposition by edge peeling},
%legend entries={Probabilistic truss decomposition by $h$-index based updating,Probabilistic truss decomposition by edge peeling},
legend entries={},
enlargelimits=0.15,
enlarge y limits  = 0.004,
ymode = log,
ymax=1500,
ymin = 1,
ytick align=inside,
%ytick distance =10,
xtick style={draw=none},
%legend style={at={(0.5,-0.2)},
%anchor=north,legend columns=-1},
%reverse legend,el pos = near coords,
%ymode=log,
%ymax = 100000,
%ticklabel style = {font=\tiny},
ylabel={\textbf{Average \#of Nuclei/Edges}},
ylabel near ticks,
ylabel style={font=\fontsize{6}{6}\selectfont},
symbolic x coords={1,5,10,%
%cnr-2000,
15},
xtick=data,
xlabel={\textbf{k}},
xlabel style = {font=\fontsize{6.5}{6}\selectfont},
%x tick label style={/pgf/number format/1000 sep=},
%ticklabel style = {font=\tiny},
ticklabel style = {font=\fontsize{7.5}{6}\selectfont},
legend style={font=\fontsize{7.5}{6}\selectfont,%font=\small, 
draw = none, %at={(0.85,0.999)},anchor=north east,nodes={scale=0.8}}, legend cell align={left},
at={(0.999,0.999)},anchor=north east,nodes={scale=0.8}}, legend cell align={left},
%\addlegendentry{Probabilistic truss decomposition by \\ $h$-index based updating,Probabilistic truss decomposition by edge peeling},
%legend entries={Probabilistic truss decomposition by $h$-index based updating,Probabilistic truss decomposition by edge peeling},
legend entries={\textbf{\# of Edges},\textbf{\# of Nuclei}},
%nodes near coords,
%every node near coord/.append style={align=left, rotate=60, anchor=west,font=\fontsize{6}{6}\selectfont},
%point meta=rawy,
%nodes near coords align={horizontal},
%nodes near coords style={rotate=90},
%x tick label style={rotate=45,anchor=east},
]
\addplot[fill = dbscan] coordinates {  %edges
(1,504) (5,407) (10,229) (15,51)
%(1,51) (5,229) (10,407) (15,504) 
};
\addplot[fill = blue] coordinates { %nuclei
(1,210) (5,25) (10,9) (15,6) 
};
%\addlegendentry{Probabilistic truss}

%\legend{Probabilistic truss decomposition by $h$-index\_based updating, Probabilistic truss decomposition by edge peeling } 
\end{axis}
\end{tikzpicture}

}
\vspace{-0.2cm}
%\vspace{-0.3cm}
\caption{\textbf{Average PD and PPC, average number of edges, average number of $\l$-$\pmb{(k,\theta)}$-nuclei for \textit{flickr} with $\pmb{\theta=0.3}$.} }
\label{avedensflickr}
\vspace{-0.2cm}
\end{figure}
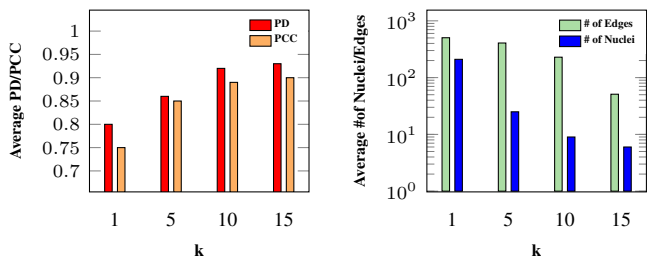

\balance
\subsection{Quality Evaluation of Nucleus Subgraphs}\label{density}

\looseness=-1
Here we compare the cohesiveness of $\l$-$(k,\theta)$-nucleus with the cohesiveness of local $(k,\gamma)$-truss~\cite{huang2016truss} and $(k,\eta)$-core~\cite{bonchi2014core}. We use two metrics. The first metric is the probabilistic density (PD) of a graph $\G$, which we denote by $\text{PD}(\G)$ and is defined as follows~\cite{huang2016truss}:
\begin{equation}
    \text{PD}(\G) = \frac{\sum_{e \in E} p(e)}{\frac{1}{2} \left | V \right | \cdot (\left | V \right |-1)}.
\end{equation}
\looseness=-1
In words, 
PD of a probabilistic graph is the ratio of the sum of edge probabilities to the possible number of edges in a graph. 

The second metric is probabilistic clustering coefficient (PCC). It measures the level of tendency of the nodes to cluster together. 
Given a probabilistic graph $\G$, its PCC is defined as follows~\cite{huang2016truss,pfeiffer2011methods}: 
\begin{equation}
    \text{PCC}(\G) =  \frac{3 \sum_{\t_{uvw} \in \G}p(u,v) \cdot p(v,w) \cdot p(u,w)}{\sum_{(u,v),(u,w), v \neq w }p(u,v) \cdot p(u,w)}.
\end{equation}

\looseness=-1
For probabilistic nucleus, probabilistic truss and probabilistic core subgraphs, we use the same threshold $\theta=\gamma=\eta$, set to 0.1 and 0.3. ($\gamma$ is used as threshold in the truss case \cite{huang2016truss}, and $\eta$ is used as threshold in the core case~\cite{bonchi2014core}).
%Also, we consider connectivity when outputting these subgraphs. 
Table~\ref{densitycompare} reports results on \textit{dblp}, \textit{pokec}, and \textit{biomine}. 
%We do not show 
Results for the other datasets 
%as the behavior for them is 
are similar. 
%to what we see for the datasets in the table.
For a given threshold, we report the statistics of local $(k_{Nmax},\theta)$-nucleus, $(k_{Tmax},\gamma)$-truss, and $(k_{Cmax},\eta)$-core, 
where $k_{Nmax}$, $k_{Tmax}$, and $k_{Cmax}$ are maximum nucleus, truss and core scores, respectively. Also, for $k_{Nmax}$, $k_{Tmax}$, and $k_{Cmax}$, we might obtain more than one connected component; we report the average statistics over such components. 
We denote 
by $V_N,V_T,V_C$, the sets of nodes,
by $E_N,E_T,E_C$, the sets of edges, 
by $PD_N,PD_T,PD_C$, the PD's
and by $PCC_N,PCC_T,PCC_C$, the PCC's
of nucleus, truss, and core components, respectively. \color{black}{The last column shows the running time for computing each decomposition.
We observe that sometimes nucleus decomposition is faster than truss decomposition. This is because in nucleus decomposition there could be fewer triangles that survive the specified threshold in terms of support than edges in truss decomposition. 
}

As can be seen in the table, $(k_{Nmax},\theta)$-nucleus produces high quality results in terms of PD and PCC. 
%In fact, since $(k_{Nmax}$,$\theta)$-nucleus is a stronger definition of cohesive subgraph, it removes the triangles with small probability, and as a result edges with small probability cannot be part of a nucleus subgraph. In addition, triangles should have enough probability to be a part of many $4$-cliques. Therefore, 
We achieve a significantly higher PD and PCC for nucleus compared to truss and core. 
For instance, for {\em dblp} the PD for nucleus is 0.8 versus 0.611 and 0.264 for truss and core, which translates for nucleus being about 30\% and 200\% more dense than truss and core. 
Similar conclusions can be drawn for PCC as well. %\textcolor{blue}{Moreover, as the running times show, in the nucleus decomposition algorithm, many triangles can be processed very fast since their degrees cannot be greater than the threshold.}

%Another interesting point is that since the numbers of edges and vertices in the nucleus subgraph is significantly smaller than those of truss and core, the probability that such subgraph exists in reality is high. Also, even for small values of $\theta=0.1$, $(k_{Nmax},\theta)$-nucleus is still a dense subgraph. Therefore, depending on the application threshold can be set to small values to obtained more number of vertices and edges for each component while maintaining density. 

\looseness=-1
Moreover, Figure~\ref{avedensflickr} reports the average PD, average PCC, average edges in each $\l$-$(k,\theta)$-nucleus, and number of connected components ($\l$-$(k,\theta)$-nuclei) for an example dataset \textit{flickr} with fixed $\theta=0.3$ and varying $k$. 
We see that even for small values of $k$, PD and PCC are considerably high (above 70-80\%). 
In general, PD and PCC become larger as $k$ increases, since denser nuclei will be detected by removing triangles having low support probability to be part of a $4$-clique. 
This causes the final subgraphs to have edges with high probability only. Furthermore, since $\l$-$(k,\theta)$-nucleus implies connectivity, the number of connected components increases as $k$ decreases. It results in an increase in the average number of edges in each $\l$-$(k,\theta)$-nucleus. 
%Alex commented out
%(see the last two sub-figures).  

Finally, we compare the PD and PCC values of 
$\g$-$(k,\theta)$-nucleus, $\w$-$(k,\theta)$-nucleus \color{black}{over 5 runs of these algorithms}, and 
$\l$-$(k,\theta)$-nucleus, for 
\textit{krogan}, \textit{flickr}, and \textit{dblp} datasets using $\theta = 0.001$, and averaging over all the possible values of $k$. 
%The maximum value of $k$ is equal to $3$.
The results are shown in Figure~\ref{densityglob}. 
We see that $\g$-$(k,\theta)$-nucleus achieves higher cohesiveness as expected. 
In addition, $\w$-$(k,\theta)$-nucleus exhibits quite good PD and PCC values higher than those for $\l$-$(k,\theta)$-nucleus.

\begin{table}%[htbp]
  \centering
  %\tiny
    \begin{tabular}{rrrrrrrrrrr}
    %\tiny
    $n$ & \multicolumn{2}{c}{AV(PD)} & \multicolumn{2}{c}{AV(PCC)} & \multicolumn{2}{c}{AV(Edge)} & \multicolumn{2}{c}{AV(Vertex)} & \multicolumn{1}{l}{$\epsilon$} & \multicolumn{1}{l}{$\delta$} \\
    \midrule
    \multicolumn{1}{r}{150} & .905 & .726 & .903 & .770 & 12.744 & 55.631 & 5.336 & 11.971 & 0.1   & 0.1 \\
    %\multicolumn{1}{r}{150} & .905584 & .725782 & .902748 & .769718 & 12.74418 & 55.63121 & 5.336294 & 11.97163 & 0.1   & 0.1 \\
    \multicolumn{1}{r}{300} & .906 & .733 & .903 & .773 & 12.725 & 52.960 & 5.334 & 11.543 & 0.07  & 0.05 \\
    %\multicolumn{1}{r}{300} & .906431 & .733021 & .903221 & .772635 & 12.72539 & 52.95973 & 5.334439 & 11.54362 & 0.07  & 0.05 \\
    \multicolumn{1}{r}{500} & .906 & .729 & .903 & .767 & 13.005 & 53.883 & 5.383 & 11.703 & 0.05  & 0.06 \\
    %    \multicolumn{1}{r}{500} & .905669 & .728858 & .902708 & .767049 & 13.00551 & 53.88276 & 5.383062 & 11.70345 & 0.05  & 0.06 \\
    \multicolumn{1}{r}{1000} & .905 & .725 & .902 & .766 & 12.823 & 53.772 & 5.356 & 11.745 & 0.05  & 0.01 \\
    %\multicolumn{1}{r}{1000} & .905166 & .725497 & .902250 & .766499 & 12.82319 & 53.77241 & 5.355856 & 11.74483 & 0.05  & 0.01 \\
    \multicolumn{1}{r}{2000} & .906 & .727 & .903 & .768 & 12.782 & 54.264 & 5.350 & 11.792 & 0.03  & 0.05 \\
    %\multicolumn{1}{r}{2000} & .905618 & .727499 & .902628 & .768399 & 12.78215 & 54.26389 & 5.349749 & 11.79167 & 0.03  & 0.05 \\
    \midrule
    AV & .906 & .728 & .903 & .769 & 12.816 & 54.102 & 5.352 & 11.751 &       &  \\
    %Average & .905694 & .728132 & .902711 & .768860 & 12.81608 & 54.10199 & 5.35188 & 11.75104 &       &  \\
    SD    & .0004 & .003 & .0003 & .002 &    .112   & .978 & .020 & .155 &       &  \\
    %SD    & .000458 & .003053 & .000347 & .002452 &    .112337   & .97803 & .01962 & .154625 &       &  \\
    \bottomrule
    \end{tabular}%
 \caption{\small Effect of sample size ($n$), $\epsilon$, and $\delta$ on different average metrics,  average PD, average PCC, average number of edges, and average number of vertices for global and weakly-global nuclei. The first and second columns for each metric are for global and weakly-global nuclei, respectively. The results shown here are on krogan with $\theta=0.1$. Observe that standard deviation (SD) is not more than 1.8\% of the average for all columns. For some of the columns SD is much smaller, e.g. for average PD (first column) it is only 0.05\%.}
 \vspace{-0.6cm}
  \label{epsilondelta}%
\end{table}%

\smallskip
\noindent
\textbf{Effect of $\epsilon$ and $\delta$.} We consider \textit{krogan} dataset with $\theta=0.1$. The choice of $\epsilon$ and $\delta$ influence the number $n$ of possible worlds we sample. For $\epsilon=0.1$ and $\delta=0.1$ we obtain $n=150$. In order to see the fidelity of our results, we experiment by increasing $n$ to higher values, namely $300, 500, 1000, 2000$. As the results in Table~\ref{epsilondelta} show, the following metrics about global and weakly-global nuclei: average PD, average PCC, average number of vertices, and average number of edges change very little.
Specifically, the first two metrics are dispersed by not more than $0.4\%$ around their mean over the different values of $n$, and the last two metrics are dispersed by not more than $1.8\%$.
There can be many $\epsilon$ and $\delta$ values corresponding to a given sample size; for illustration, for $n=150$, we can have $\epsilon=0.1$, $\delta=0.1$, whereas for $n=2000$, we can have $\epsilon=0.03$, $\delta=0.05$,
i.e. we see that even though in the latter case the $\epsilon$ and $\delta$ decrease by a factor of 3 and 2, respectively, still the nuclei results in terms of the aforementioned metrics are almost the same.
This validates the choice of $\epsilon$ and $\delta$ to 0.1 since lower values do not offer significant improvement in the quality of results.

%On \textit{krogan} dataset, $\w$-$(k,\theta)$-nucleus and $\l$-$(k,\theta)$-nucleus are similar in terms of PD and PCC with $\l$-$(k,\theta)$-nucleus having slightly higher values. However, for the remaining datasets, the $\l$-$(k,\theta)$-nucleus has the least cohesiveness values.  
\begin{figure}
    \centering
    \subfloat{
    \begin{tikzpicture}
%[width=6.3cm, xlabel near ticks, ylabel near ticks, name = plot4, title={$\eta = 0.3$}, every axis title/.style={below right,at={(0.3,0.98)}}, legend style={draw=yellow, at={(0.98,0.98)},anchor=north east,nodes={scale=0.5}},xmode=normal,ymode=log,legend entries={dblp, flickr, biomine, cnr-2000, arabic-2005, uk-2005, ljournal-2008, twitter-2010},
%reverse legend, xlabel=Iteration,%ytick align=outside
  %  ticklabel style = {font=\tiny},nodes near coords align=horizontal]
\begin{axis}[
ybar=0.1cm, 
width=4.3cm,
height=4.3cm,
bar width = 0.1cm,
legend style={font=\fontsize{6}{6}\selectfont,%font=\small, 
draw = none, %at={(0.85,0.999)},anchor=north east,nodes={scale=0.8}}, legend cell align={left},
at={(0.01,0.999)},anchor=north west,nodes={scale=0.8}}, legend cell align={left},
%\addlegendentry{Probabilistic truss decomposition by \\ $h$-index based updating,Probabilistic truss decomposition by edge peeling},
%legend entries={Probabilistic truss decomposition by $h$-index based updating,Probabilistic truss decomposition by edge peeling},
%legend entries={PN\_$\text{GB}_1$,PN\_$\text{GB}_2$, PN, PN},
%legend entries={{GB$_1$-$(k,\theta)$-nucleus},{GB$_2$-$(k,\theta)$-nucleus},{$(k,\theta)$-nucleus}},
legend entries={{$\g$-$(k,\theta)$-nucleus}, {$\w$-$(k,\theta)$-nucleus},{$\l$-$(k,\theta)$-nucleus}},
enlargelimits=0.3,
enlarge y limits  = 0.3,
ytick align=inside,
ytick distance =0.2,
xtick style={draw=none},
%legend style={at={(0.5,-0.2)},
%anchor=north,legend columns=-1},
%reverse legend,el pos = near coords,
%ymode=log,
ymin=0.4,
ymax = 1,
%ticklabel style = {font=\tiny},
ylabel={\textbf{PD}},
ylabel near ticks,
ylabel style={font=\fontsize{6.5}{6}\selectfont},
symbolic x coords={krogan, flickr,dblp},
xtick=data,
xlabel style = {font=\fontsize{6.5}{6}\selectfont},
x tick label style={/pgf/number format/1000 sep=},
%ticklabel style = {font=\tiny},
ticklabel style = {font=\fontsize{6}{6}\selectfont},
%nodes near coords,
every node near coord/.append style={align=left, rotate=60, anchor=west,font=\fontsize{6}{6}\selectfont},
point meta=rawy,
%nodes near coords align={horizontal},
%nodes near coords style={rotate=90},
%x tick label style={rotate=45,anchor=east},
]

%\addlegendentry{Probabilistic truss}
\addplot[fill = red] coordinates {
(krogan,0.72) (dblp, 0.61) (flickr, 0.57) 
%(krogan,8.899) (flickr, 1476) (dblp,10076)
%
};
%\addlegendentry{Probabilistic truss}
\addplot[fill = blue ] coordinates {
(krogan,0.66) (dblp, 0.5) (flickr, 0.49) 
%(krogan,8.899) (flickr, 1476) (dblp,10076)
%
};
\addplot[fill = green] coordinates {
%(krogan,1.707) (flickr, 70.364) 
(krogan, 0.6) (flickr, 0.4) (dblp,0.299) 
% (krogan,2) (flickr, 76) (dblp, 602) (biomine,4492) (pokec,16895) (ljournal-2008,34978)
};

%\legend{Probabilistic truss decomposition by $h$-index\_based updating, Probabilistic truss decomposition by edge peeling } 
\end{axis}
\end{tikzpicture}
}
\subfloat{
\begin{tikzpicture}
%[width=6.3cm, xlabel near ticks, ylabel near ticks, name = plot4, title={$\eta = 0.3$}, every axis title/.style={below right,at={(0.3,0.98)}}, legend style={draw=yellow, at={(0.98,0.98)},anchor=north east,nodes={scale=0.5}},xmode=normal,ymode=log,legend entries={dblp, flickr, biomine, cnr-2000, arabic-2005, uk-2005, ljournal-2008, twitter-2010},
%reverse legend, xlabel=Iteration,%ytick align=outside
  %  ticklabel style = {font=\tiny},nodes near coords align=horizontal]
\begin{axis}[
ybar=0.1cm, 
width=4.3cm,
height=4.3cm,
bar width = 0.1cm,
legend style={font=\fontsize{6}{6}\selectfont,%font=\small, 
draw = none, %at={(0.85,0.999)},anchor=north east,nodes={scale=0.8}}, legend cell align={left},
at={(0.01,0.999)},anchor=north west,nodes={scale=0.8}}, legend cell align={left},
%\addlegendentry{Probabilistic truss decomposition by \\ $h$-index based updating,Probabilistic truss decomposition by edge peeling},
%legend entries={Probabilistic truss decomposition by $h$-index based updating,Probabilistic truss decomposition by edge peeling},
%legend entries={PN\_$\text{GB}_1$,PN\_$\text{GB}_2$,PN },
%legend entries={{GB$_1$-$(k,\theta)$-nucleus},{GB$_2$-$(k,\theta)$-nucleus},{$(k,\theta)$-nucleus}},
legend entries={{$\g$-$(k,\theta)$-nucleus}, {$\w$-$(k,\theta)$-nucleus},{$\l$-$(k,\theta)$-nucleus}},
enlargelimits=0.3,
enlarge y limits  = 0.3,
ymin = 0.4,
ymax=1,
ytick align=inside,
ytick distance =0.2,
xtick style={draw=none},
%legend style={at={(0.5,-0.2)},
%anchor=north,legend columns=-1},
%reverse legend,el pos = near coords,
%ymode=log,
%ymax = 100000,
%ticklabel style = {font=\tiny},
ylabel={\textbf{PCC}},
ylabel near ticks,
ylabel style={font=\fontsize{6.5}{6}\selectfont},
symbolic x coords={krogan, flickr,dblp},
xtick=data,
xlabel style = {font=\fontsize{6.5}{6}\selectfont},
x tick label style={/pgf/number format/1000 sep=},
%ticklabel style = {font=\tiny},
ticklabel style = {font=\fontsize{6}{6}\selectfont},
%nodes near coords,
every node near coord/.append style={align=left, rotate=60, anchor=west,font=\fontsize{6}{6}\selectfont},
point meta=rawy,
%nodes near coords align={horizontal},
%nodes near coords style={rotate=90},
%x tick label style={rotate=45,anchor=east},
]

%\addlegendentry{Probabilistic truss}
\addplot[fill = red] coordinates {
(krogan,0.75) (flickr, 0.57) (dblp, 0.612) 
%(krogan,8.899) (flickr, 1476) (dblp,10076)
%
};
%\addlegendentry{Probabilistic truss}
\addplot[fill = blue] coordinates {
(krogan,0.7) (flickr, 0.48) (dblp, 0.49) 
%(krogan,8.899) (flickr, 1476) (dblp,10076)
%
};
\addplot[fill = green] coordinates {
%(krogan,1.707) (flickr, 70.364) 
(krogan, 0.6) (flickr, 0.4) (dblp,0.32) 
% (krogan,2) (flickr, 76) (dblp, 602) (biomine,4492) (pokec,16895) (ljournal-2008,34978)
};

%\legend{Probabilistic truss decomposition by $h$-index\_based updating, Probabilistic truss decomposition by edge peeling } 
\end{axis}
\end{tikzpicture}
}
\vspace{-0.3cm}
\caption{PD and PCC for $\g$-$\pmb{(k,\theta)}$, $\w$-$\pmb{(k,\theta)}$, and 
%GB$_1$-$(k,\theta)$-nucleus, GB$_2$-$(k,\theta)$-nucleus, and 
$\l$-$\pmb{(k,\theta)}$ nuclei on \textit{krogan}, \textit{flickr}, and \textit{dblp}.
%with $\pmb{\theta = 0.001}$.
}
\label{densityglob}
\end{figure}
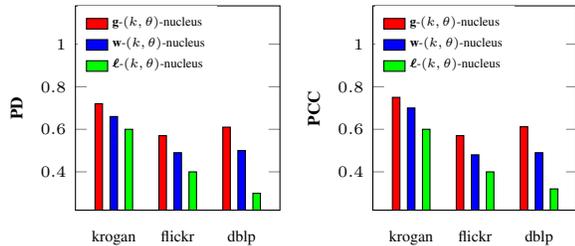

\nop{

\smallskip
\noindent
\color{black}{\textbf{Effect of $\epsilon$ and $\delta$ on the quality of global and weakly-global subgraphs.} The choice of $\epsilon$ and $\delta$ influence the number $n$ of possible worlds we sample. For $\epsilon=0.1$ and $\delta=0.1$ we obtain $n=150$. In order to see the fidelity of our results, we experiment by increasing $n$ to higher values, namely $300, 500, 1000, 2000$. As the results show (please see our extended version (\url{https://arxiv.org/abs/2006.01958})), the following metrics about global and weakly-global nuclei: Average\_PD, Average\_PCC, average number of vertices, and average number of edges change very little, thus justifying our choice of $\epsilon$ and $\delta$.} }
\nop{
%The rest is only for the extended version.
Specifically, the first two metrics are dispersed by only  on average $0.2\%$ around their mean over the different values of $n$, and the last two metrics are dispersed on average by $1\%$.
There are an infinity of $\epsilon$ and $\delta$ values corresponding to a given sample size; for illustration, for $n=150$, we can have $\epsilon=0.1$, $\delta=0.1$, whereas for $n=2000$, we can have $\epsilon=0.03$ and $\delta=0.05$,
i.e. we see that even though in the latter case the $\epsilon$ and $\delta$ decrease by a factor of 3 and 2, respectively, still the nuclei results in terms of the aforementioned metrics are almost the same.
This validates the choice of $\epsilon$ and $\delta$ to 0.1 since lower values do not offer significant improvement in the quality of results.

\begin{table*}%[htbp]
  \centering
  \color{black}
    \begin{tabular}{lrrrrrrrrrr}
    Sample Size & \multicolumn{2}{c}{Average\_PD} & \multicolumn{2}{c}{Average\_PCC} & \multicolumn{2}{c}{Average\_Edge} & \multicolumn{2}{c}{Average\_Vertex} & \multicolumn{1}{l}{$\epsilon$} & \multicolumn{1}{l}{$\delta$} \\
    \midrule
    \multicolumn{1}{r}{150} & 0.905584 & 0.725782 & 0.902748 & 0.769718 & 12.74418 & 55.63121 & 5.336294 & 11.97163 & 0.1   & 0.1 \\
    \multicolumn{1}{r}{300} & 0.906431 & 0.733021 & 0.903221 & 0.772635 & 12.72539 & 52.95973 & 5.334439 & 11.54362 & 0.07  & 0.05 \\
    \multicolumn{1}{r}{500} & 0.905669 & 0.728858 & 0.902708 & 0.767049 & 13.00551 & 53.88276 & 5.383062 & 11.70345 & 0.05  & 0.06 \\
    \multicolumn{1}{r}{1000} & 0.905166 & 0.725497 & 0.902250 & 0.766499 & 12.82319 & 53.77241 & 5.355856 & 11.74483 & 0.05  & 0.01 \\
    \multicolumn{1}{r}{2000} & 0.905618 & 0.727499 & 0.902628 & 0.768399 & 12.78215 & 54.26389 & 5.349749 & 11.79167 & 0.03  & 0.05 \\
    \midrule
    Average & 0.905694 & 0.728132 & 0.902711 & 0.768860 & 12.81608 & 54.10199 & 5.35188 & 11.75104 &       &  \\
    SD    & 0.000458 & 0.003053 & 0.000347 & 0.002452 &    0.112337   & 0.97803 & 0.01962 & 0.154625 &       &  \\
    \bottomrule
    \end{tabular}%
 \caption{\color{black}Effect of $\epsilon$ and $\delta$ on different average statistics, including average density, clustering coefficient, number of edges and number of vertices for global and weakly-global. The results shown here on krogan dataset with $\theta=0.1$. Observe that standard deviation is not more than 1\% of the average for all columns. For some of the columns SD is much smaller, e.g. for Average\_PD it is only 0.05\%.}
  \label{epsilondelta}%
  \vspace{-0.4cm}
\end{table*}%

}%end of nop

\vspace{-0.2cm}
\looseness=-1
\color{black}{\subsection{Case Study} \label{casestudy}}
\color{black}{\textbf{Analysis of DBLP Collaboration Network for task-driven team formation.} To show the usefulness of nucleus decomposition in probabilistic graphs, we apply our decomposition algorithms to solve the \textit{task-driven team formation} problem for a DBLP network. In task-driven team formation \cite{bonchi2014core}, we are given a probabilistic graph $ \G^T = (V,E,p^T)$, which is particularly obtained for task $T$. Vertices in $\G^T$ are individuals and edge probabilities are obtained with respect to task $T$ as described in~\cite{bonchi2014core}. 
Given a query $\langle Q,T \rangle$, where $Q \subset V$, and $T$ is a set of keywords describing a task, the goal is to find a set of vertices that contain $Q$ and make a good team to perform the task described by the keywords in $T$. 
By a good team we mean a good affinity among the team members in terms of collaboration for the given task. 
%%%Originally, the problem is specified with respect to probabilistic core~\cite{bonchi2014core}. 
To solve task-driven team formation using nucleus decomposition, we extend the definition of \cite{bonchi2014core} to employ probabilistic nucleus: 
Given a probabilistic graph $\G^T = (V,E,p^T)$ with respect to a task $T$, a query set $Q$ of vertices, and a threshold $\theta$, apply nucleus decomposition on $\G^T$ and find a $(k,\theta)$-nucleus (local/weakly-global/global) which \textbf{(1)} contains the vertices in $Q$, and \textbf{(2)} has the highest value of $k$ for the given $\theta$, and return the obtained subgraph as a solution.}

\looseness=-1
\color{black}{In our experiment, we use a DBLP collaboration network from~\cite{bonchi2014core}, where vertices are authors, and edges represent collaboration on at least one paper. 
The dataset has $1,089,442$ vertices and $4,144,697$ edges. For each edge, we take the bag of words of the title of all papers coauthored by the two authors connected by the edge and apply Latent Dirichlet Allocation (LDA)~\cite{blei2003latent,bonchi2014core} to infer its topics and calculate the edge probability. 
Given a task $T$ with keywords, and the input collaboration network, we obtain a probabilistic graph $\G^T$, in which $p(u,v)$ represents the collaboration level in the papers co-authored by $u$ and $v$ related to task $T$~(\cite{bonchi2014core,huang2016truss}). 
}

The first sample query we consider is 
$\langle
\{\mbox{``algorithm''}\},$ \\
$\{\mbox{``Erik\_D.\_Demaine''}$, $\mbox{``J.\_Ian\_Munro''}, \allowdisplaybreaks \mbox{``John\_Iacono''}\}\rangle$. Figure~\ref{taskdrivenex1a} shows the subgraph obtained by $\l$-$(k,\theta)$-nucleus and $\w$-$(k,\theta)$-nucleus decompositions, where $k=2$ and $\theta = 10^{-11}$. 
{\color{black}
The threshold is the same as the ones used in case studies of previous works (on truss and core).
}
As discussed in~\cite{bonchi2014core}, the edge probabilities in the data are very small, which requires setting threshold $\theta$ to a small value. 

{
\color{black}
\textbf{Remark}. It should be noted that picking an appropriate value for the threshold can be done using binary search over $(0,b]$, where $b\leq 1$.
}
The subgraph contains all the three authors in the query. It has $10$ vertices and $33$ edges. As can be seen, the obtained subgraph is quite good for task-driven team formation. All the authors in the subgraph are well-known and have strong collaboration affinity to work on a research paper related to algorithms. A $\g$-$(k,\theta)$-nucleus (same $k$ and $\theta$) that contains the query vertices is shown with thick blue edges in the same figure. As expected, this subgraph is more cohesive and it happens to be a clique of size $6$. Its density and clustering coefficient (PCC) is $0.138$ and $0.140$ as opposed to $0.099$ and $0.110$ for the local and weakly-global subgraphs.  From a research perspective the collaborations of the academicians in the blue subgraph are more focused on designing efficient data-structures.

{\color{black}
We run the global truss algorithm on the dataset. As expected the global truss subgraph which contains the query authors is bigger than global nucelus (9 vertices and 18 edges) and its density and PCC are lower (0.067 and 0.086).

We also run global core decomposition as in~\cite{peng2018efficient} for the same value $k$ and $\theta$. It should be noted that the global definition is different from global truss and global nucleus. Also, it does not assume connectivity between nodes. However, for fairness of comparison, we considered a connected component of this subgraph which contains query authors. The obtained subgraph contains $569$ vertices and $5294$ edges, with density $0.003$ and PCC $0.061$.  

Regarding local truss, we obtained a subgraph with 170 vertices and 1033 edges with density equal to $0.008$ and PCC equal to $0.0872$. On the other-hand, local core decomposition results in density and PCC being equal to $0.0084$ and $0.0659$ with 226 vertices and 2631 edges. As can be seen, our nucleus decomposition algorithm results in much better subgraphs in terms of vertex size and cohesiveness.

}

\begin{figure}[h]
    \centering
    \subfloat[]{\label{taskdrivenex1a}
    \includegraphics[scale=0.08]{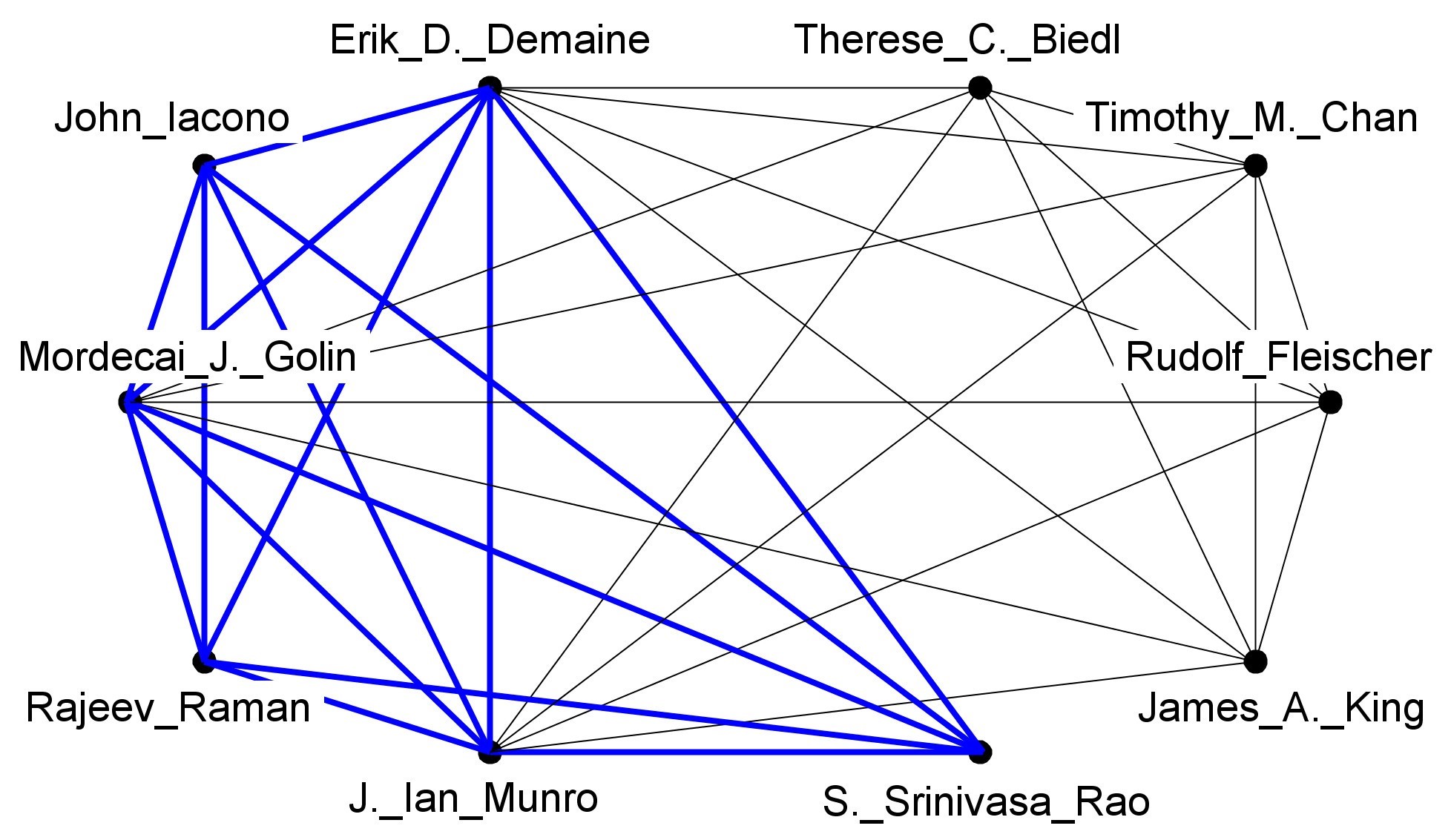}
    }
   % \hspace{0.01cm}
    \subfloat[]{\label{taskdrivenex2}
    \includegraphics[scale=0.05]{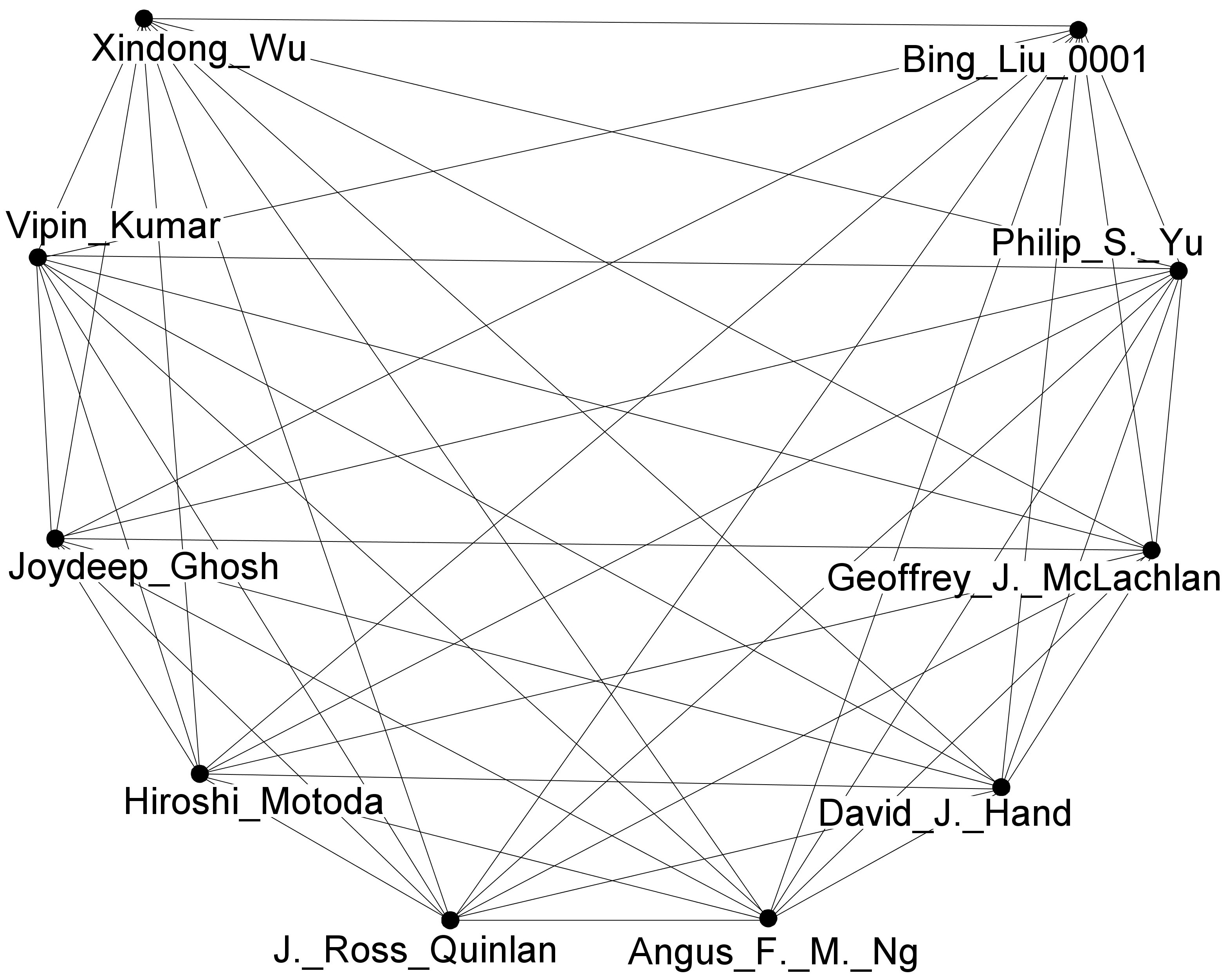}
    }
    \vspace{-0.2cm}
    \caption{\small \textbf{a)} A case study of task-driven team formation with keyword \{``algorithm''\}, and query vertices \{``Erik\_D.\_Demaine'', ``J.\_Ian\_Munro'', ``John\_Iacono''\}, $k=2$, and $\theta = 10^{-11}.$ The depicted graph with thick blue edges corresponds to a $\g$-$(k,\theta)$ nucleus. The whole graph (of 10 vertices) is a $\l$-$(k,\theta)$ nucleus which coincides with a $\w$-$(k,\theta)$ nucleus in this example. 
    \textbf{b)} A weakly-global $\w$-$(k,\theta)$ nucleus for task-driven team formation with query nodes \{``Xindong\_Wu'', ``Bing\_Liu\_0001'', ``Vipin\_Kumar''\}, and keyword \{ ``algorithm''\}. $k=1$, and $\theta = 10^{-11}.$}
    \vspace{-0.1cm}
\end{figure}

\looseness=-1

The second query we use shows the usefulness of the weakly-global notion.  
It has keyword \{``algorithm''\} and vertices \{``Xindong\_Wu'',  ``Bing\_Liu\_0001'', ``Vipin\_Kumar''\}. 
Figure~\ref{taskdrivenex2} shows the $\w$-$(k,\theta)$ nucleus for this query, where the threshold is the same as before, and $k=1$. 
The local nucleus containing the query authors had more than 100 nodes while the global nucleus containing these three query authors was empty. This example shows that the weakly global notion can discover interesting teams when the other two notions produce teams that are too big or too small (or empty). In particular, all the authors in the resulting subgraph are very well-know and have similar research area which can form a good team related to keyword algorithm (query keyword). 
{
\color{black}{
On the other-hand, using global truss decomposition we could not obtain any subgraph. In addition, both local truss and core decompositins, did not lead to a desired team as the number of vertices in such graphs is very large, $16663$ and $31300$, respectively. In fact, it is not realistic for this amount of authors to collaborate on paper related to algorithm. The density and PCC for weakly-global subgraph is $0.036$ and $0.0388$, as opposed to density $0.00005$ and PCC $0.0280$ in local truss and density $0.000001$ and PCC $0.0236$ in local core. The same argument hold for global core with $2997$ vertices, $35354$ edges, density $0.0004$, and PCC $0.0294$. For local nucleus decomposition cohesiveness results show density $0.03$ and and PCC $0.0331$ with vertices $100$ which is much smaller than local core and local truss.
}
}

\color{black}{Compared to other notions of dense subgraphs in probabilistic graphs, such as truss decomposition of \cite{huang2016truss}, we observed that our nuclei notions capture denser subgraphs better than the truss counterparts. 
For instance, in the example of \cite{huang2016truss} for task-driven query of 
$\langle  
\{\mbox{``data'', ``algorithm''}\},
\{\mbox{``Jeffrey\_D.\_Ullman''},  \mbox{``Piotr\_Indyk''}\}
\rangle$, local nucleus gives a smaller community than local truss, namely the community obtained by the global truss: 
(``Jeffrey\_D.\_Ullman'', ``Shinji\_Fujiwara'', ``Aristides\_Gionis", ``Rajeev\_Motwani'', ``Mayur\_Datar", ``Edith\_Cohen'', ``Cheng\_Yang'', ``Piotr\_Indyk'').
This is interesting as it shows that, in some cases, communities obtained by the exponential time algorithm of \cite{huang2016truss} for global truss can be obtained by our polynomial time algorithm for local nucleus.}

\color{black}{
In summary, our case study shows that we can discover good communities with reasonable cohesiveness using the efficient algorithm for local nucleus. However, some local nuclei can be too big. If so, we can apply the algorithms for weakly-global or global nucleus decomposition on the local nuclei to get smaller and denser communities. 
%When global nuclei are too small or empty, weakly-global nuclei can provide good results.
}

{

\begin{figure}%[H]
    \centering
    \subfloat[]{ \label{ppi_1}
    %\resizebox{5cm}{2cm}{
     \begin{tikzpicture}[auto, node distance=2.5cm, every loop/.style={},
                    thick,main node/.style={scale=0.5, 
                    %circle, shading = ball, ball color = grannysmithapple, draw,font=\fontsize{9}{6}\selectfont\bfseries}]
                    circle, ball color = grannysmithapple, draw,font=\fontsize{9}{6}\selectfont\bfseries}]
                    \node[main node,shading = ball, ball color = lightcoral] (0) {P12931};
                    \node[main node,shading = ball, ball color = lightcoral] (1) [right of = 0] {P04626};
                    \node[main node] (7) [below left of = 0] {P62993};
                    \node[main node] (2) [below right of = 1] {P46109};
                    \node[main node,shading = ball, ball color = lightcoral] (3) [below of = 2] {P42684};
                    \node[main node,shading = ball, ball color = lightcoral] (6) [below of = 7] {P21860};
                    \node[main node] (5) [below right of = 6] {P27986};
                    \node[main node] (4) [below left of = 3] {P00533};
                    
                      \path[every node/.style={%scale=0.6,
                      font=\sffamily\small}]
                      (0) edge[lightcoral] node [above, font=\fontsize{8}{6}\selectfont\bfseries] {} (1)
                      edge[] node [left, font=\fontsize{8}{6}\selectfont\bfseries] {} (2) %edge[blue]
                       edge[lightcoral] node [left, font=\fontsize{8}{6}\selectfont\bfseries] {} (3)
                        edge[] node [left, font=\fontsize{8}{6}\selectfont\bfseries] {} (4)
                         edge[] node [left, font=\fontsize{8}{6}\selectfont\bfseries] {} (5)
                          edge[lightcoral] node [left, font=\fontsize{8}{6}\selectfont\bfseries] {} (6)
                           edge[] node [left, font=\fontsize{8}{6}\selectfont\bfseries] {} (7)
                            
                        (1) edge[] node [above, font=\fontsize{8}{6}\selectfont\bfseries] {} (2)
                        (1) edge[lightcoral] node [above, font=\fontsize{8}{6}\selectfont\bfseries] {} (3)
                        (1) edge[] node [above, font=\fontsize{8}{6}\selectfont\bfseries] {} (4)
                        (1) edge[] node [above, font=\fontsize{8}{6}\selectfont\bfseries] {} (5)
                        (1) edge[lightcoral] node [above, font=\fontsize{8}{6}\selectfont\bfseries] {} (6)
                        (1) edge[] node [above, font=\fontsize{8}{6}\selectfont\bfseries] {} (7)
                        
                        (2) edge[] node [above, font=\fontsize{8}{6}\selectfont\bfseries] {} (3)
                        (2) edge[] node [above, font=\fontsize{8}{6}\selectfont\bfseries] {} (4)
                        (2) edge[] node [above, font=\fontsize{8}{6}\selectfont\bfseries] {} (5)
                        (2) edge[] node [above, font=\fontsize{8}{6}\selectfont\bfseries] {} (6)
                        (2) edge[] node [above, font=\fontsize{8}{6}\selectfont\bfseries] {} (7)
                        
                        (3) edge[] node [above, font=\fontsize{8}{6}\selectfont\bfseries] {} (4)
                        (3) edge[] node [above, font=\fontsize{8}{6}\selectfont\bfseries] {} (5)
                        (3) edge[lightcoral] node [above, font=\fontsize{8}{6}\selectfont\bfseries] {} (6)
                        (3) edge[] node [above, font=\fontsize{8}{6}\selectfont\bfseries] {} (7)
                        
                        (4) edge[] node [above, font=\fontsize{8}{6}\selectfont\bfseries] {} (5)
                        (4) edge[] node [above, font=\fontsize{8}{6}\selectfont\bfseries] {} (6)
                        (4) edge[] node [above, font=\fontsize{8}{6}\selectfont\bfseries] {} (7)
                        
                        (5) edge[] node [above, font=\fontsize{8}{6}\selectfont\bfseries] {} (6)
                        (5) edge[] node [above, font=\fontsize{8}{6}\selectfont\bfseries] {} (7)
                        
                        (6) edge[] node [above, font=\fontsize{8}{6}\selectfont\bfseries] {} (7);
                    
    \end{tikzpicture}
     }

%\hspace{0.1cm}  

    \caption{\color{black}$\w$-$(k,\theta)$ nucleus (green and pink nodes) and a $\g$-$(k,\theta)$-nucleus (pink nodes) which contain protein nodes of interest; \textsf{P04626}, \textsf{P12931}, \textsf{P42684}, where $k=1$ and $\theta=0.001$.}
     \label{ppiexperiment}
 \end{figure}
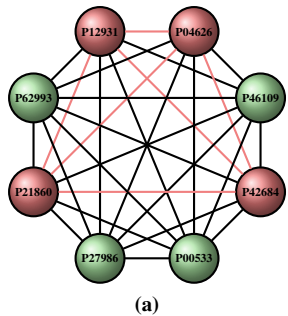   
}
{\color{black}
\textbf{Nucleus Decomposition on the Human Biomine Dataset.}

We use the human biomine dataset~\cite{biomine2019}, which has 861,812 nodes and 8,666,287 edges. This dataset is different  from the biomine dataset we used for our efficiency evaluation. 
We consider how our notions perform in detecting proteins/genes that interact with the \textbf{SARS-CoV-2} coronavirus. 
Bouhaddou et al.~\cite{bouhaddou2020global} found that during the \textbf{SARS-CoV-2} virus infection, changes in activities can happen for human kinases. 
We select three proteins, \textsf{P04626}, \textsf{P12931} and \textsf{P42684}; they are tyrosine kinase-related proteins and come from UniProt, which is a freely accessible database of protein sequences and functional information. 
The gene names associated with these proteins are \textsf{SRC}, \textsf{ERBB2}, and \textsf{ABL2}. 
These proteins have received literature support for interaction with \textbf{SARS-CoV-2} coronavirus~
\cite{marchetti2020covid,zheng2020examining,taniguchi2021increased,guo2021integrative,zhao2020imatinib,ebrahimi2021interferon,bouhaddou2020global}.
We refer to them as proteins of interest. 
We find the subgraphs obtained by local, weakly-global, and global nucleus decomposition which contain these three nodes. 
Moreover, at the same time we compare these graphs with their counterparts, truss and core in terms of density and size of the subgraph. For all the notions we set threshold $\theta = 0.001$.

\begin{table}%[H]
\color{black}
  \centering
    \begin{tabular}{lrrrr}
        Notion  &  
        \multicolumn{1}{l}{Max $k$} & \multicolumn{1}{l}{Nodes} & \multicolumn{1}{l}{Density} \\
        \hline
    l-core     & 88    & 2408  & 0.04 \\
    g-core     & 31    & 10026   & 0.01 \\
    l-truss     & 4     & 5787  & 0.01 \\
    l-nucleus   & 1     & 95    & 0.06 \\
    \textbf{g-truss}    & 2     & 10    & \textbf{0.44} \\
    \textbf{w-nucleus}    & 1     & 8     & \textbf{0.51} \\
    \textbf{g-nucleus} & 1     & 4     & \textbf{0.56} \\
    \hline
    \end{tabular}%
  \caption{\color{black}Comparison of different dense subgraph notions with respect to (1) max $k$ for which the subgraph contains the proteins of interest,  
  (2) number of nodes in the subgraph, and (3) density of the subgraph. Parameter $\theta$ is set to 0.001 for all the notions.
  We see that l-nucleus is denser than l-truss and both l-core and g-core. Also, w-nucleus and g-nucleus are denser than g-truss. In terms of nodes, l-nucleus gives a subgraph which is much smaller than the subgraphs of l-core, g-core, and l-truss. Such a graph of 95 nodes is more amenable for further processing by human analysts.}
   \label{cohesiveness}%
\end{table}%

Table~\ref{cohesiveness} shows the comparison of different dense subgraph notions with respect to (1) largest $k$ for which the subgraph contains the proteins of interest,  
(2) number of nodes in the subgraph, and (3) density of the subgraph.
We see that l-nucleus is denser than l-truss and both l-core and g-core. Also, w-nucleus and g-nucleus are denser than g-truss. In terms of nodes, l-nucleus gives a subgraph which is much smaller than the subgraphs of l-core, g-core, and l-truss. 
More precisely, with respect to l-nucleus, the three proteins of interest appear in a nucleus of 95 vertices and 509 edges. 
To see which kind of biology function/process our detected community represent, we use \textit{Metascape} (\url{ https://metascape.org/gp/index.html#/main/step1}). Metascape~\cite{zhou2019metascape} is a web-based portal that provides comprehensive gene list annotation and analysis resources. 
% Using Metascape, we find that for the local nucleus containing the proteins of interest, the main enrichment term is \textit{Tyrosine Protein Kinase} with a p-value of about $10^{-80}$ , which is statistically very significant.
Using Metascape, we find that the proteins in the local nucleus are associated with several diseases, most of them being forms of cancer (16 out 20). The p-values of the association are less than $10^{-18}$, which is statistically very significant.  

Figure~\ref{ppiexperiment} shows the weakly global and global nuclei which contain the proteins of interest. All the nodes (green and pink) comprise the weakly global subgraph. The pink nodes comprise the global nucleus subgraph.
Using Metascape, we find that the proteins in our weakly-global and global subgraphs are associated with some more specific forms of cancer such as \textit{Uterine Carcinosarcoma} and \textit{Hormone Refractory Prostate Cancer}, respectively, with p-values less than $10^{-6}$, which are statistically quite significant, especially given the fact that these subgraphs are much smaller than the local nucleus (in general, the more observations we have, the smaller the p-values become).   
These findings are useful to biologists in order to perform targeted tests for checking whether drugs for the treatment of these diseases can also be repurposed for treating COVID-19~\cite{guo2021integrative}. There are over 250 anticancer drugs approved by the FDA, but far fewer for specific kinds of cancer. Thus, showing connections to specific forms helps narrow the choice of drugs to repurpose.

In summary, it is running all the three versions of nucleus decomposition on the Biomine dataset that gives surprising subgraphs pointing to potentially useful further investigation by biologists. Running only local nucleus decomposition will miss such interesting groups, no matter how we set the values of $k$ and $\theta$.
}
\begin{figure*}
    \subfloat[]{
    \includegraphics[scale=1.2,width=2.4in]{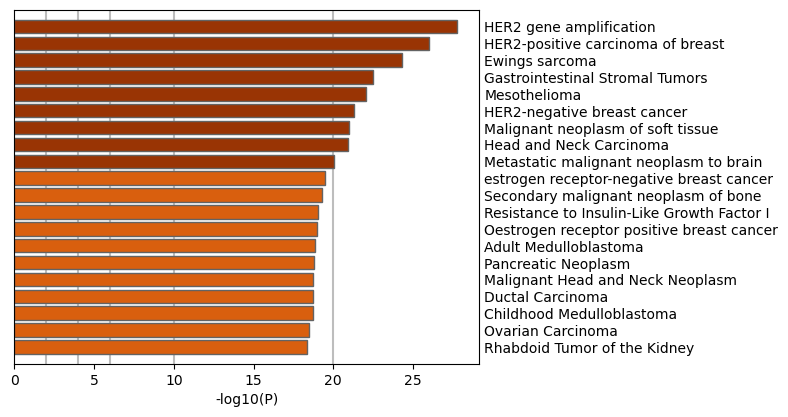}
    %\raisebox{-0.5\width}{\includegraphics[width=4in]{}}
    } 
    \subfloat[]{
    % \raisebox{-0.5\width}{\includegraphics[width=4in]{}}
    \includegraphics[scale=1.2,width=2.4in]{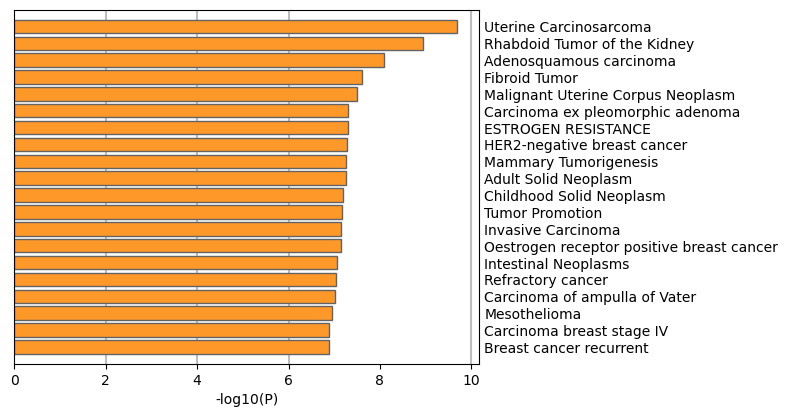}
    }
    \subfloat[]{
    % \raisebox{-0.5\width}{\includegraphics[width=4in]{PPI_pic_WG1.png}}
    \includegraphics[scale=1.2,width=2.4in]{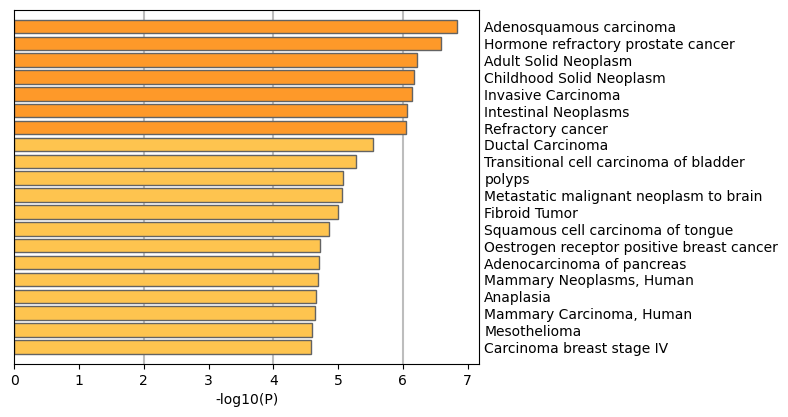}
    }
    \caption{\color{black} Top enriched terms related to diseases in the detected subgraphs by local, weakly-global, and global nucleus decompsoitions. Variable \textbf{P} on the $x$-axis refers to p-value. }
    \label{ppiexp3}
\end{figure*}

{
\color{black}
\textbf{Discussion of the difference between weakly-global and global definitions on \textit{BrightKite}. }
To show more applications on the difference between weakly-global and global definitions of nucleus decomposition in probabilistic graphs, we consider social network data from \textit{BrightKite} (\url{https://snap.stanford.edu/data/loc-brightkite.html}). \\ 
\textit{BrightKite} was once a location-based social networking service provider where users shared their locations by checking-in. The friendship network was collected using their public API, and consists of 58,228 nodes and 214,078 edges, and 4,491,143 checkins between April 2008 and October 2010. 
We generated probabilities for each edge based on the Jaccard similarity between the neighborhoods of two endpoints. 
Running weakly-global and global nucleus decompositions on this dataset with $\theta=0.1$, we retrieve $300$ and $20$ $\g$-$(k,\theta)$ and $\w$-$(k,\theta)$ nuclei, respectively. 
For weakly-global subgraphs, $k$ ranges in $[1,5]$, 
and for global subgraphs, $k$ can take on values of $1$ and $2$.

As expected, global nuclei obtain better cohesiveness in terms of density and clustering coefficient. In particular, the average density and clustering coefficient in global nuclei over all values of $k$, is $0.6951$ and $0.6947$ as opposed to $0.4844$ and $0.5052$ in weakly-global nuclei. %Please see the second and third columns in Table~\ref{userengagement}.
We also report another interesting observation on this dataset. 
We obtain the average number of checkins by users
in the detected subgraphs. 
The average number of user checkins in global nuclei is about $6 \%$ more than those in weakly-global nuclei. 
Moreover, there exist periods, for instance, the period between August 2008 and April 2009, in which the average number of checkins of the users in the global nuclei is $57\%$ more than the average number of checkins in the weakly-global subgraphs.
These results show that global nuclei can capture better user engagement (as measured by the number of checkins) than weakly-global nuclei.

\noindent
\textbf{Remark.}
we explain that all our three models are useful and they should be used in tandem. 
    Local nucleus helps to identify dense subgraphs of interest. 
    We can adjust $k$ and $\theta$ to obtain smaller and denser subgraphs. However, global and weakly global nuclei can identify pockets that are impossible to obtain with local nucleus no matter how we adjust $k$ and $\theta$. 
    For instance, in Example~\ref{fig:my_label} in the paper, it is only the global nucleus that can identify $H_1$ and $H_2$; no other notion can. 
    In our DBLP use case, the local and weakly-global notions helped us identify a dense subgraph of researchers working on Algorithms, however, the global nucleus gave a particular pocket of researchers, who, after close examination, turned out to be especially focused on designing efficient data-structures. 
    Then in the same case study, we were able to identify a useful weakly-global nucleus of five researchers, who are well known to work on algorithms for data mining. The local nucleus was too big (more than 100 nodes), whereas the global nucleus was empty. 
    All these examples show that an analyst should run all the three versions of nucleus decomposition in tandem on a dataset and then closely examine the results. 
    We stress out that this is not just to obtain denser subgraphs as we go from local to weakly-global and global. More than density, what is important is the detection of small pockets of nodes with nice properties that escape getting identified by other notions. 
    Finally, in the Biomine dataset, we observe that the group of proteins in a local nucleus containing three proteins of interest were related to many forms of cancer even though the proteins of interest have received literature support related to COVID-19. Based on consultations with Bioinformatics researchers, this finding is of great importance in finding relationships between seemingly distant diseases. 
    Regarding weakly-global and global notions, they were able to find subgraphs of the local nucleus that were comprised of proteins related to more specific cancer diseases. Investigating the connection of these diseases to COVID-19 is an interesting avenue to explore for a biologist researcher. 
    To reiterate, a researcher should use all three notions of nucleus decomposition as they provide different different view-points and can reveal subgraphs which can be missed by other notions.  

}

\section{Related Work}
%We now provide an overview of the previous work on dense subgraph mining in deterministic and probabilistic graphs. 

In deterministic graphs, core and truss decompositions have been studied extensviely~\cite{seidman1983network,huang2014querying,cohen2008trusses,zhang2012extracting,chen2014distributed,chen2016efficient,montresor2012distributed,zhang2016engagement,cheng2011efficient,sariyuce2013streaming,wang2020efficient,preti2021strud}.
Core decomposition in probabilistic graphs has been studied in~\cite{bonchi2014core,esfahani2019efficient,peng2018efficient,yang2019index}. Bonchi et al.~\cite{bonchi2014core} were the first to introduce core decomposition for such graphs. They focus on finding a subgraph in which each vertex is connected to $k$ neighbors within that subgraph with high probability. %and find the probabilistic $k$-cores for all values of $k$.
In~\cite{esfahani2019efficient} 
more efficient algorithms were proposed which can also handle graphs that do not fit in main memory. 
%Recently, a different probabilistic core decomposition model has been studied 
In~\cite{peng2018efficient}, the authors focus on finding a subgraph which contains nodes with high probability to be $k$-core member in the probabilistic graph.
In~\cite{yang2019index}, an index-based structure is defined for processing core decomposition in probabilistic graphs.

%Truss decomposition is another important notion of dense substructures. 
In the probabilistic context, the notion of local $(k,\eta)$-truss is introduced by Huang, Lu, and Lakshmanan in \cite{huang2016truss}. Their proposed algorithm for computing local $(k,\eta)$-truss is based on iterative peeling of edges with support less than $k$ and updating the support of affected edges.
Also, \cite{huang2016truss} proposed the notion of global $(k,\eta)$-truss based on the probability of each edge belonging to a $k$-truss in a possible world. 
In~\cite{esfahani2019fast} an approximate algorithm for the local truss decomposition is proposed to efficiently compute the tail probability of edge supports in the peeling process of~\cite{huang2016truss}. In~\cite{sun2021efficient} truss decomposition is computed using an index-based approach.

\looseness=-1
Building on the well-studied notions of core and truss decomposition, Sarıyüce~et~al.~\cite{sariyuce2015finding} introduce nucleus decomposition in \textit{deterministic} graphs. They propose an algorithm for computing $(3,4)$-nuclei.
%, based on the Matula-Beck approach for computing $k$-cores~\cite{matula1983smallest}. %
%They showed that in practice, $(3,4)$-nuclei decomposition provides the most interesting dense subgraph decomposition for real-world networks. 
In a more recent work, Sarıyüce~et~al.~\cite{sariyuce2018local} propose efficient distributed algorithms for nucleus decomposition. 
%Their approach uses local information of the graph and computes nucleus decomposition with convergence guarantees. 
Our work is the first to study nucleus decomposition in probabilistic graphs. 
%\textcolor{blue}{In terms of other research related to querying and mining probabilistic graphs, Maniu et al.~\cite{maniu2017indexing} study the problem of source to target queries over probabilistic graphs. Ceccarello et al.~\cite{ceccarello2017clustering} propose approximate algorithms for clustering of probabilistic graphs. Ma et al.~\cite{ma2019linc} extend the problem of motif counting to probabilistic graphs.
%}

%However, methods proposed in core and truss decomposition of probabilistic graphs cannot be directly applied for finding nucleus decomposition in probabilistic context. Nucleus decomposition is much more challenging due to the requirement of working on cliques. In addition, due to small occurrence probability of cliques, this problem should be defined in an appropriate way for probabilistic graphs to achieve the desired results. These imply proposing algorithms which can scale to large graphs and can produce accurate results when probabilities become small. 

\section{Conclusions}
In this work, we made several key contributions. We  introduced the notion of local, weakly-global and global nuclei for probabilistic graphs. We showed that computing weakly-global and global nuclei is intractable. We complemented these hardness results with effective algorithms to approximate them using techniques from Monte-Carlo sampling. 

We designed a polynomial time, peeling based algorithm for computing local nuclei based on dynamic programming and showed that its efficiency can be much improved using novel approximations based on Poisson, Binomial and Normal distributions. Finally, using an in-depth experimental study, we demonstrated the efficiency, scalability and accuracy of our algorithms for nucleus decomposition on real world datasets.

\balance

\bibliographystyle{IEEEtran}
\bibliography{ICDE_FullVersion}

%\begin{thebibliography}{00}
%\bibitem{b1} G. Eason, B. Noble, and I. N. Sneddon, ``On certain integrals of Lipschitz-Hankel type involving products of Bessel functions,'' Phil. Trans. Roy. Soc. London, vol. A247, pp. 529--551, April 1955.
%\bibitem{b2} J. Clerk Maxwell, A Treatise on Electricity and Magnetism, 3rd ed., vol. 2. Oxford: Clarendon, 1892, pp.68--73.
%\bibitem{b3} I. S. Jacobs and C. P. Bean, ``Fine particles, thin films and exchange anisotropy,'' in Magnetism, vol. III, G. T. Rado and H. Suhl, Eds. New York: Academic, 1963, pp. 271--350.
%\bibitem{b4} K. Elissa, ``Title of paper if known,'' unpublished.
%\bibitem{b5} R. Nicole, ``Title of paper with only first word capitalized,'' J. Name Stand. Abbrev., in press.
%\bibitem{b6} Y. Yorozu, M. Hirano, K. Oka, and Y. Tagawa, ``Electron spectroscopy studies on magneto-optical media and plastic substrate interface,'' IEEE Transl. J. Magn. Japan, vol. 2, pp. 740--741, August 1987 [Digests 9th Annual Conf. Magnetics Japan, p. 301, 1982].
%\bibitem{b7} M. Young, The Technical Writer's Handbook. Mill Valley, CA: University Science, 1989.
%\end{thebibliography}
%\vspace{12pt}
%\color{red}
%IEEE conference templates contain guidance text for composing and formatting conference papers. Please ensure that all template text is removed from your conference paper prior to submission to the conference. Failure to remove the template text from your paper may result in your paper not being published.

\end{document}